\newcommand*{\degree}[1]{\delta(#1)}
\newcommand*{\mindegree}[1]{\delta_{#1}}
\newcommand*{\maxdegree}[1]{\Delta_{#1}}
\newcommand*{\p}[2]{p(\tfrac{#1}{#2})}
\newcommand*{\abs}[1]{\left|#1\right|}
\newcommand*{\simpleset}[1]{\{#1\}}
\newcommand*{\DoI}[1]{\text{DoI}(#1)}
\newcommand*{\col}[1]{c(#1)}
\newcommand*{\union}{\cup}
\newcommand*{\intersect}{\cap}
\newcommand*{\noagent}{\ominus}
\newcommand*{\emptynodes}[1]{\emptyset(#1)}
\newcommand*{\sameColor}[2]{C_{#1}(#2)}
\newcommand*{\neighborhoodSize}[2]{\abs{\neighborhood{#1}{#2}}}
\newcommand*{\edge}[1]{\{#1\}}
\newcommand*{\neighborhood}[2]{N[#1, #2]}
\newcommand*{\set}[2]{\{#1 \mid #2\}}
\newcommand*{\selfInclusiveGame}{game}
\title{Single-Peaked Jump Schelling Games}
\author{Tobias Friedrich}
\affiliation{
  \institution{Hasso Plattner Institute, University of Potsdam}
  \city{Potsdam}
  \country{Germany}}
\email{tobias.friedrich@hpi.de}
\author{Pascal Lenzner}
\affiliation{
  \institution{Hasso Plattner Institute, University of Potsdam}
  \city{Potsdam}
  \country{Germany}}
\email{pascal.lenzner@hpi.de}
\author{Louise Molitor}
\affiliation{
  \institution{Hasso Plattner Institute, University of Potsdam}
  \city{Potsdam}
  \country{Germany}}
\email{louise.molitor@hpi.de}
\author{Lars Seifert}
\affiliation{
  \institution{Hasso Plattner Institute, University of Potsdam}
  \city{Potsdam}
  \country{Germany}}
\email{lars.seifert@student.hpi.de}
\begin{abstract}
Schelling games model the wide-spread phenomenon of residential segregation in metropolitan areas from a game-theoretic point of view. In these games agents of different types each strategically select a node on a given graph that models the residential area to maximize their individual utility. The latter solely depends on the types of the agents on neighboring nodes and it has been a standard assumption to consider utility functions that are monotone in the number of same-type neighbors, i.e., more same-type neighbors yield higher utility. This simplifying assumption has recently been challenged since sociological poll results suggest that real-world agents actually favor diverse neighborhoods.

We contribute to the recent endeavor of investigating residential segregation models with realistic agent behavior by studying Jump Schelling Games with agents having a single-peaked utility function. In such games, there are empty nodes in the graph and agents can strategically jump to such nodes to improve their utility. We investigate the existence of equilibria and show that they exist under specific conditions. Contrasting this, we prove that even on simple topologies like paths or rings such stable states are not guaranteed to exist. Regarding the game dynamics, we show that improving response cycles exist independently of the position of the peak in the utility function. Moreover, we show high almost tight bounds on the Price of Anarchy and the Price of Stability with respect to the recently proposed degree of integration, which counts the number of agents with a diverse neighborhood and which serves as a proxy for measuring the segregation strength. Last but not least, we show that computing a beneficial state with high integration is NP-complete and, as a novel conceptual contribution, we also show that it is NP-hard to decide if an equilibrium state can be found via improving response dynamics starting from a given initial state.
\end{abstract}
\begin{document}

\maketitle

\section{Introduction}
Residential segregation~\cite{White86}, i.e., the emergence of regions in metropolitan areas that are homogeneous in terms of ethnicity or socio-economic status of its inhabitants, has been widely studied by social scientists, mathematicians and, recently, also by computer scientists. Segregation has many negative consequences for the inhabitants of a city, for example, it negatively impacts their health~\cite{acevedo2003residential}.

The causes of segregation are complex and range from discriminatory laws to individual action.
Schelling’s classical agent-based model for residential segregation~\cite{Schelling69, Schelling71} specifies a spatial setting where individual agents with a bias towards favoring similar agents care only about the composition of their individual local neighborhoods. This model gives a coherent explanation for the widespread phenomenon of residential segregation, since it shows that local choices by the agents yield globally segregated states~\cite{Clark86, Schelling06}.
In Schelling's model two types of agents, placed on a path and a grid, respectively, act according to the following threshold behavior: agents are \emph{content} with their current position if at least a $\tau$-fraction of neighbors, with $\tau \in (0,1)$, is of their own type. Otherwise, they are discontent and want to move, either via swapping with another random discontent agent or via jumping to an empty position. Starting from a uniformly random distribution, Schelling showed via simulations that the described random process drifts towards strong segregation. This is to be expected if all agents are intolerant, i.e., for $\tau > \tfrac12$. But Schelling's astonishing insight is that this also happens if all agents are tolerant, i.e., for $\tau \leq \tfrac12$.

Many empirical studies in different areas have been conducted to investigate the influence of various parameters on the obtained segregation patterns~{\cite{IEE09, Pan07, Rogers_2011}. In particular, the model has been extensively studied by sociologists~\cite{Bru14, BW07, clark2008understanding} with the help of sophisticated agent-based simulation frameworks such as SimSeg~\cite{fossett1998simseg}. On the theoretical side, the underlying stochastic process leading to segregation was studied~\cite{BEL14, BIK12, BIK17}. Furthermore, Schelling's model recently gained traction within Algorithmic Game Theory, Artificial Intelligence, and Multi-Agent Systems~\cite{A+19, DBLP:journals/aamas/BiloBLM22, BSV21, CIT20, CLM18, BiloBLM22, E+19, KKV21, KKV22}.
	
	Most of these papers are in line with the assumptions made by Schelling and incorporate monotone utility functions, i.e., the agents' utility is monotone in the fraction of same-type neighbors, cf. Figure~\ref{fig:utility}~(left). Although for $\tau < 1$ it is true that no agent prefers segregation locally, agents are equally content in segregated neighborhoods as they are in neighborhoods that just barely meet their tolerance thresholds. However, recent sociological surveys~\cite{gss} show that people actually prefer to live in diverse rather than segregated neighborhoods\footnote{Respondents (on average 78\% white) were asked what they think of “Living in a neighborhood where
		half of your neighbors were blacks?”. A clear majority, e.g. $82\%$ in 2018, responded “strongly favor”, “favor” or “neither favor nor oppose”.}. Based on these observations, different models in which agents prefer integration have been proposed~\cite{Zha04, Zha04b, Pan07}. Very recently Bilò et al.~\cite{BiloBLM22} introduced and analyzed the Single-Peaked Swap Schelling Game, where agents have single-peaked utility functions, cf. Figure~\ref{fig:utility}, and pairs of agents can swap their locations if this is beneficial for both of them.
	
	Based on the model by~\cite{BiloBLM22}, we now take the natural next step and investigate the Jump Schelling Game, where agents can improve their utility by jumping to empty locations, assuming realistic agents having a single-peaked utility function.
	
		\begin{figure}[t]
		\centering
		\includegraphics[width=0.7\linewidth]{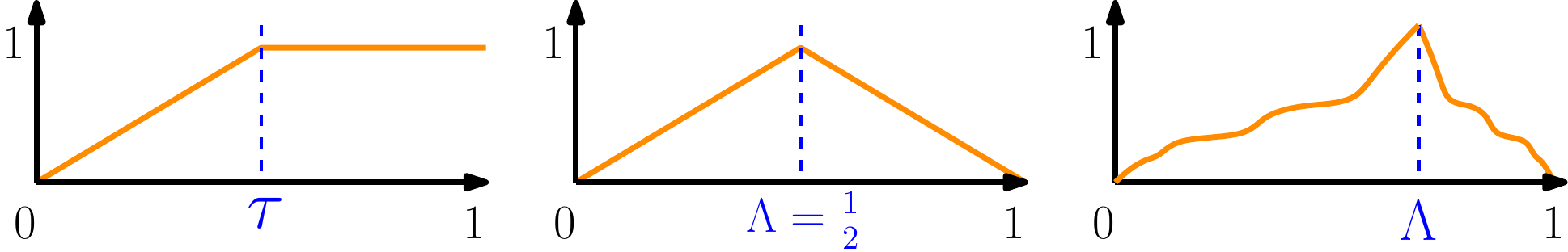}
		\caption{Left: Schelling's original monotone threshold utility function. Middle+Right: single-peaked utility functions. The dashed line marks the utility of an agent if the fraction of same type neighbors meets the threshold and the peak, respectively }
		\label{fig:utility}
	\end{figure}
	
	\paragraph{\textbf{Model}}

	We consider a strategic game played on an undirected, connected graph $G = (V, E)$. For a given node $v \in V$, let $\degree{v}$ be its \textit{degree} and let $\maxdegree{G}$ be the \textit{maximum degree} over all nodes $v \in V$. A graph is \textit{$\delta$-regular}, if $\forall v \in V: \degree{v} = \maxdegree{G}$. We denote with $\alpha(G)$ the \textit{independence number} of $G$, i.e., the cardinality of the maximum independent set in $G$.
	
	A \textit{Single-Peaked Jump Schelling Game} $(G,r,b,\Lambda)$, called the \emph{game}, is defined by a graph $G$, a pair of positive integers with $r \geq 1$ and $1 \leq b \leq r$ and a peak $\Lambda \in (0,1)$. There are two types of agents, which we associate with the colors red and blue. We have the majority type red with $r$ agents and~$b$ blue agents. If $r=b$, we say that the game is \textit{balanced}. For an agent~$i$, let $\col{i}$ be her color.
	
	An agent's \textit{strategy} is her position $v \in V$ on the graph. Each node can only be occupied by at most one agent. The $n = r+b$ strategic agents occupy a strict subset of the nodes in $V$, i.e., there are $e = |V| - n \geq 1$ \textit{empty nodes}.
	A \textit{strategy profile}~$\sigma \in V^n$ is a vector of $n$ distinct nodes in which the $i$-th entry $\sigma(i)$ corresponds to the strategy of the $i$-th agent. We say that an agent $i$ is adjacent to a node $v$ (or an agent $j$) if $G$ has an edge between $\sigma(i)$ and $v$ (resp. $\sigma(j)$).
	For convenience, we use~$\sigma^{-1}$ as a mapping from a node $v \in V$ to the agent occupying~$v$ or~$\noagent$ if~$v$ is empty. The set of empty nodes is $\emptynodes{\sigma} = \set{v \in V}{\sigma^{-1}(v) = \noagent}.$
	
	For an agent~$i$, we define $\sameColor{i}{\sigma} = \set{v \in V \setminus \emptynodes{V}}{\col{\sigma^{-1}(v)} = \col{i}}$ as the set of nodes occupied by agents of the same color in~$\sigma$.
	The \textit{closed neighborhood} of an agent~$i$ in a strategy profile $\sigma$ is $\neighborhood{i}{\sigma} = \{\sigma(i)\} \union \set{v \in V \setminus \emptynodes{\sigma}}{\edge{v,\sigma(i)} \in E}.$
	The agents care about the fraction $f_i(\sigma)$ of agents of their own color, including themselves, in their closed neighborhood where $f_i(\sigma) = \tfrac{
		\abs{\neighborhood{i}{\sigma} \intersect \sameColor{i}{\sigma}}
	}{
		\abs{\neighborhood{i}{\sigma}}
	}.$
	If $f_i(\sigma) = 1$, we say that agent $i$ is \textit{segregated}. Furthermore, observe that we have $f_i(\sigma) > 0$ for any agent~$i$, since $\sigma(i) \in \neighborhood{i}{\sigma}$. Also, we emphasize that our definition of $f_i(\sigma)$ deviates from similar definitions in related work. In particular, \citet{CLM18,E+19}, and \citet{A+19} exclude the respective agent $i$ from her neighborhood, while \citet{KKV21} count agent $i$ only in the denominator of $f_i(\sigma)$. The different existing definitions of the homogeneity of a neighborhood all have their individual strengths and weaknesses. We decided to follow the definition of \citet{BiloBLM22}, since they established the swap variant of our model. The key idea of their definition is that agents contribute to the diversity of their neighborhood. Thus, agents actively strive for integration. We think that this best captures the single-peaked setting.

	The \textit{utility} of an agent~$i$ is $U_i(\sigma) = p\left(f_i(\sigma)\right)$, with $p$ being an arbitrary single-peaked function with peak $\Lambda \in (0,1)$ and the following properties: (1) $p(0) = 0$ and $p(x)$ is strictly monotonically increasing on $[0, \Lambda]$, (2) for all $ x \in [\Lambda, 1]$ it holds that $p(x) = \p{\Lambda (1-x)}{1 - \Lambda}$.
	W.l.o.g., we further assume that $p(\Lambda) = 1$. See Figure~\ref{fig:utility}~(middle+right) for an illustration. Note that we explicitly exclude $\Lambda = 1$ and from our definition it follows that $p(1) = 0$. Allowing $\Lambda = 1$ would allow monotone utilities, similar to the models in \citep{CLM18,A+19}, where agents actively strive for segregation or passively accept it. However, we assume that agents actively strive for diversity. Thus, a completely homogeneous neighborhood should not be acceptable. This justifies $p(1)=0$. Hence, both $\Lambda<1$ and $p(1)=0$ model integration-oriented agents and go hand in hand.

	The strategic agents attempt to choose their strategy to maximize their utility. The only way in which an agent can change her strategy is to \textit{jump}, i.e., to choose an empty node $v \in \emptynodes{\sigma}$ as her new location. We denote the resulting strategy profile after a jump of agent $i$ to a node $v$ as $\sigma_{iv}$. A jump is \textit{improving}, if $U_i(\sigma) < U_i(\sigma_{iv})$.
	A strategy profile $\sigma$ is a (pure) \textit{Nash Equilibrium} (NE) if and only if there are no improving jumps, i.e., for all agents~$i$ and nodes~$v \in \emptynodes{\sigma}$, we have $U_i(\sigma) \geq U_i(\sigma_{iv})$.
	
	A measure to quantify the amount of segregation in a strategy profile $\sigma$ is the \textit{degree of integration} (DoI), which counts the number of non-segregated agents, hence $\DoI{\sigma} = \abs{\set{i}{f_i(\sigma) < 1}}$.
	For a game $(G,r,b,\Lambda)$, let $\sigma^*$ be a strategy profile that maximizes the DoI and let $\text{NE}(G,r,b,\Lambda)$ be its set of Nash Equilibria. We evaluate the impact of the agents' selfishness on the overall social welfare by studying the \textit{Price of Anarchy} (PoA), defined as
	$\text{PoA}(G,r,b,\Lambda) = \tfrac{\DoI{\sigma^*}}{\min_{\sigma \in NE(G,r,b,\Lambda)} \DoI{\sigma}}$
	and the \textit{Price of Stability} (PoS),  defined as
	$\text{PoS}(G,r,b,\Lambda) = \tfrac{\DoI{\sigma^*}}{\max_{\sigma \in NE(G,r,b,\Lambda)} \DoI{\sigma}}.$
	If the best (resp. worst) NE has a DoI of 0, the PoS (resp. PoA) is unbounded.
	
	A game has the \textit{finite improvement property} (FIP) if and only if, starting from any strategy profile~$\sigma$, the game will always reach a NE in a finite number of steps. As proven by \cite{MS96}, this is equivalent to the game being a \textit{generalized ordinal potential game}. In particular, the FIP does not hold if there is a cycle of strategy profiles $\sigma^0, \sigma^1, \dots, \sigma^k = \sigma^0$, such that for any $k' < k$, there is an agent $i$ and empty node $v \in \emptynodes{\sigma^{k'}}$ with $\sigma^{k'+1} = \sigma^{k'}_{iv}$ and $U_i(\sigma^{k'}) < U_i(\sigma^{k'+1})$. These cycles are known as \textit{improving response cycles} (IRCs).
	
	\paragraph{\textbf{Related Work.}}
	Game-theoretic models for residential segregation were first studied by~\cite{CLM18} and later extended by~\cite{E+19}. There, agents have a monotone utility function as shown in Figure~\ref{fig:utility} (left). Additionally, agents may also have location preferences. The authors study the FIP and the PoA in terms of the number of content agents. 
	\cite{A+19} consider a simplified model using the most extreme monotone threshold-based utility function with $\tau=1$. They prove results on the existence of equilibria, in particular, that equilibria are not guaranteed to exist on trees, and on the complexity of deciding equilibrium existence. Also, they introduce the DoI as social welfare measure and they study the PoA in terms of utilitarian social welfare and in terms of the DoI. For the latter, they obtain a tight bound of $\tfrac{n}{2}$ on the PoA and the PoS that is achieved on a tree. In contrast, on paths, they derive a constant PoS. The complexity results were extended by~\cite{KBFN22}, in particular, they show that deciding the existence of NE in the swap version as well as in the jump version of the simplified model is NP-hard. \cite{DBLP:journals/aamas/BiloBLM22} strengthened the PoA results for the swap version w.r.t. the utilitarian social welfare function and investigated the model on almost regular graphs, grids and paths. Additionally, they introduce a variant with locality. \cite{CIT20} studied a variant of the Jump Schelling Game with $\tau = 1$ where the agents’ utility is a function of the composition of their neighborhood and of the social influence by agents that select the same location. \cite{KKV22}  considered a generalized variant, where an ordering of the agent types exists and agents are more tolerant towards agents of types that are closer according to the ordering. Another novel variant of the Jump Schelling Game was investigated by~\cite{KKV21} . There the main new aspect is that an agent is included when counting her neighborhood size. This subtle change leads to agents preferring locations with more own-type neighbors. \cite{BSV21} measure social welfare via the number of agents with non-zero utility, they prove hardness results for computing the social optimum and discuss other solution concepts, like Pareto optimality.    
	
	Most related is the recent work by \cite{BiloBLM22}, which studies the same model as we do, but there only pairs of agents can improve their utility by swapping their locations. They find that equilibria are not guaranteed to exist in general, but they do exist for $\Lambda = \tfrac12$ on bipartite graphs and for $\Lambda \leq \tfrac12$ on almost regular graphs. The latter is shown via an ordinal potential function, i.e., convergence of IRDs is guaranteed. For the PoA they prove an upper bound of $\min\{\Delta(G),\tfrac{n}{b+1}\}$ and give almost tight lower bounds for bipartite graphs and regular graphs. Also, they lower bound the PoS by $\Omega(\sqrt{n\Lambda})$ and give constant bounds on bipartite and almost regular graphs. Note that due to the existence of empty nodes in our model, our results cannot be directly compared. 
	
	Also related are hedonic diversity games~\cite{BEI19,BE20,GHKSS22} where selfish agents form coalitions and the utility of an agent only depends on the type distribution of her coalition. For such games, single-peaked utility functions yield favorable game-theoretic properties.
	
	\paragraph{\textbf{Our Contribution}}
	We investigate Jump Schelling Games with agents having a single-peaked utility function. In contrast to monotone utility functions that have been studied in earlier work, this assumption better reflects recent sociological poll results on real-world agent behavior~\cite{gss}. Moreover, this transition to a different type of utility function is also interesting from a technical point of view since it yields insights into the properties of Schelling-type systems under different preconditions.
	
	Regarding the existence of pure NE, we provide a collection of positive and negative results. On the negative side, we show that NE are not guaranteed to exist on the simplest possible topologies, i.e., on paths and rings with single-peaked utilities with peak at least~$\tfrac12$.
	Note that this is in contrast to the version with monotone utilities where for the case of rings NE always exist.
	On the positive side, we give various conditions that enable NE existence, e.g., such states are guaranteed to exist if the underlying graph has a sufficiently large independent set, 
	or if it has sufficiently many degree~$1$ nodes. 
	The situation is worse for the convergence of game dynamics. We show that even on regular graphs IRCs exist independently of the position of the peak in the utility function. Moreover, this even holds for the special case with a peak at $\tfrac12$ and only a single empty node. These negative results for $\Lambda \leq \tfrac12$ also represent a marked contrast to the swap version, where convergence is guaranteed for this case on almost regular graphs.

	With regard to the quality of the equilibria, we focus on the DoI as social cost function. This measure has gained popularity since it can be understood as a simple proxy for the obtained segregation strength as it counts the number of agents having close contact with some agent of a different type. For the PoA with respect to the DoI, we establish that the technique for deriving an upper bound for single-peaked Swap Schelling Games can be adapted to also work in our setting. This yields the same PoA upper bound of $\min\{\Delta(G),n/(b+1)\}$. Subsequently, we give almost matching PoA lower bounds and we prove that also the lower bounds for the PoS almost match this high upper bound. On the positive side, we show that on graphs with a sufficiently large independent set, the PoS depends on the ratio of the largest and the smallest node degree in the graph, which implies for this case a PoS of $1$ on regular graphs that also holds for rings with a single empty node.
	
	Last but not least, we consider complexity aspects of our model. Analogously to previous work on the Jump Schelling Game with monotone utilities and to work on Swap Schelling Games with single-peaked utilities, we focus on the hardness of computing a strategy profile with a high degree of integration. Using a novel technique relying on the \textsc{Max SAT} problem, we show that this problem is NP-complete, improving on an earlier result by \cite{A+19}. Moreover, as a novel conceptual contribution, we investigate the hardness of finding an equilibrium state via improving response dynamics. As one of our main results, we show that this problem is NP-hard. So far, researchers have studied the complexity of deciding the existence of an equilibrium for a given instance of a Schelling Game. We depart from this, since even if it can be decided efficiently that for some instance an equilibrium exists, guiding the agents towards this equilibrium from a given initial state is complicated, since this would involve a potentially very complex centrally coordinated relocation of many agents in a single step. In contrast, reaching an equilibrium via a sequence of improving moves is much easier to coordinate, since in every step the respective move can be recommended and, since this is an improving move, the agents will follow this advice. 
	
	Overall we find that making the model more realistic by employing single-peaked utilities entails a significantly different behavior of the model compared to the variant with monotone utilities but also compared to Single-Peaked Swap Schelling Games.

	\section{Game Dynamics}\label{sec:GameDynamics}
	In this section we show that even on very simple graph classes improving response dynamics are not guaranteed to converge to stable states. Moreover, we provide IRCs for the entire range of~$\Lambda$.
	Note, that given an IRC for a game on a graph~$G$ IRCs exist for all games on any graph $H$ that contains $G$ as a node-induced subgraph since we can add empty nodes to $G$ to obtain $H$ without interfering with the IRC.
	We start with an IRC for $\Lambda \geq \tfrac{1}{2}$. 
	\begin{restatable}{theorem}{theoremone}\label{theorem:self_inclusive_irc_rings} 
		For $\Lambda \geq \tfrac{1}{2}$, the game violates the FIP even on rings and paths with $e \geq 2$.
	\end{restatable}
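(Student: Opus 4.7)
The plan is to exhibit an explicit improving response cycle (IRC) on a ring with exactly two empty nodes and then adapt it to a path. Once such an IRC is found, the observation already made at the beginning of Section~\ref{sec:GameDynamics} (adding empty nodes or embedding the graph as a node-induced subgraph preserves the IRC) lifts it to all $e \geq 2$ and arbitrarily long rings and paths.

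First I would fix a small ring $C_n$ (likely $n$ between $6$ and $10$) together with a starting strategy profile $\sigma^0$ that consists of a few short monochromatic blocks separated by the two empty nodes. The design goal is twofold: each intermediate profile along the intended cycle should contain exactly one agent with a strictly improving jump to an empty node, and taking that improving jump should produce the next profile of the cycle. The intuitive engine driving the cycle is that for $\Lambda \geq \tfrac{1}{2}$ a fully segregated closed neighborhood always yields utility $0$ (since $p(1)=0$), whereas a closed neighborhood with fraction $\tfrac{1}{2}$ or $\tfrac{2}{3}$ strictly beats $0$: indeed $p$ is strictly increasing on $[0,\Lambda]$ with $p(0)=0$, and the reflection property $p(x)=p(\Lambda(1-x)/(1-\Lambda))$ forces $p(f)>0$ whenever $f\in(0,1)$. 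Hence whenever an agent finds itself with $f_i=1$ adjacent to an empty node whose jump creates a mixed closed neighborhood, that jump is improving regardless of the exact shape of $p$.

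I would then arrange the two empty nodes so that each improving jump shifts them around the ring in a rotationally symmetric fashion: an agent at the end of a monochromatic block (currently segregated, $f_i=1$) jumps into an adjacent empty node next to the opposite-color block, leaving a new segregated agent at the opposite end; this agent then performs the analogous jump, and so on. After one full ``rotation,'' the empty nodes return to their initial positions and the profile coincides with $\sigma^0$. The verification of the IRC reduces to a short case check: computing $f_i$ before and after each jump and invoking only the qualitative properties of $p$ listed above, so no quantitative information about $p$ beyond being single-peaked with peak $\Lambda \geq \tfrac{1}{2}$ is used.

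For the path I would simply cut the ring at an edge whose removal does not alter any of the closed neighborhoods that appear in any $\sigma^k$ along the cycle, which is possible as long as the cycle's active region leaves at least one ``quiet'' edge elsewhere on the ring; if the small ring chosen is too tight for this, I would lengthen it by inserting extra same-color padding far from the active region, which the preservation observation guarantees does not disturb the IRC. The main obstacle is the combinatorial search for the correct starting configuration: the jumps must both be improving \emph{and} close back into a cycle rather than terminating in a NE, and this typically takes some experimentation, but once identified the analysis is a routine neighborhood-fraction check relying only on $p(1)=0$ and the strict monotonicity of $p$ on $[0,\Lambda]$.
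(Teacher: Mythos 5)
There is a genuine gap here, and it is not merely that you never exhibit the concrete configuration. The cycle you describe is driven exclusively by jumps of segregated agents ($f_i=1$, utility $0$) into empty nodes where they acquire a mixed closed neighborhood. Such a sequence provably cannot close into a cycle: every jump of this kind strictly increases the degree of integration. The jumper goes from segregated to non-segregated; every occupied neighbor of the vacated node has the jumper's color (because the jumper was segregated) and merely loses a same-color neighbor, so it cannot become segregated; and every occupied neighbor of the target node either gains an opposite-color neighbor or gains a same-color neighbor while keeping all the opposite-color ones it had. Hence $\DoI{\sigma_{iv}} \geq \DoI{\sigma}+1$ along every step of your intended rotation (this is essentially the potential argument behind \Cref{theorem:rings_self_inclusive_e1_convergence}), and the dynamics can never return to $\sigma^0$. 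A tell-tale sign is that your engine nowhere uses $\Lambda \geq \tfrac{1}{2}$: segregated-to-mixed jumps are improving for every $\Lambda \in (0,1)$, so if they sufficed, the hypothesis would be superfluous.

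Any IRC must therefore contain at least one improving jump by a \emph{non-segregated} agent, and that is exactly where $\Lambda \geq \tfrac{1}{2}$ enters. The paper's witness is as small as possible: a ring (or path) on five nodes with two red agents and one blue agent, so $e=2$, starting with the blue agent adjacent to both reds. Since $p$ is strictly increasing on $[0,\Lambda]$ and $\tfrac{1}{3} < \tfrac{1}{2} \leq \Lambda$, the blue agent improves from $p\left(\tfrac{1}{3}\right)$ to $p\left(\tfrac{1}{2}\right)$ by jumping next to only one red agent; this segregates the other red agent, who then jumps back adjacent to the blue agent, reproducing the initial profile up to rotation. Your observations that the path case follows by deleting an edge untouched by the cycle and that larger $e$ follows from the node-induced-subgraph remark are fine, but without a correct engine and an explicitly verified configuration the proof does not go through.
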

	\begin{proof}
		Consider a game with five nodes, two red agents and one blue agent on a ring or path. We start with a strategy profile in which the blue agent is adjacent to both red agents. An illustration is given in \Cref{fig:ring_irc}. As $\Lambda \geq \tfrac{1}{2}$, the blue agent prefers to be in a neighborhood with only one of the red agents. Hence, an improving jump from the blue agent results in one segregated red agent. As a consequence the red agent jumps to the node adjacent to the blue agent.
		Further, observe that at no point in this cycle does any other agent have an improving jump and none of the jumping agents have an alternative improving jump (except for symmetry). 
	\end{proof}
	\begin{figure}[t]
	\centering
	\includegraphics[width = 0.45\textwidth]{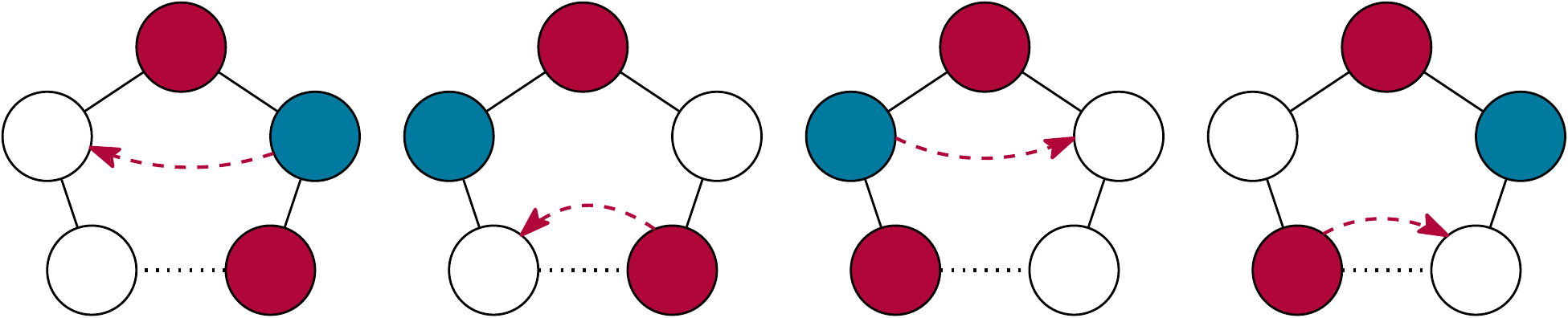}
	\caption{IRC on a ring (path without the dotted edge) for $\Lambda \geq \tfrac{1}{2}$. }
	\label{fig:ring_irc}
\end{figure}
	
	We now show that IRCs also exist for games with agents who prefer to be in the minority. 
	\begin{restatable}{theorem}{theoremtwo}
		For $\Lambda \leq \tfrac{1}{2}$, the game violates the FIP even on regular graphs.
		\label{theorem:self_inclusive_irc_less_than_1_2}
	\end{restatable}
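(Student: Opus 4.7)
The plan is to mimic the structure of Theorem~\ref{theorem:self_inclusive_irc_rings} and exhibit an explicit small regular graph carrying an improving response cycle among only a few agents, ideally with a single empty node (which, as noted in the introduction, already suffices for the negative result). The general recipe is: pick the graph, fix an initial placement, follow the sequence of improving jumps around the cycle, and check each transition individually.

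First, I would choose a small $d$-regular graph $G$ with enough asymmetry between closed neighborhoods that two candidate positions for an agent lead to strictly different own-type fractions. The simplest promising candidates are 3-regular graphs on six vertices such as the prism $K_3 \square K_2$ or $K_{3,3}$; if those turn out to be too symmetric to sustain an IRC, the 4-regular octahedron $K_{2,2,2}$ offers more room. Second, I would fix an initial placement of red and blue agents leaving exactly one empty node, such that some agent $i$ has fraction $f_i(\sigma) \in (\tfrac12, 1]$ and the empty node offers her a strictly smaller fraction still lying in $[\tfrac12, 1]$. Because $p$ is strictly decreasing on $[\Lambda, 1]$ uniformly for every $\Lambda \leq \tfrac12$, this jump is improving for the entire range of $\Lambda$. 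After the jump, I would check that the newly vacated slot puts a different agent $j$ into an analogous situation and triggers her jump in the same style, iterating until the configuration returns to the start up to a color-preserving automorphism of $G$.

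The main technical obstacle is making every jump in the cycle improving for the full range $\Lambda \in (0, \tfrac12]$ simultaneously, rather than just at $\Lambda = \tfrac12$. Since the shape of $p$ on $[0, \Lambda]$ depends on $\Lambda$, a mixed comparison between fractions on opposite sides of the peak would force a case distinction on $\Lambda$. I would side-step this by engineering the cycle so that every fraction arising stays in $[\tfrac12, 1]$, where strict monotonicity of $p$ in the fraction is guaranteed uniformly in $\Lambda$. A subtler constraint is that $d$-regularity forbids the standard trick of exploiting a pendant or low-degree vertex to create local asymmetries, so the graph and the placement must be chosen so that different vertices of $G$ genuinely induce different own-type fractions given the placement. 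Finally, because an IRC only requires the existence of one improving sequence, I do not have to rule out competing improving moves at each step, which keeps the verification to a short finite case check.
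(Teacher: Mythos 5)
Your proposal correctly isolates the one analytic idea that makes the theorem work for the whole range $\Lambda \in (0,\tfrac12]$ at once: if every own-type fraction occurring along the cycle lies in $[\tfrac12,1]$, then by property (2) the function $p$ is strictly decreasing there for every such $\Lambda$, so all comparisons are uniform in $\Lambda$. This is exactly the mechanism the paper's construction relies on (its jumps are of the form $p(\tfrac35) \to p(\tfrac12)$ and $p(1)=0 \to$ positive). However, what you have written is a recipe, not a proof: the entire content of the theorem is the existence of a concrete graph, placement, and verified sequence of jumps, and you never exhibit one. Your candidate graphs ($K_{3,3}$, the prism, the octahedron) with a single empty node are precisely the kind of highly symmetric, small instances where it is unclear that any improving jump exists at all, let alone a cycle of them; you acknowledge this ("if those turn out to be too symmetric") but do not resolve it. Until a specific placement is written down and each transition is checked, the claim is unproven.

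You also make the problem harder than it needs to be. You note that regularity "forbids the standard trick of exploiting a pendant or low-degree vertex," but the paper sidesteps this entirely: as observed at the start of \Cref{sec:GameDynamics}, an IRC on any graph $G$ lifts to every graph containing $G$ as a node-induced subgraph, because the added nodes can simply be left empty and empty nodes do not enter any agent's utility. The paper therefore builds its IRC on a convenient \emph{non-regular} gadget with $\maxdegree{G}=7$ (degree asymmetry is used freely to engineer the fractions $\tfrac35$ and $\tfrac12$) and then pads it with empty nodes into a regular host graph. Adopting that route would remove your main technical obstacle; insisting on a genuinely regular gadget with $e=1$ is a stronger statement than the theorem requires (the paper only obtains $e=1$ for the special case $\Lambda=\tfrac12$ in \Cref{theorem:self_inclusive_irc_one_empty}), and you have given no evidence that it is achievable.
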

	\begin{proof}
		Consider \Cref{fig:irc_lambda_less_than_1_2}. The graph~$G$ has $\maxdegree{G} = 7$. The red agent with utility $p\left(\frac35\right)$ can improve to $p\left(\frac12\right)$. Afterwards, the two previously adjacent blue agents are segregated and jump as well. This causes the utility of the three blue agents in the third row to drop to $p\left(\frac35\right)$. By jumping to the lower part, they can improve to $p\left(\frac12\right)$. The resulting strategy profile is identical to the first one.
	\end{proof}
	
	\begin{figure}[h]
		\centering
		\includegraphics[width = 0.6\textwidth]{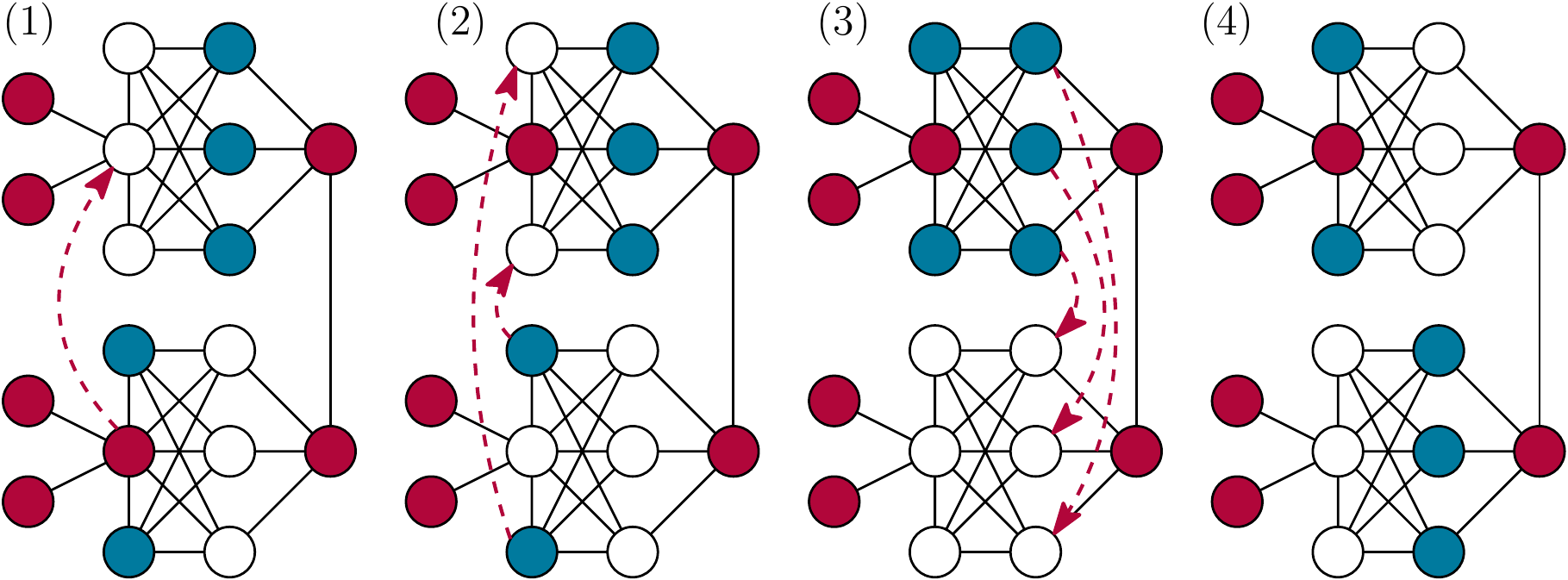}
		\caption{IRC for $\Lambda \leq \tfrac{1}{2}$. An IRC for a regular graph can be obtained by adding empty nodes.}
		\label{fig:irc_lambda_less_than_1_2}
	\end{figure}
	
	Next, we have that even for the special case with $\Lambda = \frac{1}{2}$ and only one single empty node no convergence is guaranteed.
	\begin{restatable}{theorem}{theoremthree}
		For $\Lambda = \tfrac{1}{2}$, the game violates the FIP even on regular graphs with $e=1$.
		\label{theorem:self_inclusive_irc_one_empty}
	\end{restatable}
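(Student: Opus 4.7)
The plan is to exhibit a specific $d$-regular graph $G$, an initial strategy profile $\sigma^0$ on $G$ with exactly one empty node, and a finite sequence of improving jumps $\sigma^0, \sigma^1, \ldots, \sigma^k = \sigma^0$ returning to $\sigma^0$. The main algebraic lever is the symmetry of the utility function at $\Lambda = \tfrac{1}{2}$: the defining relation $p(x) = p(\Lambda(1-x)/(1-\Lambda))$ collapses to $p(x) = p(1-x)$, so each agent's utility depends only on $|f_i - \tfrac{1}{2}|$ and she can improve by moving to a location whose fraction is strictly closer to $\tfrac{1}{2}$, whether from below or from above. This two-sided improvement is what makes a cyclic sequence even possible.

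The first step is to build the graph. Starting from the IRC of \Cref{theorem:self_inclusive_irc_less_than_1_2}, which uses multiple empty nodes on a regular graph, the idea is to simulate the multiple empty slots by a \emph{single} empty node that visits the former empty positions in sequence, triggering the same cascading improvements. Regularity is preserved either by starting from a small degree-uniform gadget (e.g., a cycle, a prism, or a specially colored $3$-regular graph) or by attaching matched structures so that every node has the same degree.

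For the cycle itself, I would use the symmetry $p(x) = p(1-x)$ to orchestrate a closed walk of the empty node whose net effect on the agents' positions is the identity permutation. Concretely, in each step an agent adjacent to the empty node whose current $f_i$ lies strictly on one side of $\tfrac{1}{2}$ jumps into the empty slot and attains a fraction on the opposite side but closer to $\tfrac{1}{2}$; her jump relocates the empty node, altering a neighboring agent's fraction and enabling the next improving jump. The verification consists in listing, for every intermediate $\sigma^{k'}$, the mover's closed neighborhood before and after the jump, computing $f_i(\sigma^{k'})$ and $f_i(\sigma^{k'+1})$, and confirming $|f_i(\sigma^{k'+1}) - \tfrac{1}{2}| < |f_i(\sigma^{k'}) - \tfrac{1}{2}|$, which together with strict monotonicity of $p$ on $[0,\tfrac{1}{2}]$ yields a strict utility gain.

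The principal obstacle is the simultaneous tightness of three rigid constraints: $d$-regularity fixes all neighborhood sizes as soon as the empty node's location is set, forcing the admissible fractions into a small discrete set; each of the $k$ transitions must be strictly improving; and with $e = 1$ the walk of the empty node must induce the identity permutation on agent positions, which is a severe restriction compared to the multi-empty-node setting. Satisfying all three jointly is exactly what makes the single-empty-node case strictly harder than the $e \ge 2$ situation of \Cref{theorem:self_inclusive_irc_less_than_1_2}, and I expect the construction to arise either from direct case analysis over a small regular graph or by gluing symmetric copies of a minimal gadget along matched edges to reconcile the cascade with regularity.
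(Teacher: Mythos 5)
Your proposal correctly identifies the shape of the argument the paper uses---exhibit one explicit regular graph with a single empty node, an initial profile, and a closed sequence of strictly improving jumps, exploiting that at $\Lambda=\tfrac12$ the relation $p(x)=p\bigl(\Lambda(1-x)/(1-\Lambda)\bigr)$ becomes $p(x)=p(1-x)$ so agents improve by moving their fraction closer to $\tfrac12$ from either side. But the proposal stops exactly where the proof obligation begins. The statement is purely existential, so its entire mathematical content is the witness: a concrete graph, a concrete coloring, and a verified list of transitions $\sigma^{0},\dots,\sigma^{k}=\sigma^{0}$ with $U_i(\sigma^{k'})<U_i(\sigma^{k'+1})$ at each step. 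You explicitly defer this (``I expect the construction to arise either from direct case analysis \dots or by gluing symmetric copies of a minimal gadget''), and you yourself name the three constraints (regularity, strict improvement at every step, and the empty node's walk returning all agents to their original positions) whose simultaneous satisfiability is the whole difficulty. Without resolving them, nothing has been proved. The paper resolves them by a hand-crafted instance (its Figure for this theorem): a pair of adjacent nodes $u,v$ with carefully chosen red/blue neighborhoods, and a length-four cycle in which a red agent moves $v\to u\to w\to v$ while a displaced blue agent shuttles correspondingly, with each step's utility change computed explicitly (e.g.\ $p(\tfrac57)\to p(\tfrac46)$, then a segregated agent escaping utility $0$, etc.).

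Two further points where your sketch, as stated, would mislead you if executed literally. First, your proposed mechanism---``in each step an agent adjacent to the empty node \dots attains a fraction on the opposite side but closer to $\tfrac12$''---is too narrow: in the paper's cycle several of the improving jumps are made by agents who have just become \emph{segregated} (utility $0$) as a side effect of the previous jump, which is the engine that keeps the cascade going; a cycle in which every mover merely tightens $|f_i-\tfrac12|$ is much harder to close up. Second, your plan to derive the instance from the IRC of \Cref{theorem:self_inclusive_irc_less_than_1_2} by ``simulating'' its multiple empty nodes with one travelling empty node is not something you can take for granted: that IRC relies on several agents jumping in parallel into distinct empty slots, and with $e=1$ the intermediate profiles are genuinely different, so each transition's improving property must be re-verified from scratch. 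The paper does not attempt such a simulation; it builds a fresh gadget. Until an explicit graph and jump sequence are written down and checked, the proof is incomplete.
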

\begin{figure}[t]
	\centering
	\includegraphics[width = \textwidth]{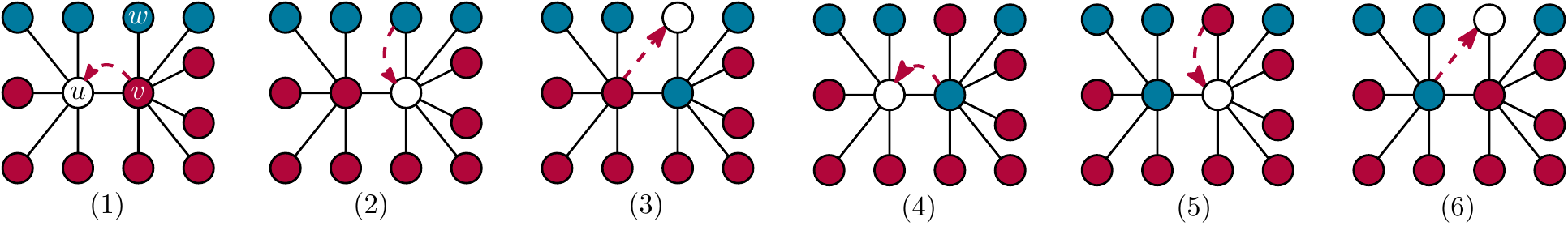}
	\caption{An IRC for a \selfInclusiveGame~with $\Lambda = \tfrac{1}{2}$ and $e=1$.}
	\label{fig:irc_self_inclusive_one_empty}
\end{figure}

\begin{proof}
	\Cref{fig:irc_self_inclusive_one_empty} shows an IRC for an instance with only one empty node. In the initial strategy profile, we have a pair of adjacent nodes $u$ and $v$, one adjacent to three red and three blue agents and one adjacent to five red and two blue agents.
	In the first step, the red agent~$i$ on $v$ with a utility of~$p\left(\frac57\right)$ performs an improving jump to $u$ to get a utility of~$p\left(\frac46\right)$. Thereby, the blue agent $j$ on a node~$w$ gets segregated and jumps to $v$. Yet, the new utility of~$i$ on~$w$ is~$1$, while the current utility of~$i$ is merely~$p\left(\frac47\right)$. Then,~$j$ jumps to the former position of agent~$i$, as~$p\left(\frac12\right) > p\left(\frac27\right)$. Thereby, agent~$i$ gets segregated and returns to her original position $v$. Finally, the utility of $j$ on her original position $w$ is~$p\left(\frac12\right)$, clearly better than her current utility of~$p\left(\frac37\right)$ on $u$.
\end{proof}

On the positive side, we can show for a very special case on rings that convergence is guaranteed.
\begin{restatable}{theorem}{theoremnine}
	On rings, the \selfInclusiveGame~with $e=1$ and $\Lambda = \frac{1}{2}$ is an ordinal potential game. It converges after at most $n$ steps.
	\label{theorem:rings_self_inclusive_e1_convergence}
\end{restatable}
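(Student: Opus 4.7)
The plan is to exhibit an explicit ordinal potential function that strictly decreases along every improving jump and takes only boundedly many values. The key observation is that on a $2$-regular ring with a single empty node each agent's closed neighborhood has size $2$ (when the agent is adjacent to the empty node) or $3$ (otherwise). Under $\Lambda = \tfrac{1}{2}$, the attainable own-color fractions are therefore in $\{\tfrac{1}{2}, 1\}$ or $\{\tfrac{1}{3}, \tfrac{2}{3}, 1\}$, and the single-peaked symmetry $p(\tfrac{2}{3}) = \p{1}{3}$ leaves only three distinct utility values: $p(1) = 0$, $\p{1}{3}$, and $p(\tfrac{1}{2}) = 1$. The maximum value $1$ is attainable only at a node adjacent to the empty node.

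From this trichotomy I would first argue that only segregated agents (utility $0$) can have an improving jump. An agent already at utility $1$ clearly cannot improve, and an agent at utility $\p{1}{3}$ currently has two occupied neighbors; after jumping into the empty node she again has two occupied neighbors, so her new utility lies in $\{0, \p{1}{3}\}$, which is not a strict improvement. Hence every improving jump satisfies $U_i(\sigma) = 0$, and the target node must have at least one opposite-color occupied neighbor after the jump.

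I would then define the potential $\Phi(\sigma)$ to be the number of edges $\{u,v\} \in E$ whose both endpoints are occupied in $\sigma$ by same-color agents. Since removing the unique empty node yields a path on $n$ vertices with $n-1$ edges, we have $0 \leq \Phi(\sigma) \leq n-1$. The main step is to verify that $\Phi$ strictly decreases in every improving jump, via a short case distinction on whether the jumping agent $i$ is adjacent to the empty node. If she is adjacent, her unique non-empty incident edge is monochromatic (because she is segregated) and disappears after the jump; the jump creates exactly one new occupied-to-occupied edge on the opposite side of the former empty node, which must be bichromatic since the new utility equals $1$, so $\Delta \Phi = -1$. If $i$ has two occupied neighbors, both of her incident edges are monochromatic by segregation and both disappear, while the jump creates two new edges at the former empty node; at least one of these must be bichromatic, because otherwise $i$ would again be segregated with utility $0$ and the jump would not be improving. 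Thus $\Delta \Phi \leq -1$ in this case as well.

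Since $\Phi$ is a non-negative integer bounded above by $n-1$ that strictly decreases in every improving jump, at most $n-1 \leq n$ improving jumps occur before a Nash equilibrium is reached, which establishes both the ordinal potential property and the $n$-step convergence bound. The main obstacle is the second case of the potential argument, where one has to account simultaneously for two edges disappearing at the old position and two edges appearing at the former empty node, and argue that the segregation of the jumping agent before and the non-segregation after force the net change to be negative; the first case and the reduction to segregated-only improving jumps are essentially routine once the three-value utility list is written down.
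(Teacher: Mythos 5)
Your proof is correct, and it takes a genuinely different route in its key step. The paper also begins by showing that only segregated agents can have an improving jump (via the same three-value utility analysis on neighborhoods of size $2$ or $3$), but it then uses the degree of integration $\DoI{\sigma}$ itself as the potential: since the jumper is monochromatically surrounded before the jump and gains an opposite-color neighbor after it, no neighbor of her old or new position can become segregated, so $\DoI{\sigma_{iv}} \geq \DoI{\sigma} + 1$ and the DoI (bounded by $n$) increases strictly. You instead count monochromatic occupied edges and show this quantity decreases by at least one per jump; your local accounting (two monochromatic edges destroyed at the old position versus at most one created at the new one, or one-for-zero in the case adjacent to the empty node) is sound, and the bound $\Phi \leq n-1$ even gives a marginally sharper step count. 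The trade-off is that the paper's choice of potential is not incidental: because the DoI is both the potential and the social welfare measure, the paper immediately concludes that every DoI-maximizing profile is a NE, yielding $\text{PoS} = 1$ for this setting as a corollary, which your edge-counting potential does not directly provide.
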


\begin{proof}
	We claim that for each improving jump of an agent $i$ to a node~$v$, we have $\DoI{\sigma_{iv}} \geq \DoI{\sigma} +1$. Hence,~$\DoI{\sigma}$ is an ordinal potential function and a NE must be reached after at most~$n$ improving jumps.
	
	Assume there is an agent~$i$ with an improving jump to~$v$, i.e., $U_i(\sigma) < U_i(\sigma_{iv})$.
	We claim $U_i(\sigma) = 0$. Assume $U_i(\sigma) > 0$, i.e., either $p\left(\frac12\right)=1$ or $p\left(\frac13\right) = p\left(\frac23\right)$.  In the first case, agent~$i$ already has the highest possible utility and thus no incentive to jump. In the second case $\left(U_i(\sigma) = p\left(\frac13\right)\right)$, we must have $U_i(\sigma_{iv})=1$. But since~$v$ is the only empty node this is only possible if $\sigma(i)$ and $v$ are adjacent. However, this requires $\neighborhoodSize{i}{\sigma} =2 \neq 3$. 
	
	Therefore, in $\sigma$, agent~$i$ is not adjacent to any agent of the other color and in~$\sigma_{iv}$ adjacent to at least one agent of the other color. 
	Thus, any agent adjacent to~$i$ that has a utility larger than~0 in~$\sigma$ still has a utility larger than~$0$ in~$\sigma_{iv}$. Also, no agent adjacent to~$v$ can drop to utility~$0$ because of~$i$ jumping to $v$. Thus, we have $\DoI{\sigma}+1 \leq  \DoI{\sigma_{iv}}$. 
\end{proof}

	\section{Existence of Equilibria}
	A fundamental question is if NE always exist. We start with a negative result that even on rings existence of equilibria is not guaranteed for $\Lambda \geq \tfrac{1}{2}$. However, in certain cases, we can provide existential results. In particular, equilibria exist if the underlying graph has an independent set that is large enough or if the graph contains sufficiently many leaf nodes. Moreover, for regular graphs, we show that equilibria exist if $e=1$ and $r$ is large enough.
	The following non-existence result for rings follows from \Cref{theorem:self_inclusive_irc_rings}.
	
	\begin{corollary}
		Even on rings, the existence of equilibria for the \selfInclusiveGame~is not guaranteed for $\Lambda \geq \tfrac{1}{2}$.
		\label{corollary:rings_without_equilibria}
	\end{corollary}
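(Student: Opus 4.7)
The plan is to use the same five-node ring instance as in \Cref{theorem:self_inclusive_irc_rings}, with $r=2$, $b=1$, and $e=2$, and argue that no strategy profile can be a NE. \Cref{theorem:self_inclusive_irc_rings} already rules out the profiles visited by the IRC, so the task reduces to handling the remaining profiles by a short symmetry-reduced case analysis.

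Up to rotation and reflection of the ring, together with the interchangeability of the two red agents, every strategy profile falls into only four types, determined by (i) whether the three occupied nodes form a consecutive cluster or a pair plus an isolated node, and (ii) the color assignment. I would explicitly exhibit an improving jump in each type. For the types in which some agent $i$ has $f_i(\sigma)=1$ (and hence $U_i(\sigma)=p(1)=0$), I would point to an empty node $v$ adjacent to an agent of the opposite color; jumping there yields $f_i(\sigma_{iv}) \in (0,1)$ and thus $U_i(\sigma_{iv}) = p(f_i(\sigma_{iv})) > 0$. For the remaining types (the two cluster configurations), no one is segregated, but some agent has a strictly smaller $f$-value than a neighbor and can relocate to an empty node so that her new $f$-value moves strictly closer to $\Lambda$ while staying in $[0,\Lambda]$; strict monotonicity of $p$ on $[0,\Lambda]$, which is where $\Lambda \ge \tfrac{1}{2}$ is used, then gives a strict improvement.

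The main obstacle is really just keeping the enumeration tight and verifying that the chosen candidate jump improves uniformly over the whole range $\Lambda \ge \tfrac{1}{2}$, covering both the boundary case $\Lambda = \tfrac{1}{2}$ (where $p(\tfrac{1}{2}) = 1$) and $\Lambda \in (\tfrac{1}{2}, 1)$ (where $p(\tfrac{1}{2}) \in (0,1)$). Both sub-ranges can be handled uniformly by the properties $p(1)=0$, $p(x)>0$ for $x \in (0,1)$, and strict monotonicity of $p$ on $[0,\Lambda]$, so each profile type admits an improving jump and the instance has no NE.
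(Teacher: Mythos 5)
Your overall strategy is sound and, in substance, matches the paper's: the paper compresses your four-type case analysis into the single observation that in any NE both red agents must be adjacent to the blue agent (a red agent not adjacent to the blue one is segregated with utility $p(1)=0$, and since the blue agent then has at least one empty ring-neighbor, that red agent has an improving jump to positive utility), which leaves only the R-B-R configuration, i.e., exactly the starting profile of the IRC from \Cref{theorem:self_inclusive_irc_rings}. Your enumeration into four types (cluster vs.\ pair-plus-singleton, crossed with the placement of the blue agent) is correct and complete up to symmetry.

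One concrete slip in your plan: you classify both cluster configurations as having no segregated agent, but that is false for the B-R-R cluster. There the red agent at the far end of the cluster has closed neighborhood consisting of itself and the middle red agent (its other ring-neighbor is empty), so $f=1$ and its utility is $0$; this profile belongs in your segregated bucket and is dispatched by a jump to the empty node adjacent to the blue agent. If you instead ran your monotonicity argument on it you would get stuck: the blue agent already has $f=\tfrac{1}{2}$ and cannot improve, and whether the middle red agent (with $f=\tfrac{2}{3}$) gains by moving to a node where it attains $f=\tfrac{1}{2}$ depends on whether $\Lambda \geq \tfrac{2}{3}$, so that route does not work uniformly over $\Lambda \geq \tfrac{1}{2}$. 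Only the R-B-R cluster genuinely needs strict monotonicity of $p$ on $[0,\Lambda]$ (the blue agent moves from $f=\tfrac{1}{3}$ to $f=\tfrac{1}{2}$), and that is precisely the first move of the IRC. With that one reclassification your case analysis goes through and yields the same conclusion as the paper.
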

	\begin{proof}
		Consider the instance in \Cref{fig:ring_irc}. Clearly, in a NE the red agents must be adjacent to the blue agent. Moreover, the IRC starts with a strategy profile in which the blue agent is adjacent to both red agents. Therefore, no equilibria can exist.
	\end{proof}
	
	If the independence number is at least the number of blue agents plus the number of empty nodes, existence of NE is guaranteed. This result is similar to the swap version~\cite{BiloBLM22}.
	\begin{theorem}
		Every \selfInclusiveGame~
		on a graph with an independent set of size $\alpha(G) \geq b + e$ has a NE.
		\label{theorem:independent_set}
	\end{theorem}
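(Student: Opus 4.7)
The plan is to exhibit an explicit strategy profile that is an NE. First I choose a subset $S$ of the given independent set with $|S| = b+e$. The construction places every blue agent and every empty node inside $S$, and places the $r = |V|-|S|$ red agents on the nodes of $V\setminus S$. Because $S$ is independent, every node of $S$ has all of its neighbors in $V\setminus S$, i.e., among the red agents. Thus, in the resulting profile $\sigma$ each blue agent (and each empty node) sees an all-red closed neighborhood.

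The placement within $S$ must be refined to avoid that a blue agent has an improving jump to another node of $S$. If a blue agent sits on $v \in S$, her closed neighborhood consists of herself and the $\degree{v}$ red agents in $N(v)$, so her color fraction equals $1/(\degree{v}+1)$ and her utility equals $p\bigl(1/(\degree{v}+1)\bigr)$. I therefore place the $b$ blue agents on the $b$ nodes of $S$ that maximize $p\bigl(1/(\degree{v}+1)\bigr)$, breaking ties arbitrarily, and leave the remaining $e$ nodes of $S$ empty.

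To verify that $\sigma$ is a NE, I check the two relevant cases. A blue agent at $v \in S$ can only jump to some empty $u \in S$; since $S$ is independent, $v$ and $u$ are not adjacent and $N(u) \subseteq V\setminus S$ is all red, so her new closed neighborhood is again all red and gives utility $p\bigl(1/(\degree{u}+1)\bigr) \leq p\bigl(1/(\degree{v}+1)\bigr)$ by the choice of blue positions — not an improvement. A red agent who jumps to an empty $u \in S$ lands in a closed neighborhood that is again all red (same argument), so her new utility is $p(1) = 0$, which cannot strictly exceed her current non-negative utility. Since no empty node lies outside $S$, these are the only possible jumps, so no agent has an improving jump.

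The main obstacle — and the reason the middle step requires a small optimization rather than an arbitrary placement — is the possibility of blue-to-empty jumps \emph{inside} $S$: a naive assignment could leave a blue on a node of $S$ whose closed-neighborhood size is unfavorable relative to some empty node's, yielding a strictly improving jump. Sorting the $b+e$ nodes of $S$ by $p\bigl(1/(\degree{\cdot}+1)\bigr)$ and greedily assigning the blues to the top $b$ values neutralizes all such jumps simultaneously; the remaining parts of the argument are routine consequences of $S$ being independent.
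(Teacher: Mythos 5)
Your proof is correct and follows essentially the same route as the paper: place all red agents on $V \setminus S$, observe that any red agent would get utility $p(1)=0$ on an empty node of the independent set, and sort the nodes of $S$ by $p\bigl(1/(\degree{v}+1)\bigr)$ so that the blue agents occupy the $b$ best positions, making blue-to-empty jumps within $S$ non-improving. No gaps; nothing further to add.
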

	\begin{proof}
		Let $I$ be the nodes of an independent set of size $k = b+e$. We construct a NE~$\sigma$. To this end, we first place all red agents on $V \setminus I$. Note that regardless of how we distribute the blue agents on~$I$, no red agent wants to jump to an empty node of~$I$ as any red agent has a utility of $0$ there.
		Observe that if we place a blue agent on a node $v \in I$, she has a utility of $p\left(\frac{1}{\degree{v}+1}\right)$, no matter where the other blue agents are placed. We order the nodes $v \in I$ in descending order by $p\left(\frac{1}{\degree{v}+1}\right)$ and place the $b$ blue agents on the~$b$ nodes with the highest utility. Thus, no blue agent has an incentive to jump to another empty node as by our placement her assigned location is at least equally good. Hence, the strategy profile $\sigma$ is a NE.
	\end{proof}
	
	Thus, if $r$ is large enough NE always exist on bipartite graphs.
	
	\begin{corollary}
		Every \selfInclusiveGame~ 
        with $r \geq \tfrac{\abs{V}}{2}$ played on a bipartite graph admits a NE that can be computed efficiently.
		\label{corollary:je_existence_bipartite}
	\end{corollary}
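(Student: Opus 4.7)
The plan is to derive this corollary directly from the preceding Theorem on independent sets by exploiting the basic fact that every bipartite graph has an independent set of size at least $|V|/2$.

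First, I would let $G = (V, E)$ be bipartite with bipartition $V = A \cup B$, and take $I$ to be whichever of $A$ or $B$ is larger. Then $|I| \geq |V|/2$, and since $I$ is one side of a bipartition it is an independent set, so $\alpha(G) \geq |V|/2$. Next, I would use the identity $r + b + e = |V|$ to rewrite the hypothesis $r \geq |V|/2$ as $b + e = |V| - r \leq |V|/2 \leq \alpha(G)$. This is exactly the condition required by the Theorem on independent sets, which then guarantees the existence of a NE.

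For the efficiency claim, I would point out that the construction used in the proof of that theorem is constructive: compute a bipartition of $G$ (which takes linear time via BFS), pick the larger side as $I$, place all $r$ red agents arbitrarily on $V \setminus I$, and then sort the nodes of $I$ in descending order of $p\bigl(\tfrac{1}{\delta(v)+1}\bigr)$ and assign the $b$ blue agents to the $b$ top-ranked nodes. Since $p$ is strictly increasing on $[0, \Lambda]$ and the values $\tfrac{1}{\delta(v)+1}$ lie in $(0,1]$, the ranking can be done by simply sorting the nodes of $I$ by degree (ascending on $[0,\Lambda]$, appropriately handled near the peak), which is polynomial.

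I do not foresee a real obstacle here: the result is essentially a packaging of the independent-set theorem with the elementary bipartite observation $\alpha(G) \geq |V|/2$. The only mildly delicate point is making sure the sorting step for placing the blue agents is well-defined, but since $p$ is a fixed given function and the relevant arguments $\tfrac{1}{\delta(v)+1}$ are computable in constant time per node, this is routine.
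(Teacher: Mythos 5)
Your proposal is correct and is exactly the intended derivation: the paper states this corollary without proof as an immediate consequence of \Cref{theorem:independent_set}, using precisely the observation that the larger side of a bipartition is an independent set of size at least $\abs{V}/2$, so $b+e=\abs{V}-r\leq\abs{V}/2\leq\alpha(G)$, and that this independent set (unlike a maximum independent set in a general graph) is computable in polynomial time, which makes the construction in that theorem's proof efficient. Your handling of the sorting step by evaluating $p\bigl(\tfrac{1}{\degree{v}+1}\bigr)$ directly rather than sorting by degree is the right way to deal with the single-peakedness of $p$.
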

	
	Next, we show that for $\Lambda \geq \tfrac12$ games with a low number of empty nodes and a low difference between the number of red and blue agents proportional to the number of empty nodes admit a NE. To this end, we consider a special kind of independent sets.
	
	\begin{definition}
		A \textit{maximum degree independent set (max-deg IS)} is an independent set $I$, such that $\forall u \in I, v \in V \setminus I: \degree{v} \leq \degree{u}$. The size of the largest max-deg IS of a graph $G$ is $\alpha^{\max \delta}(G)$.
	\end{definition}
	Note that for any graph, it holds that $\alpha^{\max \delta}(G) \geq 1$.
	
		\begin{figure}[t]
		\centering
		\includegraphics[width=0.3\textwidth]{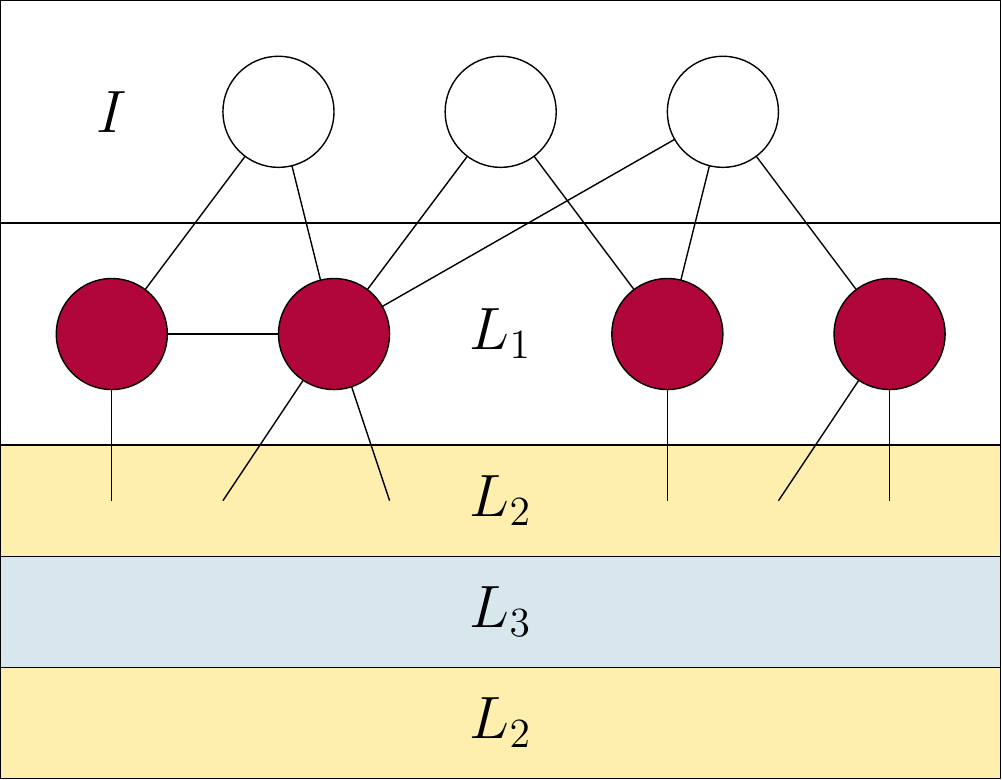}
		\caption{The layer graph. We first have an independent set~$I$ of~$e$ nodes, then a layer $L_1$ of at most $\maxdegree{G}\cdot e$ red agents. The nodes in the following layers are either part of~$L_2$ (for even layers) or $L_3$ (for odd layers).}
		\label{fig:self_inclusive_layer_graph}
	\end{figure}
	\begin{theorem}
		Let $G$ be a graph with $e \leq \alpha^{\max \delta}(G)$ and $e \leq \tfrac{r-b}{\maxdegree{G}}$. For $\Lambda \geq \tfrac{1}{2}$, the \selfInclusiveGame~
		has a NE.
		\label{theorem:self_inclusive_NE_existence_max_degree_IS}
	\end{theorem}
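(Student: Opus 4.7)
The plan is to exhibit an explicit equilibrium by placing the $e$ empty nodes on a max-deg independent set, packing red agents tightly around it, and filling the remainder of the graph in two BFS super-layers of opposite colors. First, I pick a max-deg IS $I$ of size $e$ (which exists by $\alpha^{\max\delta}(G) \geq e$); by definition every $v \in I$ has $\degree{v} = \maxdegree{G}$ and all its neighbors lie in $V\setminus I$, so $L_1 := N(I)\setminus I$ satisfies $|L_1| \leq \maxdegree{G}\cdot e \leq r-b$. Declaring $I$ empty and placing $|L_1|$ reds on $L_1$ leaves $r-|L_1|\geq b$ reds together with all $b$ blues to be placed on $V\setminus(I\cup L_1)$, a set of size $r+b-|L_1|\geq 2b$ by the two hypotheses.

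Second, I run BFS from $I$ to obtain the layers $L_2, L_3, \dots$ and group them into the super-layers $E := L_2 \cup L_4 \cup \cdots$ and $O := L_3 \cup L_5 \cup \cdots$. Since $|E|+|O| \geq 2b$, at least one of $E$ and $O$ has size at least $b$; I put all $b$ blues arbitrarily into that super-layer and fill every remaining node of $V\setminus(I\cup L_1)$ with reds, yielding a strategy profile $\sigma$. By construction, every blue lies at some $u\in L_i$ whose BFS-parent in $L_{i-1}$ lies either in $L_1$ or in the opposite (red) super-layer, so $u$ has at least one red neighbor.

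Third, I verify that $\sigma$ is a Nash Equilibrium by ruling out improving jumps; note that the only available jump destinations are nodes of $I$. A red jumping to any $v\in I$ finds $N(v)\subseteq L_1$ entirely red, hence new utility $p(1)=0$, which never improves her non-negative current utility. A blue agent $i$ at $u$ has a closed neighborhood containing at least one red neighbor and herself as blue, giving $f_i(\sigma) \in \bigl[1/(\maxdegree{G}+1),\,\maxdegree{G}/(\maxdegree{G}+1)\bigr]$; a jump by $i$ to any $v\in I$ yields $f = 1/(\maxdegree{G}+1)$ (since $N(v)\subseteq L_1$ is red, and $\degree{v}=\maxdegree{G}$), hence exactly utility $p\bigl(1/(\maxdegree{G}+1)\bigr)$.

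The main remaining step, and what I expect to be the only real obstacle, is to show $p(f_i(\sigma)) \geq p\bigl(1/(\maxdegree{G}+1)\bigr)$. If $f_i(\sigma)\leq\Lambda$ this is immediate from strict monotonicity of $p$ on $[0,\Lambda]$. The post-peak case $f_i(\sigma)>\Lambda$ is where the hypothesis $\Lambda\geq\tfrac{1}{2}$ enters essentially: applying the symmetry $p(x) = \p{\Lambda(1-x)}{1-\Lambda}$ reduces the inequality to $\Lambda(1-f_i)/(1-\Lambda) \geq 1/(\maxdegree{G}+1)$, which follows by combining $\Lambda\geq\tfrac{1}{2}$ (so $\Lambda/(1-\Lambda)\geq 1$) with $f_i\leq\maxdegree{G}/(\maxdegree{G}+1)$ (so $1-f_i\geq 1/(\maxdegree{G}+1)$). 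The rest is a routine check that $p(1) \leq U_i(\sigma)$ for every red.
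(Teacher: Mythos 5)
Your construction is the same as the paper's (empty nodes on a max-deg IS $I$, a red buffer on $N(I)$, and a parity/super-layer argument to place the blue agents so that each has a red neighbor), and most of the verification is sound. However, there is one genuine error: the claim that ``by definition every $v \in I$ has $\degree{v} = \maxdegree{G}$'' does not follow from the definition of a max-deg IS. The definition only requires $\degree{v} \leq \degree{u}$ for every $u \in I$ and $v \in V \setminus I$; it does not compare members of $I$ with one another. For instance, take $u_1$ adjacent to $w_1,\dots,w_4$ and $u_2$ adjacent to $w_1,w_2,w_3$: then $\{u_1,u_2\}$ is an independent set and a max-deg IS (every outside degree is at most $2$), yet $\degree{u_2}=3 < 4 = \maxdegree{G}$. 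Consequently a blue agent jumping to such a $v \in I$ obtains $f = 1/(\degree{v}+1) > 1/(\maxdegree{G}+1)$ and hence, since both values are at most $\tfrac12 \leq \Lambda$ and $p$ is increasing there, a utility strictly larger than $p\bigl(1/(\maxdegree{G}+1)\bigr)$. Your final chain $U_i(\sigma) \geq p\bigl(1/(\maxdegree{G}+1)\bigr) = U_i(\sigma_{iv})$ then breaks down: the equality is false, and the two one-sided bounds point the wrong way.

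The fix is local, and it is precisely where the max-deg IS hypothesis actually does its work: run your two-case (pre-peak/post-peak) argument with $\degree{\sigma(i)}$ in place of $\maxdegree{G}$ to obtain $U_i(\sigma) \geq p\bigl(1/(\degree{\sigma(i)}+1)\bigr)$, observe that $U_i(\sigma_{iv}) = p\bigl(1/(\degree{v}+1)\bigr)$, and then use $\degree{\sigma(i)} \leq \degree{v}$ --- which holds because $\sigma(i) \notin I$ and $v \in I$ --- together with $\Lambda \geq \tfrac12$ and monotonicity of $p$ on $[0,\Lambda]$ to conclude $U_i(\sigma_{iv}) \leq p\bigl(1/(\degree{\sigma(i)}+1)\bigr) \leq U_i(\sigma)$. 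This per-node degree comparison is exactly what the paper does; everything else in your write-up, including the explicit reduction of the post-peak case via the symmetry of $p$, is correct.
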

	\begin{proof}
		Let $I$ be a max-deg IS of size $e$. Since $e\leq \alpha^{\max \delta}(G)$ this exists. We place red agents on all nodes adjacent to nodes in~$I$. For this, we need at most $\maxdegree{G} \cdot e$ red agents. Afterward, we have~$r'$ red agents left and $r' \geq r - \maxdegree{G} \cdot e \geq b$.
		
		We claim that we can place the remaining agents on the remaining nodes, such that every blue agent is adjacent to at least one red agent.
		For this, consider the layer graph rooted at an imaginary node that results from merging all~$e$ nodes in~$I$, cf. \Cref{fig:self_inclusive_layer_graph}. Let the root layer be layer~$0$. Note that therefore, layer~$1$ is fully occupied by the~$r-r'$ red agents we placed in the first step on nodes adjacent to nodes in~$I$. Let~$L_2$ be the set of nodes in all even layers (except for layer~0) and~$L_3$ be the set of nodes in all odd layers (except for layer~1). 
		
		Note that all nodes in~$L_2$ (resp.~$L_3$) have at least one adjacent node not in~$L_2$ (resp. $L_3$). Furthermore, we have $|L_2| + |L_3| = r' + b$. Hence,~$|L_2|$ or~$|L_3|$ is at least $\tfrac{r'+b}{2}$. Since $r' \geq b$, it follows $b \leq \tfrac{r'+b}{2}$, so there is $L \in \simpleset{L_2, L_3}$ with $|L| \geq b$. We place all blue nodes in~$L$ and all red nodes on the remaining empty spots in~$L_2$,~$L_3$. Then, every blue node has at least one red neighbor.
		
		The placement~$\sigma$ is stable.
		As all empty nodes are adjacent to only red nodes, no red agent wants to jump.
		Let~$i$ be a blue agent and~$u$ be an empty node. By construction, $\sigma(i) \not\in I$ and $u \in I$. At least one neighbor of~$\sigma(i)$ is red, hence~$i$ has a non-zero utility.
		Since $\Lambda \geq \tfrac{1}{2}$, the worst non-zero utility is $p\left(\frac{1}{\degree{\sigma(i)}+1}\right)$.
		Thus, $U_i(\sigma) \geq p\left(\frac{1}{\degree{\sigma(i)}+1}\right)$ and since all neighbors of~$u$ are red, $U_i(\sigma_{iu}) = p\left(\frac{1}{\degree{u}+1}\right)$.
		As~$I$ is a max-deg IS, we have $\degree{\sigma(i)} \leq \degree{u}$. Furthermore, it follows from $\Lambda \geq \tfrac{1}{2}$ that $p\left(\frac{1}{\degree{u}+1}\right) \leq p\left(\frac{1}{\degree{\sigma(i)}+1}\right) = U_i(\sigma)$. Hence,~$i$ has no improving jump.
	\end{proof}

	Note that for regular graphs any independent set is a max-deg IS, i.e., $\alpha^{\max \delta}(G) = \alpha(G) \geq \tfrac{\abs{V}}{\delta+1}$. 
	\begin{corollary}
		Any \selfInclusiveGame~on a $\delta$-regular graph $G$ with $e \leq \alpha(G)$, $r \geq b + \delta \cdot e$ and $\Lambda \geq \tfrac{1}{2}$ has NE.
	\end{corollary}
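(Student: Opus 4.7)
The plan is to derive this corollary as a direct instantiation of \Cref{theorem:self_inclusive_NE_existence_max_degree_IS} on $\delta$-regular graphs, by verifying that the two hypotheses $e \leq \alpha^{\max \delta}(G)$ and $e \leq \tfrac{r-b}{\maxdegree{G}}$ of that theorem are implied by the assumptions of the corollary.

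First I would observe that in a $\delta$-regular graph every node has the same degree $\delta$, so the condition $\degree{v} \leq \degree{u}$ in the definition of a max-deg IS is trivially satisfied for any independent set $I$. Hence every independent set is a max-deg IS, which gives $\alpha^{\max \delta}(G) = \alpha(G)$. Combined with the hypothesis $e \leq \alpha(G)$ of the corollary, this yields the first hypothesis $e \leq \alpha^{\max \delta}(G)$ of \Cref{theorem:self_inclusive_NE_existence_max_degree_IS}.

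Next I would verify the second hypothesis. Since $G$ is $\delta$-regular, $\maxdegree{G} = \delta$. Rearranging the assumption $r \geq b + \delta \cdot e$ gives $\delta \cdot e \leq r - b$, i.e., $e \leq \tfrac{r-b}{\delta} = \tfrac{r-b}{\maxdegree{G}}$, which is exactly the second hypothesis. Together with $\Lambda \geq \tfrac{1}{2}$, all conditions of \Cref{theorem:self_inclusive_NE_existence_max_degree_IS} are satisfied, and the existence of a NE follows immediately.

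Since both reductions are one-line observations based on the definition of regularity, there is essentially no obstacle here; the proof is a routine translation of hypotheses. The only small thing worth mentioning for the reader is the standard bound $\alpha(G) \geq \tfrac{|V|}{\delta+1}$ noted just before the corollary, which is how one would verify the condition $e \leq \alpha(G)$ in concrete instances, but this is not needed to prove the corollary itself.
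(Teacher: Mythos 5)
Your proposal is correct and matches the paper's own justification: the corollary is stated immediately after the remark that on regular graphs every independent set is a max-deg IS (so $\alpha^{\max \delta}(G) = \alpha(G)$), and the remaining condition is the same rearrangement of $r \geq b + \delta \cdot e$ into $e \leq \tfrac{r-b}{\maxdegree{G}}$ that you give. Nothing further is needed.
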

	
	Next, we show that graphs with a large number of leaves admit NE. In particular, this applies to trees with many leaves, e.g., stars.
	\begin{theorem}
		Every \selfInclusiveGame~
		with $\Lambda \geq \tfrac{1}{2}$ on a graph with at least~$b$ nodes of degree one admits NE.\label{theorem:existence_many_leaves}
	\end{theorem}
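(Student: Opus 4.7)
The plan is to construct an equilibrium explicitly by exploiting leaves in a clean way: place the $b$ blue agents on $b$ distinct degree-one nodes, ensure that every such leaf's unique neighbor is filled with a red agent, and distribute the remaining red agents arbitrarily over other non-empty slots. The crucial structural property I will exploit is that since $G$ is connected with $|V| \geq 3$ (which follows from $n \geq 2$ and $e \geq 1$), no two leaves can be adjacent, so the stems of the chosen blue leaves are all non-leaf nodes and thus do not conflict with the placement of blues. Counting the required reds, we need at most $b$ reds for the stems (fewer if stems coincide) and $r \geq b$ gives us enough reds; the remaining $|V|-n \geq 1$ nodes stay empty.

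Under this placement, each blue agent $i$ at leaf $\ell_j$ has closed neighborhood $\{\ell_j, s_j\}$ with the stem $s_j$ occupied by a red, so $f_i(\sigma) = \tfrac{1}{2}$ and $U_i(\sigma) = p(\tfrac12)$. The key observation I will use repeatedly is that \emph{every empty node $v$ is non-adjacent to any blue agent}: a blue agent lives on a leaf whose unique neighbor is a red stem, so the only node adjacent to a blue leaf is already occupied. Consequently the non-empty neighbors of any empty node are exclusively red.

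I then verify stability in two cases. For a red agent jumping to an empty $v$, her new closed neighborhood contains only herself and red neighbors, yielding $f=1$ and utility $p(1)=0$; since $p \geq 0$, no red can strictly improve. For a blue agent jumping to an empty $v$, the new neighborhood contains herself plus $|N|-1$ reds, so $f = 1/|N|$ with $|N| \geq 1$. If $|N|=1$ she gets $0$; if $|N|=2$ she gets exactly $p(\tfrac12)$ (no strict improvement); and if $|N|\geq 3$ we have $1/|N| < \tfrac{1}{2} \leq \Lambda$, and strict monotonicity of $p$ on $[0,\Lambda]$ gives $p(1/|N|) < p(\tfrac12)$. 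Hence no blue has an improving jump either, and $\sigma$ is a NE.

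The step I expect to require the most care is rigorously ruling out that the stem of a chosen blue leaf is itself a leaf used by another blue agent, and confirming that $v$ empty cannot be adjacent to any blue. Both follow from the combination of connectedness with $|V|\geq 3$ (preventing a $K_2$-component of two adjacent leaves) and from deliberately reserving all stems as red; once these structural facts are in place, the utility comparisons are immediate from $\Lambda \geq \tfrac12$ and the monotonicity of $p$ on $[0,\Lambda]$.
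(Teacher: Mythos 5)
Your proposal is correct and follows essentially the same construction as the paper: blue agents on leaves, red agents on all nodes adjacent to blue agents (possible since $r \geq b$), remaining reds arbitrary, so that no empty node neighbors a blue agent and every blue agent sits at utility $p(\tfrac12)$, which is unimprovable for $\Lambda \geq \tfrac12$. Your extra care about non-adjacent leaves and the explicit case analysis on the jump target's neighborhood size only spell out what the paper states more tersely.
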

	\begin{proof}
		Since $r,b,e \geq 1$ the nodes of degree one are not adjacent to each other. We place all blue agents on degree one nodes. Let~$R$ be the set of nodes adjacent to blue nodes. We have $\abs{R} \leq b$, and since $r \geq b$, we can place red agents on all of them. The remaining red agents can be placed anywhere.
		With this, no empty node is adjacent to a blue agent. Thus, no red agent has an improving jump. Furthermore, each blue agent~$i$ has a utility of $ p\left(\frac{1}{2}\right)$ and since $\Lambda \geq \tfrac{1}{2}$, for any empty node~$v$, $U_i(\sigma_{iv}) \leq  p\left(\frac{1}{2}\right)$ holds. Hence,~$\sigma$ is a NE.
	\end{proof}
	
	While even for regular
	graphs with $e =1$ the FIP is violated, we can guarantee the existence of NE with further conditions.
	
	\begin{theorem}
		For any \selfInclusiveGame~on a $\delta$-regular graph with $\Lambda \geq \tfrac{1}{2}$, $r \geq \delta$ and $e=1$, equilibria exist and can be computed efficiently.	\label{theorem:self_inclusive_NE_existence_r_geq_delta_e1}
	\end{theorem}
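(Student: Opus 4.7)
The plan is to construct an equilibrium explicitly. I pick any node $v^*$ as the empty node and, since $r \geq \delta$, place $\delta$ red agents on $N(v^*)$. With $N(v^*)$ fully red, a red agent jumping to $v^*$ would land in a closed neighborhood of only red agents and obtain utility $p(1) = 0$, so no red agent has an improving jump. A blue agent jumping to $v^*$ would obtain utility $p(1/(\delta+1))$, so it will suffice to arrange the remaining agents so that every blue agent's current utility is at least $p(1/(\delta+1))$.

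For the remaining placement, I will place the $r - \delta$ remaining reds and $b$ blues on $V \setminus N[v^*]$ so that every blue agent has at least one red neighbor in $G$. Under this condition each blue agent's fraction equals $f_i = k/(\delta + 1)$ for some $k \in \{1, \dots, \delta\}$, and a short single-peaked calculation will show $p(k/(\delta+1)) \geq p(1/(\delta+1))$ for all such $k$ when $\Lambda \geq \tfrac12$: the claim is immediate by monotonicity when $k/(\delta+1) \leq \Lambda$, and for $k/(\delta+1) > \Lambda$ one checks using the reflection formula $p(x) = p(\Lambda(1-x)/(1-\Lambda))$ that $\Lambda(1-k/(\delta+1))/(1-\Lambda) \geq 1/(\delta+1)$ precisely because $\Lambda \geq \tfrac12$ and $k \leq \delta$.

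The main obstacle will be proving that the desired placement of the remaining agents actually exists. My plan is to construct it via a BFS spanning tree $T$ of $G$ rooted at $v^*$: process the nodes of $V \setminus N[v^*]$ in BFS order, and color each node blue if blues are still needed and its parent in $T$ is already red, and red otherwise. Each blue placed in this way has its (red) tree-parent as a red neighbor, giving the required property. The delicate step is showing that this greedy (supplemented if necessary by small local swaps trading a red for a blue in a deeper layer, or converting a blue back to red in an earlier layer) always yields exactly $b$ blues; the hypothesis $b \leq r$ together with a layer-by-layer count in $T$ should make this go through. Both the BFS construction and its verification run in polynomial time, establishing the efficiency claim.
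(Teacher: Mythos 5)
Your opening moves match the paper's: place the $\delta$ red agents around the single empty node $v$, observe that a red agent jumping to $v$ would get utility $p(1)=0$, and note that for $\Lambda\ge\tfrac12$ the smallest positive utility on a degree-$\delta$ node is $p\bigl(\tfrac{1}{\delta+1}\bigr)$, which is exactly what a blue agent would get by jumping to $v$. Your reflection computation showing $p\bigl(\tfrac{k}{\delta+1}\bigr)\ge p\bigl(\tfrac{1}{\delta+1}\bigr)$ for all $k\in\{1,\dots,\delta\}$ is correct. But from there the two arguments diverge, and yours has a genuine gap precisely at the step you flag as ``the main obstacle'': you need to prove that the remaining $r-\delta$ red and $b$ blue agents can always be arranged on $V\setminus N[v]$ so that every blue agent has a red neighbor. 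Equivalently, you need a dominating set of $G$ of size $r$ that contains $N(v)$, and neither the BFS greedy nor the appeal to ``small local swaps'' establishes that such a set exists; the constraint $b\le r$ alone does not obviously force it, since the $\delta$ reds pinned to $N(v)$ may be wasted on a region with few edges to the rest of the graph while the remaining $r-\delta$ reds must dominate a possibly much larger blue population elsewhere. As stated, this is an unproved combinatorial claim, and it is the entire content of the theorem once the easy utility comparisons are done.

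The paper avoids this existence question altogether. It starts from an \emph{arbitrary} profile in which $v$ is surrounded by reds and then runs improving response dynamics: by the utility comparison, the only possible improving jump is by a \emph{segregated} agent of the color opposite to the empty node's (monochromatic) neighborhood, such a jump leaves the vacated node with a monochromatic neighborhood again (preserving the invariant), and it strictly increases the degree of integration, which is bounded by $n$. Hence an equilibrium is reached after at most $n$ jumps, and the whole procedure is polynomial. Note that the equilibrium obtained this way may still contain segregated agents of one color and need not have the form you are trying to construct, which is further reason to doubt that your target profile always exists. To repair your proof you would either have to actually prove the domination claim, or switch to a convergence argument of the paper's type.
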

	\begin{proof}
		Consider a strategy profile $\sigma$ in which the only empty node $v$ is surrounded by~$\delta$ red agents.
		For any agent~$i$, it holds that if~$i$ is red, $U_i(\sigma_{iv}) = 0$ and if~$i$ is blue, we have that $U_i(\sigma_{iv}) =  p\left(\frac{1}{\delta+1}\right)$. Since $\Lambda \geq \tfrac{1}{2}$, there is no smaller, non-zero utility than $ p\left(\frac{1}{\delta+1}\right)$, therefore any agent with an improving jump to~$v$ must be blue and must have~0 utility in~$\sigma$, i.e., be segregated.
		If there is a segregated agent~$i$ with $U_i(\sigma_{iv})=0$, then after the jump, the new empty node~$\sigma(i)$ has a blue monochromatic neighborhood. Thus, in any strategy profile~$\sigma'$, reached from~$\sigma$ through improving response dynamics, the empty node has again a monochromatic neighborhood.
		Consider $\text{DoI}(\sigma)$, i.e., the number of agents with positive utility, in~$\sigma$. Given an agent~$i$ with an improving jump to~$v$, it holds that $i$ is segregated in~$\sigma$ and non-segregated in~$\sigma_{iv}$. Furthermore, since all neighbors of~$i$ in~$\sigma$ are of her color and all new neighbors in~$\sigma_{iv}$ of the other color, no new segregated agent is created. Hence, $\text{DoI}(\sigma_{iv}) > \text{DoI}(\sigma)$.
		As the DoI is upper bounded by~$n$, we get that starting from~$\sigma$, there can only be a finite number of improving jumps before an equilibrium is reached.
	\end{proof}
	
	\section{Price of Anarchy and Stability}
	\label{section:self_inclusive_price_of_anarchy_stability}
	In this section, we study the PoA and PoS of the \selfInclusiveGame~with respect to the DoI. We already showed that the existence of equilibria is not guaranteed for many instances, yet, we still give bounds that apply whenever equilibria do exist.
	
	\subsection{\textbf{Price of Anarchy}}
	We start with the PoA.
	The next lemma provides a necessary condition that holds for any NE.
	
	\begin{lemma}
		No NE contains segregated agents of different colors.
		\label{lemma:self_inclusive_only_one_color_segregated}
	\end{lemma}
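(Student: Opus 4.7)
The plan is to argue by contradiction: assume a NE $\sigma$ contains both a segregated red agent $i_r$ and a segregated blue agent $i_b$. Being segregated means $f_i(\sigma) = 1$ and hence $U_i(\sigma) = p(1) = 0$ for both agents. The goal is to exhibit an improving jump for at least one of them, contradicting that $\sigma$ is a NE.

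First, I would use the connectivity of $G$ together with $1 \leq e < |V|$ to conclude that there exists an empty node $v \in \emptynodes{\sigma}$ with at least one occupied neighbor. Indeed, the set of occupied nodes and the set of empty nodes are both nonempty, so connectivity forces an edge between them. This $v$ has at least one neighbor of some color; say, w.l.o.g., a blue neighbor (the other case is symmetric).

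Next, I would evaluate the jump of the segregated red agent $i_r$ to $v$. In the resulting profile $\sigma_{i_r v}$, the closed neighborhood $\neighborhood{i_r}{\sigma_{i_r v}}$ equals $\{v\}$ together with all non-empty neighbors of $v$. Since $v$ has at least one blue neighbor, this closed neighborhood contains at least one blue agent as well as $i_r$ herself, so $f_{i_r}(\sigma_{i_r v}) \in (0, 1)$. Combined with the fact that $p(x) > 0$ for every $x \in (0, 1)$ (which follows from $p(0) = 0 = p(1)$, strict monotonicity on $[0, \Lambda]$, and the reflection identity $p(x) = \p{\Lambda(1-x)}{1-\Lambda}$ on $[\Lambda, 1]$), this yields $U_{i_r}(\sigma_{i_r v}) > 0 = U_{i_r}(\sigma)$, i.e., an improving jump. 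If instead $v$ had a red neighbor, the analogous jump of $i_b$ to $v$ would improve $i_b$'s utility.

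There is no genuine obstacle here; the argument is essentially a two-line case distinction. The only subtlety to be careful about is to justify (i) the existence of an empty node with an occupied neighbor, which uses connectivity, and (ii) the strict positivity of $p$ on $(0, 1)$, which is a direct consequence of the single-peaked definition with peak $\Lambda \in (0, 1)$.
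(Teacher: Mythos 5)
Your proof is correct and follows essentially the same route as the paper's: both locate, via connectivity, an empty node with an occupied neighbor and let the segregated agent of the opposite color jump there. You additionally spell out why $p(x)>0$ on $(0,1)$, which the paper leaves implicit, but the argument is otherwise identical.
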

	\begin{proof}
		Assume towards a contradiction, that~$\sigma$ is a NE with two segregated agents~$i$ and~$j$ and $\col{i} \neq \col{j}$. Without loss of generality let~$i$ be red and~$j$ be blue. Let~$v$ be an empty node adjacent to some agent~$k$; since~$G$ is connected such a node must exist. Then, if~$k$ is red (resp. blue), agent~$j$ (resp.~$i$) has a profitable jump to~$v$, so~$\sigma$ cannot be a NE.
	\end{proof}
	As shown in \cite{BiloBLM22} (Lemma~5, Theorem~5), \Cref{lemma:self_inclusive_only_one_color_segregated} can be used to get a bound on the PoA for the swap version. The proofs do not rely on swaps and thus carry over.
	\begin{lemma}
		For any \selfInclusiveGame~$(G,r,b,\Lambda)$ and strategy profile~$\sigma$, we have $\text{DoI}(\sigma) \leq \min((\maxdegree{G}+1)b, n)$.
		\label{lemma:self_inclusive_poa_doi_upper_bound}
	\end{lemma}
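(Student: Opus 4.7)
The bound $\text{DoI}(\sigma) \leq n$ is immediate since there are only $n$ agents in total, so the interesting part is to establish $\text{DoI}(\sigma) \leq (\maxdegree{G}+1)b$. My plan is to split the non-segregated agents by color and count each group separately using the fact that a non-segregated agent must, by definition of $f_i$, have at least one neighbor of the opposite color.

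First I would observe that the set of non-segregated red agents coincides with the set of red agents having at least one blue neighbor, and analogously for blue. Thus, $\DoI{\sigma}$ equals the number of red agents adjacent to some blue agent plus the number of blue agents adjacent to some red agent. The second term is trivially at most $b$.

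To bound the first term, I would use a simple double-counting argument: every blue agent has at most $\maxdegree{G}$ neighbors, hence at most $\maxdegree{G}$ red neighbors. Summing over all $b$ blue agents gives that the total number of (blue, red) adjacencies is at most $\maxdegree{G} \cdot b$, and therefore at most $\maxdegree{G} \cdot b$ red agents can have a blue neighbor. Combined with the blue count, this gives $\DoI{\sigma} \leq \maxdegree{G} \cdot b + b = (\maxdegree{G}+1)b$, and together with the trivial bound $\DoI{\sigma} \leq n$ the claim follows.

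I do not expect any real obstacle here since the argument is purely combinatorial and does not use the single-peaked structure of $p$, the presence of empty nodes, or any equilibrium assumption; this is also exactly why the authors' remark that the proof from the swap version carries over applies — the counting argument only relies on the definition of non-segregation via the closed neighborhood, which is identical in both variants.
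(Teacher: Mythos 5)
Your argument is correct: the decomposition into at most $b$ non-segregated blue agents plus at most $\maxdegree{G}\cdot b$ red agents adjacent to some blue agent is exactly the counting behind the cited Lemma~5 of \citet{BiloBLM22}, which the paper invokes rather than reproving (equivalently, every non-segregated agent lies in the closed neighborhood of some blue agent, and these $b$ closed neighborhoods cover at most $(\maxdegree{G}+1)b$ nodes). You are also right that the argument uses neither swaps, jumps, nor the equilibrium property, which is precisely why it carries over to the jump setting.
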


With this, we get the same upper bound as in~\cite{BiloBLM22}. 
	
	\begin{theorem}[\cite{BiloBLM22}]
		For any \selfInclusiveGame, $\text{PoA}(G,r,b,\Lambda) \leq \min \left(\maxdegree{G}, \tfrac{n}{b+1} \right)$.\label{thm:singlepeaked}
	\end{theorem}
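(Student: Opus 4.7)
The plan is to bound $\DoI{\sigma^*}/\DoI{\sigma}$ for every NE $\sigma$. Since \Cref{lemma:self_inclusive_poa_doi_upper_bound} already provides $\DoI{\sigma^*} \leq \min((\maxdegree{G}+1)b,\, n)$, the remaining task is a matching lower bound on $\DoI{\sigma}$ for every NE $\sigma$. I would in fact derive two such lower bounds in parallel: $\DoI{\sigma} \geq b+1$, which paired with $\DoI{\sigma^*} \leq n$ yields $\text{PoA} \leq n/(b+1)$, and $\DoI{\sigma} \geq b(\maxdegree{G}+1)/\maxdegree{G}$, which paired with $\DoI{\sigma^*} \leq (\maxdegree{G}+1)b$ yields $\text{PoA} \leq \maxdegree{G}$. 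Taking the minimum gives the claim.

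For the lower bound, the key tool is \Cref{lemma:self_inclusive_only_one_color_segregated}: in any NE, segregated agents share a single color. Using $r \geq b$ and the color-symmetry of that lemma, I may assume without loss of generality that no blue agent is segregated in $\sigma$; the symmetric case only produces a stronger bound since $r \geq b$. Then all $b$ blue agents contribute to $\DoI{\sigma}$. Moreover, every non-segregated blue has at least one non-blue agent in her closed neighborhood, and because empty nodes are excluded from the closed neighborhood, that agent must be red and hence itself non-segregated. Counting incidences between the $b$ non-segregated blues and their red neighbors, and using that each red is adjacent to at most $\maxdegree{G}$ blues, gives at least $\lceil b/\maxdegree{G}\rceil$ distinct non-segregated reds. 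Combining, $\DoI{\sigma} \geq b + \lceil b/\maxdegree{G}\rceil \geq b(\maxdegree{G}+1)/\maxdegree{G}$, which in particular entails $\DoI{\sigma} \geq b+1$.

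The main obstacle is the bookkeeping of this counting step: one must verify that the WLOG reduction is genuine (so that the bound transfers to the symmetric situation in which all segregated agents are blue) and that the ceiling $\lceil b/\maxdegree{G}\rceil$ is strong enough so that division against the $(\maxdegree{G}+1)b$-upper bound exactly produces the factor $\maxdegree{G}$. Once that is in place the proof closes by computing $(\maxdegree{G}+1)b / (b(\maxdegree{G}+1)/\maxdegree{G}) = \maxdegree{G}$ and taking the minimum with $n/(b+1)$. Since no step of this argument uses swap moves, it transfers verbatim from the swap variant in \cite{BiloBLM22}.
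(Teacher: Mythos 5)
Your proposal is correct and follows essentially the same route as the paper: the paper defers the proof to the swap version of \cite{BiloBLM22} (Lemma~5, Theorem~5), whose argument is exactly the combination you reconstruct, namely the upper bound $\DoI{\sigma^*}\leq\min((\maxdegree{G}+1)b,n)$ from \Cref{lemma:self_inclusive_poa_doi_upper_bound} together with the lower bound $\DoI{\sigma}\geq b+\lceil b/\maxdegree{G}\rceil$ for every NE, derived from \Cref{lemma:self_inclusive_only_one_color_segregated} via the incidence count between the fully non-segregated color class and its neighbors. Your observation that no step uses swaps is precisely the paper's stated justification for carrying the bound over.
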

	
	It still remains to be shown that this upper bound is tight. We show that this is, asymptotically with respect to $\maxdegree{G}$, the case for general graphs.
	\begin{theorem}
		For any\ $\Lambda$, there exists a \selfInclusiveGame~$(G, r,b, \Lambda)$ with $\text{PoA}(G,r,b,\Lambda) = \tfrac{n}{b+1} = \maxdegree{G}-1$.
		\label{theorem:self_inclusive_general_poa_lower_bound}
	\end{theorem}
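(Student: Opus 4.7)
The plan is to exhibit a single explicit game attaining $\DoI{\sigma^*} = n$ together with a NE $\sigma$ with $\DoI{\sigma} = b+1$, so that the resulting lower bound $n/(b+1)$ meets the upper bound of \Cref{thm:singlepeaked}. Fix any $\Delta \geq 3$, set $b = \Delta - 1$ (so $n = (b+1)(\Delta-1) = \Delta(\Delta-1)$), and let $G$ be a \emph{star of stars}: a central vertex $c$ adjacent to $b+1$ spoke centres $s_1, \ldots, s_{b+1}$, and each $s_i$ additionally adjacent to $\Delta-1$ private leaves. Then $\degree{c} = b+1 = \Delta$, $\degree{s_i} = \Delta$ for every $i$, leaves have degree $1$, so $\maxdegree{G} = \Delta$, and the number of empty nodes is $e = b+2$.

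For the bad NE $\sigma$, I would place all $b$ blue agents on the $b = \Delta-1$ leaves of $s_1$; place red agents on $c$, on every $s_i$, and on as many leaves of $s_2, \ldots, s_{b+1}$ as needed to use up the $r$ reds, leaving all $b+2$ empty positions among the leaves of $s_2, \ldots, s_{b+1}$. In this profile only $s_1$ (red, with $b$ blue leaves as neighbors) and the $b$ blue leaves themselves have $f<1$, giving $\DoI{\sigma} = b+1$. To verify that $\sigma$ is a NE for every $\Lambda$, observe that every empty node is a leaf whose unique non-empty neighbour is some red spoke centre $s_j$ with $j \geq 2$. A jumping red agent therefore always lands with $f = 1$ and utility $p(1) = 0$, not strictly exceeding its current utility ($0$ for the segregated reds and $p(\tfrac{2}{\Delta+1}) > 0$ for $s_1$). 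A jumping blue agent always lands with $f = 1/2$ and utility $p(1/2)$, exactly matching its current utility. The whole argument uses only $p(1) = 0$ and positivity of $p$ on $(0,1)$, hence is uniform in $\Lambda$.

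For the optimum $\sigma^*$, leave $s_1$ and all $\Delta-1$ of its leaves empty, place $c$ red, make each $s_j$ ($j \geq 2$) blue, and fill the remaining leaves of $s_2, \ldots, s_{b+1}$ with red, leaving one extra leaf empty to balance the counts. Then $c$ has $b$ blue spoke-centre neighbors, each blue $s_j$ has red $c$ as a neighbor, and every red leaf sees its blue spoke centre, so every agent has $f < 1$ and $\DoI{\sigma^*} = n$. Dividing yields $\text{PoA}(G,r,b,\Lambda) \geq n/(b+1) = \Delta - 1$, which combined with \Cref{thm:singlepeaked} gives equality. The main technical hurdle is the uniform-in-$\Lambda$ NE verification, and the construction is engineered to sidestep it by making every empty position locally identical to every other, so the single-peaked shape of $p$ cannot produce a lucky deviation regardless of where the peak sits.
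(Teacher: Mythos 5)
Your proof is correct and follows essentially the same approach as the paper: identical parameters ($b=\Delta_G-1$, $r=b^2$, $n=\Delta_G(\Delta_G-1)$), the same bad NE (all blue agents as leaves of a single red hub, with every empty node being a leaf whose only neighbour is red, so reds can only reach utility $p(1)=0$ and blues only $p(\tfrac12)$), and the same optimum in which the blue agents become hubs so that all $n$ agents are integrated. The only difference is cosmetic: you use a star-of-stars gadget where the paper hangs the trees off a path, but the mechanism and the resulting bound $\tfrac{n}{b+1}=\Delta_G-1$ are identical.
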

	\begin{proof}
		For some $\delta \geq 4$, consider the game $(G,r,b,\Lambda)$ with $b = \delta-1, r = b^2$ depicted in \Cref{fig:PoA_worst_case}. The graph~$G$ has a node~$v$ adjacent to a set~$B$ of~$b$ nodes. Further,~$v$ is adjacent to another node, which lies on a path of altogether $b$ nodes, which at the same time represent the root of a tree. Hence, each node on this path is adjacent to one node in $B'$, each of which is adjacent to~$b$ nodes in total. Observe that $\maxdegree{G}=\delta$ and there are $b+1$ empty nodes.
		
		There is an optimal strategy profile~$\sigma^*$ in which all nodes in $B'$ are occupied by blue agents and all nodes outside of $B \union B' \union \simpleset{v}$ are occupied by red agents. We have that $\text{DoI}(\sigma^*) = \maxdegree{G}(\maxdegree{G}-1)=n$.
		Furthermore, there is a NE~$\sigma$ in which the blue agents occupy $B$ and~$b+1$ of the leaf nodes adjacent to nodes in $B'$ are empty. Since each blue agent is adjacent to exactly one red agent, we have for any red agent~$i$ and empty node~$u$ that $U_i(\sigma) = U_i(\sigma_{iu}) = p\left(\frac{1}{2}\right)$. Thus, we have that~$\sigma$ is a NE. Only the blue agents and one red agent are not segregated, hence it holds that $\text{DoI}(\sigma) = b+1 = \maxdegree{G}$.
		With this we have that $\text{PoA}(G,r,b, \Lambda) \geq \tfrac{n}{b+1} = \tfrac{\maxdegree{G}(\maxdegree{G}-1)}{\maxdegree{G}}$.
	\end{proof}
	\begin{figure}[t]
		\centering
		\includegraphics[width = 0.95\textwidth]{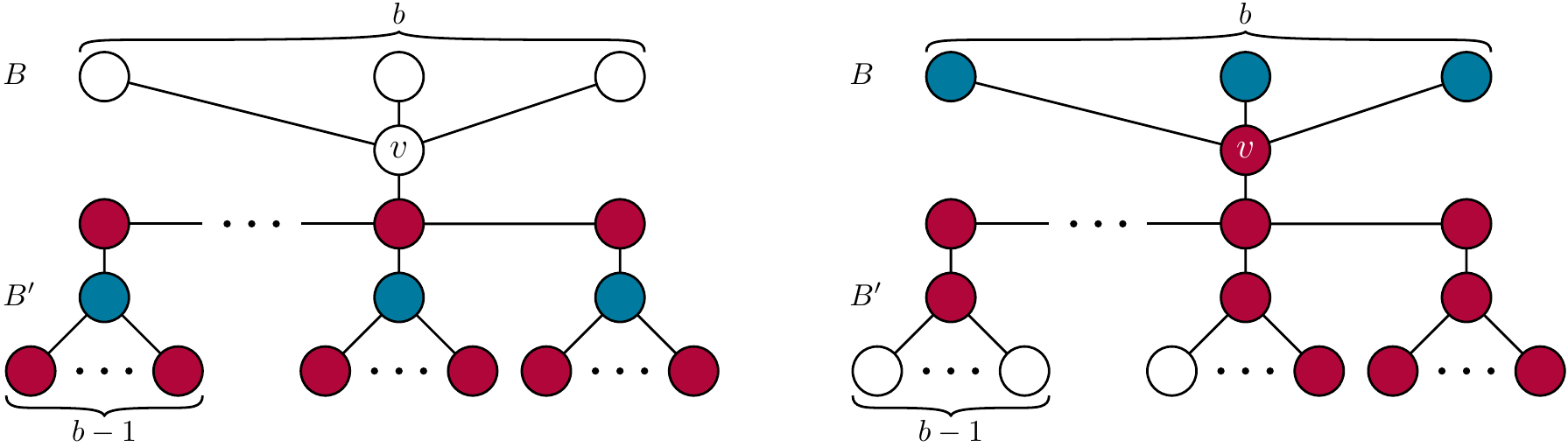}
		\caption{A \selfInclusiveGame~
		with $b = \maxdegree{G}-1$ and $r = b^2$. The middle row is a path of $b$ many nodes occupied by red agents.  The dots in the lower row are representative for the rest of the $b-1$ many leaf nodes. Left: Optimum $\sigma^*$ with $\text{DoI}(\sigma^*) = \maxdegree{G}(\maxdegree{G}-1) = n$. Right: NE $\sigma$ with $\text{DoI}(\sigma) = b+1 = \maxdegree{G}$.}
		\label{fig:PoA_worst_case}
	\end{figure}
	We use a similar construction as in~\cite{BiloBLM22} to also obtain a lower bound for a regular graph. Yet, in our case the bound holds for all values of $\Lambda$ instead for only $\Lambda < \frac12$~\cite{BiloBLM22}.

	\begin{figure}[b]
		\centering
		\includegraphics[width = 0.75\textwidth]{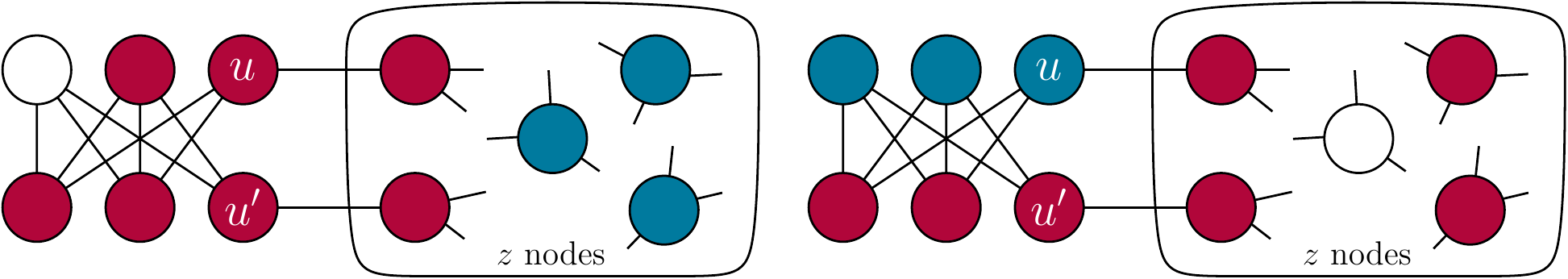}
		\caption{$(G,r,\delta,\Lambda)$ on a $\delta$-regular graph with $z \geq \delta^2+1$. Left: Optimum $\sigma^*$ with $\text{DoI}(\sigma^*)=\delta(\delta+1)$. The nodes of the blue agents and the empty node form an independent set. Right: NE $\sigma$ with $\text{DoI}(\sigma) = \delta+1$.}
		\label{fig:si_poa_delta_regular_lower_bound}
	\end{figure}
	
	\begin{theorem}
		For every $\delta \geq 2$ and $\Lambda$, a \selfInclusiveGame~$(G, r,b, \Lambda)$ on a $\delta$-regular graph with $\text{PoA}(G,r,b,\Lambda) \geq \tfrac{\delta(\delta+1)}{2\delta+1} = \tfrac{\delta+1}{2} - \tfrac{\delta+1}{4\delta+2}$ exists.
		\label{theorem:self_inclusive_poa_delta_regular_lower_bound}
	\end{theorem}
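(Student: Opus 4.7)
The plan is to exhibit a $\delta$-regular graph $G$ together with a strategy profile $\sigma^*$ achieving $\DoI{\sigma^*} = \delta(\delta+1)$ and a Nash Equilibrium $\sigma$ achieving $\DoI{\sigma} = 2\delta+1$, so that the ratio yields the desired lower bound on $\text{PoA}(G,r,b,\Lambda)$. Following \Cref{fig:si_poa_delta_regular_lower_bound}, I set $b = \delta$ and build $G$ from a central vertex $v$ whose $\delta$ neighbors form a set $B$, an auxiliary set $B'$ of $\delta$ further vertices connected to $B$ by a $(\delta{-}1)$-regular bipartite graph, and a $\delta$-regular ``bulk'' of $z \geq \delta^2+1$ further vertices into which each vertex of $B'$ contributes its one remaining incidence; the bulk is designed so that $G$ is $\delta$-regular overall.

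For $\sigma^*$ I place the $\delta$ blue agents on $\delta$ pairwise non-adjacent bulk vertices whose open neighborhoods are pairwise disjoint and disjoint from $B \cup B' \cup \{v\}$, and fill all remaining non-empty vertices with red agents. Each blue then has $\delta$ distinct red neighbors, so the $\delta$ blues plus the $\delta^2$ reds adjacent to blues are all non-segregated, giving $\DoI{\sigma^*} \geq \delta(\delta+1)$. By \Cref{lemma:self_inclusive_poa_doi_upper_bound} every profile has DoI at most $(\maxdegree{G}+1)b = \delta(\delta+1)$, so equality holds.

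For $\sigma$ I place the $\delta$ blues on $B$, a red on $v$, reds on all of $B'$, and reds on the bulk filling the total to $r$, keeping every empty vertex inside the bulk with all-red and fully occupied neighborhood (which the slack in $z$ permits). I then check that no agent has an improving jump, uniformly in $\Lambda \in (0,1)$. Each blue has $f = \tfrac{1}{\delta+1}$, and any jump to an empty bulk vertex preserves this fraction, so no strict gain. The central $v$ has $f_v = \tfrac{1}{\delta+1}$ with positive utility and would become segregated with utility $p(1)=0$ upon jumping. Each red in $B'$ has $f = \tfrac{2}{\delta+1}$ with strictly positive utility and would likewise drop to $p(1) = 0$ on any jump into the bulk. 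Bulk reds already have utility $0$ and every reachable empty vertex has an all-red closed neighborhood, again giving utility $0$. Counting non-segregated agents in $\sigma$ then yields the $\delta$ blues, the vertex $v$, and the $\delta$ reds in $B'$, for $\DoI{\sigma} = 2\delta+1$.

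The main obstacle is the combinatorial realizability of $G$: one must specify the $(\delta{-}1)$-regular bipartite graph between $B$ and $B'$, a single edge from each vertex of $B'$ into the bulk, and a $\delta$-regular bulk that simultaneously admits the $\delta$ pairwise-disjoint open neighborhoods required by $\sigma^*$ and the all-red, fully occupied empty positions required by $\sigma$; the hypothesis $z \geq \delta^2+1$ is exactly what provides enough slack for these requirements to hold simultaneously. Once this realization is in place, the equilibrium verification is uniform in $\Lambda$ because every comparison reduces a strictly positive value of $p$ against $p(1)=0$, and so nothing depends on the particular shape of the single-peaked function beyond its normalization.
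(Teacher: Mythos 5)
Your proposal is correct and follows essentially the same approach as the paper: both exhibit a $\delta$-regular graph made of a small gadget attached to a large bulk, an optimal profile with $\text{DoI}=\delta(\delta+1)$ obtained by placing the blue agents on an independent set with pairwise disjoint red neighborhoods, and a NE with $\text{DoI}=2\delta+1$ in which every blue sits at fraction $\tfrac{1}{\delta+1}$ and every empty node is a fully occupied all-red bulk vertex, so no jump is improving for any $\Lambda$. The only differences are cosmetic: the paper's gadget is a $K_{\delta,\delta}$ with one edge removed rather than your two-layer star $v$--$B$--$B'$, and your remark that $z\geq\delta^2+1$ is ``exactly'' the needed slack is slightly off (disjoint closed neighborhoods for $\delta$ blues plus room for the empty nodes require roughly $z\geq\delta^2+\delta+e$), but since $z$ is a free parameter this is harmless.
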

	\begin{proof}
		Consider~\Cref{fig:si_poa_delta_regular_lower_bound}. For a fixed $\delta$, consider the game $(G,r,b,\Lambda)$ in which~$G$ is a $\delta$-regular graph consisting of a left and a right gadget. The left gadget is a $K_{\delta, \delta}$ from which the edge between two nodes $u,u'$ has been removed. The right gadget consists of $z \geq \delta^2+1$ nodes that are connected in some arbitrary way such that~$G$ is a $\delta$-regular graph. The two gadgets are connected via $u, u'$. Let $b = \delta$ and $r = \delta+z-1$ and therefore~$e=1$. Since~$G$ is a $\delta$-regular graph with $\abs{V} \geq 2 \delta + \delta^2+1 = (\delta+1)^2$, it follows that there must exist an independent set~$I$ of size $\abs{I} \geq \tfrac{(\delta+1)^2}{\delta+1} = \delta+1$. 
		
		Consider the strategy profile~$\sigma^*$ in which all red agents are~placed on nodes outside of~$I$. Every blue agent is adjacent to~$\delta$ red agents. Hence, $\text{DoI}(\sigma^*) = \delta(\delta+1)$.
		Yet, there is a NE~$\sigma$ in which the blue agents occupy the upper half of the $K_{\delta,\delta}$ gadget and the empty node~$v$ is not adjacent to a blue agent. Clearly, no red agent wants to jump and for every blue agent $i$, it holds $U_i(\sigma) = U_i(\sigma_{iv})$. We have $\text{DoI}(\sigma) = 2\delta+1$.
		Thus, it holds that $\text{PoA}(G,r,b,\Lambda) \geq \tfrac{\delta(\delta+1)}{2\delta+1} = \tfrac{\delta+1}{2} - \tfrac{\delta+1}{4\delta+2}$.
	\end{proof}

	\begin{figure}[t]
		\centering
		\includegraphics[width=0.6\textwidth]{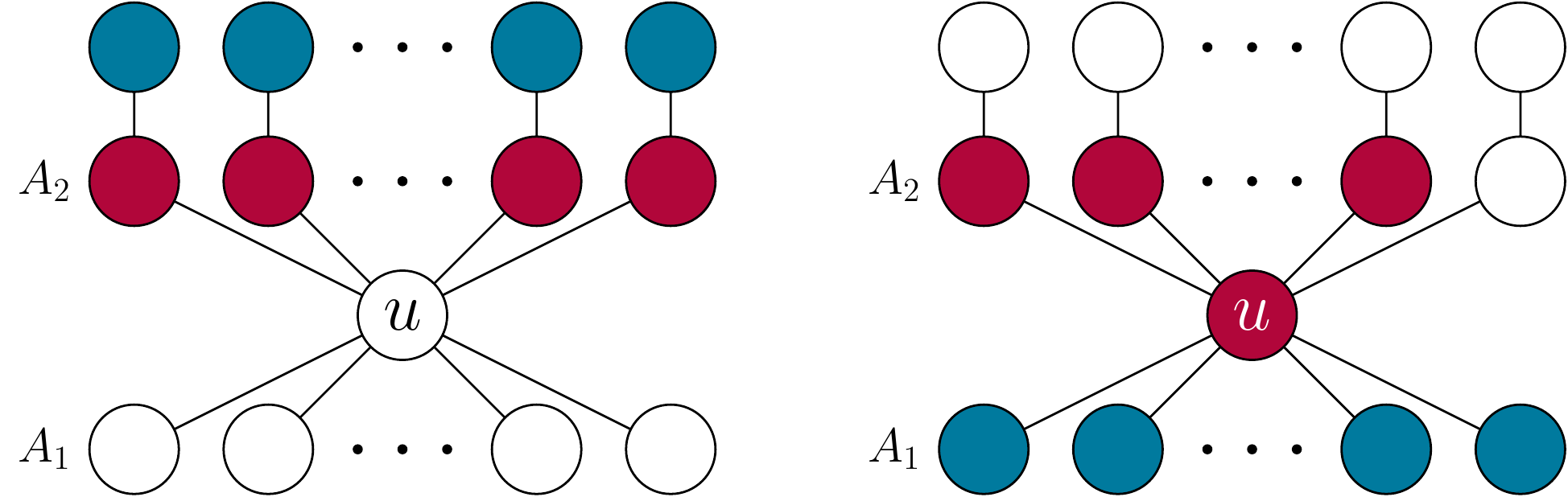}
		\caption{Balanced \selfInclusiveGame~with $b$ agents per type. Left: Optimum $\sigma^*$: $\text{DoI}(\sigma^*)=n$. Right: NE $\sigma$ with $\text{DoI}(\sigma) = b+1$.}
		\label{fig:balanced_si_game_poa}
	\end{figure}
	For games with $r=b$, it follows from~\Cref{thm:singlepeaked} that the PoA is at most $\frac{2b}{b+1} < 2$. We show that this bound is tight.
	\begin{theorem}
		For any $\Lambda$ and $b$, there is a balanced \selfInclusiveGame~$(G,b,b,\Lambda)$ with $\text{PoA}(G,b,b,\Lambda) = \tfrac{2b}{b+1}$.
		\label{theorem:self_inclusive_poa_balanced_lower_bound_general}
	\end{theorem}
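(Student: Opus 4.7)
The plan is to match the upper bound of $\tfrac{2b}{b+1}$ given by Theorem~\ref{thm:singlepeaked}: with $n=2b$, the bound is $\min(\maxdegree{G},\tfrac{n}{b+1})=\tfrac{2b}{b+1}$ as soon as $\maxdegree{G}\geq b+1$. I would hit this bound via an explicit ``double-star'' construction. Take $G$ to be two disjoint copies of $K_{1,b}$ with centers $c_1,c_2$ and leaves $l^1_1,\dots,l^1_b$ and $l^2_1,\dots,l^2_b$, joined by the single edge $\{c_1,c_2\}$. Then $|V|=2b+2$, so $e=2$, and $\maxdegree{G}=b+1$. The case $b=1$ is trivial (the bound equals $1$), so we focus on $b\geq 2$.

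For the optimum $\sigma^*$, I would use the fact that $G$ is bipartite: put $c_1$ red with reds on $l^2_1,\dots,l^2_{b-1}$, and $c_2$ blue with blues on $l^1_1,\dots,l^1_{b-1}$, leaving $l^1_b$ and $l^2_b$ empty. Every agent has a neighbor of the opposite color, so $\DoI{\sigma^*}=n=2b$.

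For a bad NE $\sigma$, I would cluster all $b$ blues on $l^1_1,\dots,l^1_b$, place $c_1,c_2$ red and $b-2$ more reds on $l^2_1,\dots,l^2_{b-2}$, leaving $l^2_{b-1},l^2_b$ empty. Each blue leaf has fraction $\tfrac12$; $c_1$ has $b$ blue neighbors plus one red neighbor ($c_2$), so is non-segregated; while $c_2$ and its $b-2$ red leaves all have fully red closed neighborhoods, hence are segregated. This yields $\DoI{\sigma}=b+1$, and combining with the Theorem~\ref{thm:singlepeaked} upper bound gives $\text{PoA}(G,b,b,\Lambda)=\tfrac{2b}{b+1}$.

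The crux is verifying $\sigma$ is a NE for every $\Lambda$. The design choice that makes this universal is that every empty node is a leaf of $c_2$ whose only neighbor is the red $c_2$. A segregated red jumping there remains in an all-red closed neighborhood (utility stays at $0$); $c_1$ jumping there obtains a fully red neighborhood of size $2$, giving utility $0<p(\tfrac{2}{b+2})$ and hence strictly worse; and, crucially, a blue leaf jumping there attains fraction $\tfrac12$, exactly matching its current fraction, so its utility is unchanged. I expect the blue-jump analysis to be the only real subtlety: since $p$ is single-peaked, if the alternative empty node had any degree other than $1$ the comparison $p(\tfrac{1}{d+1})$ vs.\ $p(\tfrac12)$ would depend on $\Lambda$ and could yield a strict improvement; placing the empty nodes at leaves of $c_2$ eliminates this risk uniformly in $\Lambda$.
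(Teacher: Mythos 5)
Your proof is correct and follows essentially the same strategy as the paper's: a red hub surrounded by blue leaves sitting at fraction $\tfrac12$, with all empty nodes positioned so that every possible jump lands at fraction $\tfrac12$ or $1$, which makes the bad equilibrium independent of $\Lambda$ and lets the matching upper bound from \Cref{thm:singlepeaked} close the gap. The paper uses a slightly different gadget (a single hub $u$ with $b$ pendant blue leaves and $b$ further neighbors each carrying an empty pendant, so $3b+1$ nodes and $b+1$ empty nodes), but the argument is the same; your double star is simply a more economical instance of it.
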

	\begin{proof}
		Consider the balanced game $(G,b,b,\Lambda)$ in \Cref{fig:balanced_si_game_poa} in which the graph~$G$ has a node~$u$ adjacent to two sets $A_1,A_2$ of~$b$ nodes each. All nodes in~$A_1$ are leaves and all nodes in~$A_2$ are adjacent to one leaf each. Thus,~$G$ has $3b+1$ nodes in total.
		
		In the optimal strategy profile, all red agents are placed on~$A_2$ and all blue agents on the leafs are adjacent to the nodes of~$A_2$. Thus, all~$2b$ agents are non-segregated and $\text{DoI}(\sigma) = n$.
		
		However, there is a NE~$\sigma$ in which~$u$ is occupied by a red agent, all blue agents are located on nodes in~$A_1$ and the other red agents are on nodes of~$A_2$. No red agent has an improving jump, as no empty node is adjacent to blue agents. Furthermore, all blue agents~$i$ have $U_i(\sigma) = p\left(\frac{1}{2}\right)$. Observe that all empty nodes are adjacent to at most one red agent, and therefore, for any empty node~$v$, we have $U_i(\sigma_{iv}) = p\left(\frac{1}{2}\right)$ or $U_i(\sigma_{iv})=0$. Hence, it holds that~$\sigma$ is a~NE.
		The red agent on~$u$ is the only non-segregated red agent, thus $\text{DoI}(\sigma) = b+1$, and we have a PoA of $\tfrac{2b}{b+1}$.
	\end{proof}
	
	\subsection{\textbf{Price of Stability.}}
	\label{section:self_inclusive_price_of_stability}
	We now study the PoS and give bounds under different conditions. First, we observe from \Cref{thm:singlepeaked}, that for any game $(G,r,b,\Lambda)$, we have
	$\text{PoS}(G,r,b,\Lambda) \leq \min \left( \maxdegree{G}, \tfrac{n}{b+1} \right).$
	We now present a lower bound which, although only for~$b=1$, is asymptotically tight.
	
	\begin{theorem}
		For any $\Lambda \geq \tfrac{1}{2}$, there is a \selfInclusiveGame~
		on a tree in which $\text{PoS}(G,r,1,\Lambda) = \tfrac{\maxdegree{G}}{2} = \tfrac{n-2}{2} = \tfrac{n-2}{b+1}$.
		\label{theorem:self_inclusive_pos_lower_bound}
	\end{theorem}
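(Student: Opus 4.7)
The plan is to exhibit a tree $G$ with $\maxdegree{G}=n-2$ together with a distinguished profile $\sigma^*$ realising a high $\DoI$, and then to show that every NE $\sigma$ has $\DoI{\sigma}=2$. A natural candidate is the tree on $n+1$ vertices formed by attaching to a central node $u$ of degree $n-2$ a bundle of leaf neighbours $\ell_1,\dots,\ell_{n-3}$ together with a short pendant path $u-v-w-x$; here $\abs{V}=n+1$ and $e=1$, and the terminal leaf $x$ is designed to act as a ``bait'' position that will pull blue away from the high-degree core in every NE.

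For the numerator I would define $\sigma^*$ by placing the blue agent on $u$, red agents on every neighbour of $u$, and the last red on an interior tail node so that the empty vertex sits on the tail. Blue is then non-segregated because all her graph-neighbours are red, and each of the $n-2$ reds adjacent to $u$ sees the blue agent in its closed neighbourhood, and is therefore non-segregated. A short structural check using $\maxdegree{G}=n-2$ and the shape of the tail pins $\DoI{\sigma^*}=n-2$.

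For the denominator the crucial observation is that, for $\Lambda\ge \tfrac12$, the blue agent's utility $p\bigl(1/\abs{\neighborhood{\mathrm{blue}}{\sigma}}\bigr)$ is strictly maximised over the attainable fractions $\{1/k : k\ge 2\}$ at $k=2$, because $p$ is strictly increasing on $[0,\tfrac12]\subseteq[0,\Lambda]$. Consequently, whenever blue sits at a node with closed neighbourhood of size at least $3$ and some empty leaf of $G$ has its unique graph-neighbour occupied by a red, blue has a strictly improving jump to that leaf. A case distinction on the position of the single empty node then forces blue, in any NE, onto a leaf whose unique neighbour is a red: that red is the only non-segregated red, every other red sits in a monochromatic cluster, and no empty node is adjacent to blue, so no segregated red has an improving jump either. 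This gives $\DoI=2$ in every NE, and together with the numerator yields the claimed equality $\text{PoS}(G,r,1,\Lambda)=\tfrac{n-2}{2}=\tfrac{\maxdegree{G}}{2}=\tfrac{n-2}{b+1}$. The main obstacle I expect is the NE characterisation, specifically ruling out by explicit improving jumps every configuration in which blue occupies $u$ or an internal tail node, and verifying that the argument goes through uniformly for all $\Lambda\ge\tfrac12$.
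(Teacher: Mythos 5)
Your construction and argument are essentially the paper's: a star with a short pendant tail, the optimum placing blue at the high-degree center, and every NE forced to park blue on a degree-one node via the observation that for $\Lambda\geq\tfrac12$ the value $p(\tfrac12)$ strictly dominates $p(\tfrac1k)$ for $k\geq 3$ (with the complementary cases handled by a segregated red jumping next to blue when the empty leaf happens to be adjacent to blue herself). The one concrete slip is in the numerator: with a single blue agent, a red is non-segregated if and only if she is adjacent to the blue, so $\DoI{\sigma}\leq 1+\degree{\sigma(\text{blue})}$ in \emph{every} profile; your tail $u$--$v$--$w$--$x$ is one node longer than the paper's, so the red at distance two from the center is segregated even in your $\sigma^*$, and the optimum of your instance is $n-1$ (not $n-2$ as you claim, and not $n$ as in the paper's tree, whose length-two tail lets every red touch the blue center). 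This only perturbs the constant in the exact equality of the theorem statement; the $\Theta(\maxdegree{G})$ lower bound on the PoS survives intact, and the NE-side case analysis you defer does go through as you anticipate.
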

	\begin{proof}
		Consider the game $(G,r,1, \Lambda)$ with $\Lambda \geq \tfrac{1}{2}$ on a star-like graph~$G$ centered at~$v$ where one leaf node~$u$ is adjacent to one additional node~$w$. Hence, $\maxdegree{G} = n-2$. Moreover, assume that there is exactly one empty node, i.e., $e=1$. 
		
		There is a strategy profile $\sigma^*$ in which $\sigma^{-1}(v)$ is the blue agent and~$w$ is empty. We have $\text{DoI}(\sigma^*) = n$.
		However, we claim that the best NE has $\text{DoI}(\sigma) = 2$. The DoI can only be higher if the blue agent~$i$ is on a node with a degree of at least~2, i.e.,~$u$ or~$v$.
		If $\sigma(i) = u$, there must be two red agents adjacent to~$i$. Consequently, it holds that $U_i(\sigma) = \p{1}{3}$, yet the empty node~$u'$ must be adjacent to~$v$ and thus $U_i(\sigma_{iu'}) = \p{1}{2}$.
		If $\sigma(i) = v$, we have that either~$w$ is empty or there is a red agent~$j$ on~$w$ and some node~$u'$ adjacent to~$v$ is empty. In the first case, $U(\sigma_{iw}) = \p{1}{2} > U_i(\sigma) = \p{1}{r}$ and in the second one $U_j(\sigma_{ju'}) = \p{1}{2} > U_j(\sigma) = 0$. This proves that there can be no NE in which~$\sigma^{-1}(v)$ is blue.
		Therefore, $\text{PoS}(G,r,b,\Lambda) = \tfrac{\maxdegree{G}}{2}$.
	\end{proof}
	
	Next, we study the balanced game. Here, the PoS is upper bounded by a PoA of at most $2$. We show that this bound is tight for $\Lambda \geq \tfrac{1}{2}$.
	
	\begin{theorem}
		For $\Lambda \geq \tfrac{1}{2}$, a \selfInclusiveGame~$(G,b,b,\Lambda)$ with $\text{PoS}(G,b,b,\Lambda) \geq 2 - \varepsilon$ for any $\varepsilon > 0$ exists.
		\label{theorem:self_inclusive_pos_balanced_lower_bound}
	\end{theorem}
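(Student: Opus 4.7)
The plan is to construct, for each $\varepsilon > 0$, a balanced game $(G, b, b, \Lambda)$ with $b \geq \lceil 2/\varepsilon \rceil$ that has a strategy profile with $\text{DoI} = 2b = n$ but in which every NE has $\text{DoI} \leq b+1$. This yields $\text{PoS} \geq \tfrac{2b}{b+1} = 2 - \tfrac{2}{b+1} \geq 2 - \varepsilon$, matching the general upper bound $\tfrac{2b}{b+1}$ from \Cref{thm:singlepeaked} up to a vanishing additive term. Hence the target is tight in the large-$b$ regime, and the proof reduces to designing a graph where the DoI-optimum is unstable while stable states are structurally constrained.

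The construction adapts the tree from \Cref{theorem:self_inclusive_poa_balanced_lower_bound_general} (central node $u$, $b$-leaf set $A_1$, and $b$ children $A_2$ each carrying an $\ell_2$-leaf), whose PoA-optimum already achieves $\text{DoI} = n$. The difficulty is that this particular optimum is itself a NE, since each agent attains the peak utility $p(\tfrac12)$ on its leaf-child pair, so a direct PoA-style argument only yields $\text{PoS} = 1$. To push $\text{PoS}$ away from $1$ I would attach a carefully chosen gadget --- for example, additional edges inside $A_2$ that push some reds in the PoA-optimum to $f \in \{\tfrac13, \tfrac23\}$, together with extra pendants hanging off one or more $\ell_2$-leaves that serve as empty $f = \tfrac12$ destinations for those reds --- so that every DoI-$n$ profile contains at least one agent with a strictly improving jump, while the remaining stable states still force all but one agent of one color to be segregated.

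Assuming such a gadget, the proof proceeds in three steps: (i) exhibit the DoI-$n$ profile $\sigma^*$ and check that each agent has an opposite-color neighbor, as in \Cref{theorem:self_inclusive_poa_balanced_lower_bound_general}; (ii) identify the improving jump that refutes $\sigma^* \in \text{NE}$, exploiting the new pendant in the spirit of \Cref{theorem:self_inclusive_pos_lower_bound}; (iii) invoke \Cref{lemma:self_inclusive_only_one_color_segregated} together with a case analysis on the placements of the $b$ minority agents to bound $\text{DoI} \leq b+1$ in every NE, mirroring the low-DoI NE already constructed in \Cref{theorem:self_inclusive_poa_balanced_lower_bound_general}. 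The main obstacle is step~(iii) combined with the gadget design: one must ensure that the very gadget destabilizing $\sigma^*$ does not itself produce an alternative high-DoI NE, since augmenting the graph can easily create new peak-utility pockets. Overcoming this requires either replicating the gadget $\Theta(b)$ times so that no NE accommodates all copies simultaneously at peak utility, or analyzing the augmented topology carefully enough to exhibit, in every candidate high-DoI profile, a strictly improving jump to the added empty pendant or to an $A_1$-leaf of the base tree.
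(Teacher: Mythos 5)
There is a genuine gap: the entire substance of this theorem is the explicit construction of a graph on which no NE has $\text{DoI} > b+1$ while a profile with $\text{DoI} = n$ exists, and your proposal does not supply one. You correctly identify the target ratio $\tfrac{2b}{b+1}$, correctly observe that the optimum of the PoA construction from \Cref{theorem:self_inclusive_poa_balanced_lower_bound_general} is itself a NE (so that graph cannot be reused), and correctly name the central danger --- that any gadget destabilizing the DoI-$n$ profile might itself create a new high-DoI equilibrium. But the argument then proceeds ``assuming such a gadget,'' and steps (ii) and (iii) cannot be checked against an unspecified topology. In particular, it does not suffice to refute $\sigma^* \in \text{NE}$ for the single optimum profile; one must show that \emph{every} profile with $\text{DoI} > b+1$ admits an improving jump, which is a much stronger structural claim and is exactly where the unspecified gadget carries all the weight. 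The appeal to \Cref{lemma:self_inclusive_only_one_color_segregated} does not bridge this, since that lemma only forbids segregated agents of both colors coexisting and says nothing about how many agents of the minority color can be integrated.

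For comparison, the paper's construction resolves the tension with a single universal hub rather than local pendant gadgets: two $b$-sets $A$ and $B$ joined by a perfect matching, a node $v \in A$ adjacent to \emph{all} nodes of $A \cup B$, and $2b$ leaves $Z$ hanging off $v$. The hub forces $v$ to be occupied in every NE (an empty $v$ leaves some utility-$0$ agent on $Z$ with an improving jump to $v$), and then any agent of the color opposite to $\sigma^{-1}(v)$ that has a second opposite-color neighbor sits at $p\left(\tfrac{1}{3}\right)$ and improves by jumping to an empty leaf in $Z$ worth $p\left(\tfrac{1}{2}\right)$ (using $\Lambda \geq \tfrac12$). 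Hence in every NE all agents of the hub's color except the hub occupant are segregated, giving $\text{DoI} \leq b+1$, while the matching still supports a $\text{DoI} = 2b$ profile. This is the global argument your plan gestures at but does not instantiate; without a concrete graph and the verification that it admits at least one NE of DoI exactly $b+1$ and no better one, the proof is incomplete.
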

	\begin{proof}
		Consider the balanced game ($G,b,b,\Lambda)$ as shown in \Cref{fig:balanced_si_game_pos}. The graph~$G$ has two sets~$A$ and $B$ of~$b$ nodes each and the $i$-$th$ node in $A$ is connected to the $i$-th node in $B$. Furthermore, the first node~$v \in A$ is connected to all nodes in both, $A$ and $B$. Additionally, the node~$v$ is adjacent to~$2b$ leaf nodes $Z$.
		
		In the optimal strategy profile~$\sigma^*$, all red nodes are located on~$A$ and all blue nodes on~$B$. Thus, it holds that $\text{DoI}(\sigma^*) = 2b$.
		We claim that there is no equilibrium~$\sigma$ in which~$v$ is empty or any agent of the opposite color of~$\sigma^{-1}(v)$ is adjacent to any agent of~$\col{\sigma^{-1}(v)}$ other than~$\sigma^{-1}(v)$.
		
		Suppose that $v$ is empty. There are $2b-1$ nodes in $B \cup A \setminus \simpleset{v}$. Thus by counting, there must be an agent~$i$ on a node in~$Z$. As~$v$ is empty, we have that $U_i(\sigma) = 0$. Yet, it holds that $U_i(\sigma_{iv}) = \p{b}{2b} > 0$, so agent~$i$ has an improving jump.
		W.l.o.g., let there be a red agent on~$v$. Suppose that a blue agent~$i$ is adjacent to an additional red agent that is not~$\sigma(v)$. Then, we have that $U_i(\sigma) = \p{1}{3}$. As there are~$2b$ nodes in~$Z$ and neither~$\sigma^{-1}(v)$ nor~$i$ are on a node in~$Z$, there is an empty node~$u \in Z$ and since $U_i(\sigma_{iu}) = \p{1}{2} > U_i(\sigma)$ agent~$i$ has an improving jump.
		Thus, $v$ is not empty and all red agents except for~$\sigma^{-1}(v)$ are segregated. For any equilibrium, it holds that at most the agent on~$v$ and the agents of a different color may be non-segregated, i.e., $\text{DoI}(\sigma) \leq b+1$. In \Cref{fig:balanced_si_game_pos} we present such a NE: All nodes in $A$ are occupied by red agents, all nodes in $B$ are empty and~$b$ nodes in $Z$ are blue. Clearly, no red agent can improve and any blue agent jumping to a node in $B$ will have a utility of either~$\p{1}{2}$ or~$\p{1}{3}$ which is not better than her current utility.
		Hence, $\text{PoS}(G,b,b,\Lambda) \geq \tfrac{2b}{b+1}$. Thus, for any~$\varepsilon > 0$, we can achieve a $PoS \geq 2-\varepsilon$ by choosing~$b$ large enough.
	\end{proof}
	\begin{figure}[t]
		\centering
		\includegraphics[width=0.6\textwidth]{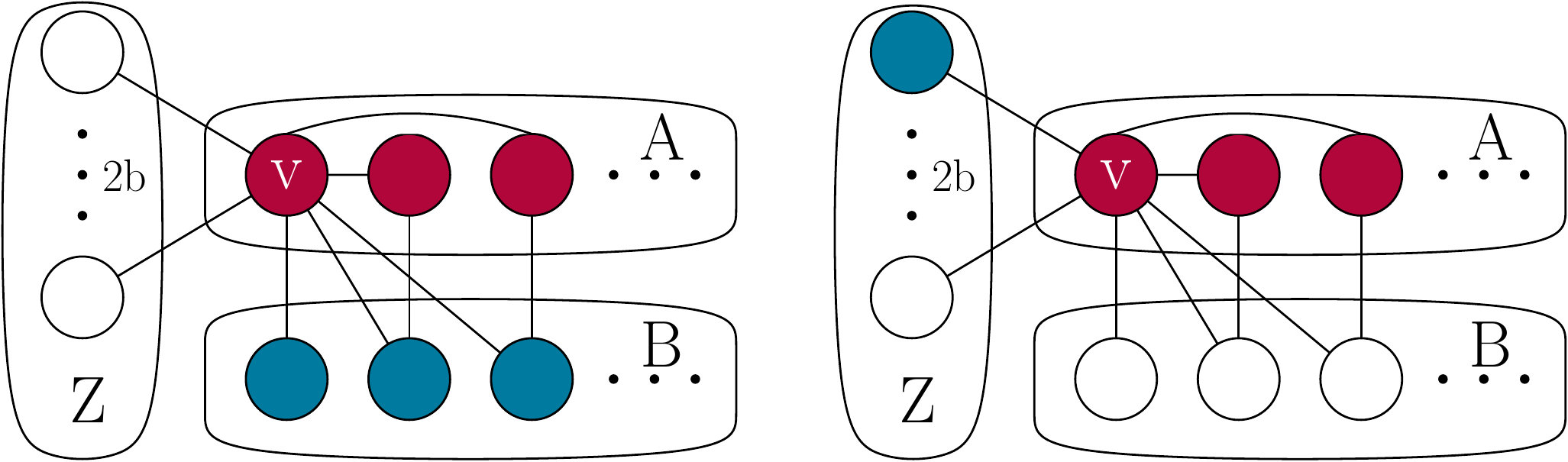}
		\caption{The PoS of a balanced \selfInclusiveGame~with $b$ agents per type. $A$ and $B$ contain $b$ nodes each. Left: Optimum. Right: Best NE, where $b$ nodes in $Z$ are occupied by blue agents.}
		\label{fig:balanced_si_game_pos}
	\end{figure}
	
	Earlier, in \Cref{theorem:independent_set}, we proved the existence of equilibria for graphs that have an independent set of size at least $b+e$. Now, we show that on such graphs, we can also bound the $PoS$.
	
	\begin{theorem}
		For any \selfInclusiveGame~$(G,r,b,\Lambda)$ with $b + e \leq \alpha(G)$, we have $\text{PoS}(G,r,b,\Lambda) \leq \tfrac{\maxdegree{G}+1}{\mindegree{G}+1}.$
		\label{theorem:self_inclusive_pos_independent_set_upper_bound}
	\end{theorem}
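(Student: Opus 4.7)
The plan is to exhibit an explicit NE $\sigma$ whose degree of integration is at least $(\mindegree{G}+1)\,b$, then combine this with the upper bound $\DoI{\sigma^*} \leq (\maxdegree{G}+1)\,b$ from \Cref{lemma:self_inclusive_poa_doi_upper_bound} to obtain
\[
\text{PoS}(G,r,b,\Lambda) \;\leq\; \frac{\DoI{\sigma^*}}{\DoI{\sigma}} \;\leq\; \frac{(\maxdegree{G}+1)\,b}{(\mindegree{G}+1)\,b} \;=\; \frac{\maxdegree{G}+1}{\mindegree{G}+1}.
\]

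First, I would invoke the existence construction of \Cref{theorem:independent_set} to obtain a NE $\sigma$ in which the $b$ blue agents sit on some subset $B$ of an independent set $I$ with $|I| = b+e$, while every node of $V \setminus I$ is occupied by a red agent. Since $I$ is independent, the non-empty neighbors of each blue agent lie in $V \setminus I$ and are therefore all red, so every blue agent is non-segregated; moreover every red agent adjacent to some node of $B$ is non-segregated as well. Hence $\DoI{\sigma} = |N[B]| = b + |N(B)|$, where $N(B) \subseteq V \setminus I$ denotes the set of red agents adjacent to a blue agent.

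The core combinatorial step is to argue that $B$ (and if necessary $I$) can be chosen so that the closed neighborhoods $\{v\} \cup N(v)$ for $v \in B$ are pairwise disjoint. If this is achievable, then $|N[B]| = \sum_{v \in B} (\degree{v}+1) \geq b(\mindegree{G}+1)$, giving the desired lower bound on $\DoI{\sigma}$. Plugging into the first display then closes the argument.

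The hard part will be justifying that such a spread placement $B$ always exists under the sole assumption $\alpha(G) \geq b+e$. For graphs like a star every two independent nodes share the same unique neighbor, so requiring pairwise-disjoint closed neighborhoods inside a single max-IS is too strong in general. I expect the proof to handle this either by a more careful double-counting that absorbs overlaps (charging each red agent in $N(B)$ to at most $\maxdegree{G}$ blue agents and thereby recovering the ratio) or by a case split in which, whenever a spread $B$ cannot be extracted from $I$, an alternative NE (e.g., with a blue agent placed on a high-degree node whose neighborhood dominates many reds, in the spirit of the alternative NE for the star discussed after \Cref{theorem:existence_many_leaves}) is constructed and seen to meet the same DoI bound. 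Pinning down this robustness step cleanly is the main obstacle.
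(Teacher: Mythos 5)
Your overall strategy is exactly the paper's: take the NE constructed in \Cref{theorem:independent_set}, lower-bound its degree of integration by $(\mindegree{G}+1)b$, upper-bound the optimum by $(\maxdegree{G}+1)b$ via \Cref{lemma:self_inclusive_poa_doi_upper_bound}, and divide. The difference is that you stop at the step you call the hard part, whereas the paper simply asserts in one sentence that the constructed NE $\sigma$ satisfies $\DoI{\sigma} \geq (\mindegree{G}+1)b$, with no argument for why the closed neighborhoods of the blue agents contribute disjointly.

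Your worry about that step is well-founded, and your star example makes it concrete: place the $b+e$ nodes of $I$ on leaves of a star, blue agents on $b$ of them, and red agents everywhere else. Every blue agent's only neighbor is the red center, so the non-segregated agents are exactly the $b$ blue agents plus the center, giving $\DoI{\sigma} = b+1$, which is strictly less than $(\mindegree{G}+1)b = 2b$ once $b \geq 2$. So the inequality the paper asserts for this particular NE is false in general; the union bound only yields $\DoI{\sigma} \geq b + \mindegree{G}$. (The theorem's conclusion can still hold in such instances — on the star a different NE, with a blue agent on the center, attains the optimal DoI — but that requires an argument the paper does not give, along the lines of the case split or amortized charging you sketch.) In short: you have not closed the gap, but the gap you identified is genuinely present in the paper's own proof, and any repair needs either a disjointness/charging argument for the neighborhoods of $B$ or a fallback NE construction for the overlapping case.
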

	\begin{proof}
		In \Cref{theorem:independent_set} we describe the construction of NE in which all blue agents are adjacent to only red agents. Therefore, it holds for the best NE~$\sigma$ that $\text{DoI}(\sigma) \geq (\mindegree{G}+1) b$. Furthermore, by \Cref{lemma:self_inclusive_poa_doi_upper_bound}, we have that for the optimal strategy profile~$\sigma^*$, it holds $\text{DoI}(\sigma^*) \leq b(\maxdegree{G}+1)$. Thus, 
		$\text{PoS}(G,r,b,\Lambda) \leq \tfrac{b(\maxdegree{G}+1)}{b(\mindegree{G}+1)} = \tfrac{\maxdegree{G}+1}{\mindegree{G}+1}.$
	\end{proof}
	In particular, this applies to $\delta$-regular graphs since $\maxdegree{G}= \mindegree{G} = \delta$. Note that for any~$\delta$-regular graph, we have $\alpha(G) \geq \tfrac{\abs{V}}{\delta+1} $.
	\begin{corollary}
		For any \selfInclusiveGame~on a $\delta$-regular graph with $b+e \leq \alpha(G)$, we have $\text{PoS}(G, r,b, \Lambda) = 1$.
		\label{corollary:self_inclusive_pos_independent_set_regular_upper_bound}
	\end{corollary}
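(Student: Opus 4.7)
The plan is to derive the statement as an immediate specialization of \Cref{theorem:self_inclusive_pos_independent_set_upper_bound}. Since $G$ is $\delta$-regular we have $\maxdegree{G} = \mindegree{G} = \delta$, and the hypothesis $b+e \leq \alpha(G)$ of the invoked theorem is exactly the hypothesis of this corollary, so substitution gives
\[
\text{PoS}(G,r,b,\Lambda) \;\leq\; \tfrac{\maxdegree{G}+1}{\mindegree{G}+1} \;=\; \tfrac{\delta+1}{\delta+1} \;=\; 1.
\]

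For the matching lower bound, I would simply invoke the definition: $\sigma^*$ is chosen to maximize $\text{DoI}$ over \emph{all} strategy profiles, so $\text{DoI}(\sigma^*) \geq \max_{\sigma \in \text{NE}(G,r,b,\Lambda)} \text{DoI}(\sigma)$, and therefore $\text{PoS}(G,r,b,\Lambda) \geq 1$ whenever the ratio is well defined. Well-definedness itself is guaranteed by the hypothesis $b+e \leq \alpha(G)$, which already yields $\text{NE}(G,r,b,\Lambda) \neq \emptyset$ through the explicit construction in \Cref{theorem:independent_set}. Combining the two inequalities produces equality.

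There is essentially no obstacle. The only point worth unpacking is why the degree-ratio bound becomes tight here: the proof of \Cref{theorem:self_inclusive_pos_independent_set_upper_bound} constructs a NE $\sigma$ with $\text{DoI}(\sigma) \geq (\mindegree{G}+1)b$, while \Cref{lemma:self_inclusive_poa_doi_upper_bound} caps the optimum at $\text{DoI}(\sigma^*) \leq (\maxdegree{G}+1)b$. On a $\delta$-regular graph both quantities coincide with $(\delta+1)b$, so the NE built from the max-deg independent set is itself DoI-optimal and the PoS collapses to its minimum possible value of $1$.
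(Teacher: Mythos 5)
Your proof is correct and follows essentially the same route as the paper: the corollary is obtained by specializing \Cref{theorem:self_inclusive_pos_independent_set_upper_bound} with $\maxdegree{G}=\mindegree{G}=\delta$ and combining with the trivial lower bound $\text{PoS}\geq 1$, whose well-definedness rests on the equilibrium existence from \Cref{theorem:independent_set}. One cosmetic slip: the NE in question is built from an ordinary independent set of size $b+e$ (as in \Cref{theorem:independent_set}), not a max-deg independent set, though on a regular graph the two notions coincide so nothing breaks.
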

	
	Furthermore, in \Cref{theorem:rings_self_inclusive_e1_convergence}, we prove that any game on a ring with $\Lambda = \tfrac{1}{2}$ and $e=1$ converges to a NE by proving that $\text{DoI}(\sigma)$ is an ordinal potential function. It follows that every strategy profile that maximizes the degree of integration must be a NE.
	\begin{corollary}
		For any \selfInclusiveGame~$(G,r,b, \tfrac{1}{2})$ on a ring with $e=1$, we have $\text{PoS}(G,r,b,\tfrac{1}{2}) = 1$.
	\end{corollary}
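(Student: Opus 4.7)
The plan is to leverage \Cref{theorem:rings_self_inclusive_e1_convergence} directly: since that theorem establishes that on a ring with $e=1$ and $\Lambda = \tfrac{1}{2}$, any improving jump strictly increases the degree of integration (indeed by at least $1$), the DoI serves as a strict ordinal potential. I would then combine this monotonicity with a short optimality argument to conclude that every DoI-maximizer is itself a NE.

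Concretely, I would proceed as follows. First, let $\sigma^*$ be any strategy profile maximizing $\DoI{\sigma^*}$ among all strategy profiles (which is what the PoS numerator refers to). Suppose for contradiction that $\sigma^*$ is not a NE. Then there exists an agent $i$ and an empty node $v \in \emptynodes{\sigma^*}$ such that the jump from $\sigma^*$ to $\sigma^*_{iv}$ is improving, i.e., $U_i(\sigma^*) < U_i(\sigma^*_{iv})$. By \Cref{theorem:rings_self_inclusive_e1_convergence}, every improving jump in this setting satisfies $\DoI{\sigma^*_{iv}} \geq \DoI{\sigma^*} + 1$, contradicting the maximality of $\DoI{\sigma^*}$. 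Hence $\sigma^*$ itself is a NE, so $\sigma^* \in \text{NE}(G,r,b,\tfrac{1}{2})$.

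Finally, since the maximum DoI over all strategy profiles is achieved by a NE, it is also the maximum over $\text{NE}(G,r,b,\tfrac{1}{2})$, giving $\max_{\sigma \in \text{NE}(G,r,b,\tfrac12)} \DoI{\sigma} = \DoI{\sigma^*}$ and therefore $\text{PoS}(G,r,b,\tfrac{1}{2}) = 1$.

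This argument is essentially a one-liner once \Cref{theorem:rings_self_inclusive_e1_convergence} is in hand, so I do not anticipate any real obstacle; the only subtlety is the standard but important observation that a strict ordinal potential being maximized globally implies the absence of improving moves, which is the key bridge from convergence of improving response dynamics to the optimality of the best NE.
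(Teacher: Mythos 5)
Your argument is exactly the paper's: the paper also derives this corollary from \Cref{theorem:rings_self_inclusive_e1_convergence} by noting that since every improving jump strictly increases the DoI, any strategy profile maximizing the DoI admits no improving jump and is therefore a NE, which immediately gives $\text{PoS}=1$. The proposal is correct and complete.
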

	
	\subsection{Quality of Equilibria with Respect to the Utilitarian Welfare}
	
	While our main focus in this work is on the quality of equilibria with respect to the degree of integration as social welfare, we close this section by pointing out, that our results on the PoA and PoS with respect to the degree of integration also imply bounds on the PoA and the PoS with respect to the standard utilitarian welfare ($\text{PoA}^{U}$ and $\text{PoS}^{U}$ for short), assuming that $p$ is linear. Remember, that the utilitarian social welfare simply is the sum over the utilities of all the agents. 
	
	In particular, for a fixed peak $\Lambda$ and a fixed maximum degree $\delta$, a constant bound on PoA yields a constant bound on $\text{PoA}^{U}$, as the following theorem demonstrates. 
	
	\begin{restatable}{theorem}{theoremsixone}\label{theorem:self_inclusive_poa_doi_implies_poa_sum_of_utilities} 
		Let $p$ be a linear function. For any \selfInclusiveGame\ $\Gamma=(G,r,b,\Lambda)$, the following holds:
		\begin{itemize}
			\item $\text{PoA}(\Gamma) \leq a \Rightarrow PoA^{U}(\Gamma) \leq a \cdot \max(\Lambda, (1-\Lambda)) \cdot (\maxdegree{G}+1).$
			\item $\text{PoS}(\Gamma) \leq s \Rightarrow PoS^{U}(\Gamma) \leq s \cdot \max(\Lambda, (1-\Lambda)) \cdot (\maxdegree{G}+1).$
		\end{itemize}
		For the PoA, this bound is asymptotically tight, i.e., $PoA^{U}(G,b,b,\tfrac{1}{2}) = \text{PoA}(G,b,b,\tfrac{1}{2}) \cdot \tfrac{1}{2} \cdot \maxdegree{G}$ holds.
	\end{restatable}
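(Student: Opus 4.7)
The plan is to sandwich the utilitarian welfare $\text{SU}(\sigma) = \sum_i U_i(\sigma)$ between two scalings of $\DoI{\sigma}$, and then substitute the resulting bounds into the definitions of $\text{PoA}^{U}$ and $\text{PoS}^{U}$. For the upper side of the sandwich I use that $p$ attains its global maximum $p(\Lambda)=1$, so $U_i(\sigma)\leq 1$ for every agent, and that segregated agents contribute $p(1)=0$; together these yield $\text{SU}(\sigma)\leq\DoI{\sigma}$. For the lower side, every non-segregated agent $i$ has closed-neighborhood size at most $\maxdegree{G}+1$ and includes herself in the count, so $f_i(\sigma)\in\bigl[\tfrac{1}{\maxdegree{G}+1},\,\tfrac{\maxdegree{G}}{\maxdegree{G}+1}\bigr]$. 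Since $p$ is piecewise linear with slopes $1/\Lambda$ on $[0,\Lambda]$ and $-1/(1-\Lambda)$ on $[\Lambda,1]$, its minimum over that sub-interval is attained at an endpoint; a short three-case analysis based on the relative position of $\Lambda$ shows the minimum is at least $1/((\maxdegree{G}+1)\max(\Lambda,1-\Lambda))$. Summing over the non-segregated agents yields
\[
  \text{SU}(\sigma) \geq \frac{\DoI{\sigma}}{(\maxdegree{G}+1)\max(\Lambda,1-\Lambda)}.
\]

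Both claimed inequalities then follow by substituting this sandwich. Let $\sigma^*$ denote a $\text{DoI}$-optimum. Any SU-optimum has welfare at most $\DoI{\sigma^*}$ by the upper sandwich and the $\text{DoI}$-optimality of $\sigma^*$. Applying the lower sandwich to the worst NE gives $\text{PoA}^{U}(\Gamma)\leq\text{PoA}(\Gamma)\cdot(\maxdegree{G}+1)\max(\Lambda,1-\Lambda)$, which combined with the hypothesis $\text{PoA}(\Gamma)\leq a$ yields the first bullet. For the PoS bullet, I apply the lower sandwich specifically to the NE that maximises $\text{DoI}$; its SU is then at least $\max_{\sigma\in NE}\DoI{\sigma}/((\maxdegree{G}+1)\max(\Lambda,1-\Lambda))$, and dividing into the SU-optimum (which is at most $\DoI{\sigma^*}$) gives the second bullet.

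For the asymptotic tightness at $\Lambda=\tfrac{1}{2}$, I need a balanced family of games realising both extremes of the sandwich simultaneously: a profile with $\text{SU}=n$ in which every agent attains $f_i=\tfrac{1}{2}$ (utility $1$), together with a worst NE whose non-segregated agents each sit in a closed neighborhood of size exactly $\maxdegree{G}+1$ with only a single opposite-color neighbor, pinning each such utility to the minimum $2/(\maxdegree{G}+1)$. I would start from the balanced gadget of \Cref{theorem:self_inclusive_poa_balanced_lower_bound_general} and pad the non-segregated NE agents' neighborhoods with extra same-color agents embedded in monochromatic sub-gadgets, while preserving balance and ensuring the padded agents remain segregated. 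The main technical obstacle is verifying NE stability after the padding: every empty node reachable by a non-segregated agent must offer utility at most $2/(\maxdegree{G}+1)$, which requires every empty node to sit either in a monochromatic neighborhood (utility $0$) or in a similarly imbalanced large neighborhood. Once stability is established, the NE welfare drops to $\Theta(\DoI{\sigma^{NE}}/\maxdegree{G})$ while the SU-optimum remains $n$, yielding $\text{PoA}^{U}\sim\text{PoA}(\Gamma)\cdot(\maxdegree{G}+1)/2$ as $b\to\infty$, matching the upper bound asymptotically.
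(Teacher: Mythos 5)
Your proof of the two implications is correct and follows essentially the same route as the paper: sandwich the utilitarian welfare between $\DoI{\sigma}$ (since every utility is at most $p(\Lambda)=1$ and segregated agents contribute $p(1)=0$) and $\DoI{\sigma}/\bigl((\maxdegree{G}+1)\max(\Lambda,1-\Lambda)\bigr)$ (since a non-segregated agent has $f_i(\sigma)\in[\tfrac{1}{\maxdegree{G}+1},\tfrac{\maxdegree{G}}{\maxdegree{G}+1}]$ and the linear peak function is minimized at an endpoint), then compare the welfare-optimal profiles and the relevant equilibria against the DoI-optimal ones. This part needs no changes.

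The tightness claim, however, is a genuine gap in your write-up. You do not exhibit a construction; you describe a plan to pad the gadget of \Cref{theorem:self_inclusive_poa_balanced_lower_bound_general} and you yourself flag that NE stability after padding is an unresolved obstacle. That starting point is also poorly suited: in that gadget's bad equilibrium the non-segregated blue agents each have a single red neighbor in a closed neighborhood of size two, so they sit at the \emph{peak} utility $1$, which is the opposite of what you need; forcing them down to the minimum positive utility $\tfrac{2}{\maxdegree{G}+1}$ requires attaching $\maxdegree{G}-1$ same-color neighbors to each of them, which changes the degree sequence, the agent counts, and the entire stability analysis — at that point you are building a new instance, not padding an old one. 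The paper instead gives a direct construction: a clique on $b+1$ nodes holding all $b$ red agents and one blue agent (so every clique agent has $f_i=\tfrac{b}{b+1}$ or $\tfrac{1}{b+1}$ and utility $\p{1}{b+1}=\tfrac{2}{b+1}$), an empty node $v$ adjacent to the blue clique agent and to $b-1$ blue degree-one agents (which are therefore segregated with utility $0$), and a disjoint-in-placement path of $3b$ empty nodes on which the social optimum pairs all agents up at utility $1$. Stability is immediate because $v$ is the only empty node adjacent to any agent and every jump to it yields utility $\tfrac{2}{b+1}$ or $0$. This gives $\sum_i U_i(\sigma)=2$ against an optimum of $n=2b$ and hence the exact equality $\text{PoA}^U = \text{PoA}\cdot\tfrac{1}{2}\cdot\maxdegree{G}$ claimed in the statement, which your sketch does not yet establish.
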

	\begin{proof}
		Any non-segregated agent has a utility larger than zero. The lowest possible positive utility is bounded by the maximum degree of the graph and is $\p{1}{\maxdegree{G}+1} = \tfrac{1}{\Lambda\cdot (\maxdegree{G}+1)}$ for $\Lambda \geq \tfrac{1}{2}$, respectively $\p{\maxdegree{G}}{\maxdegree{G}+1} = \tfrac{1}{(1-\Lambda)\cdot (\maxdegree{G}+1)}$ for $\Lambda \leq \tfrac{1}{2}.$ Hence, the ratio between the worst possible utility and the highest possible utility of a non-segregated agent is $\max(\Lambda, (1-\Lambda)) \cdot (\maxdegree{G}+1)$. For the sake of readability, let $\max(\Lambda, (1-\Lambda)) = m_{\Lambda}$.
		
		Let~$\sigma$ be the worst NE with respect to the sum of utilities and~$\sigma'$ be the worst NE with respect to the DoI. Thus, it holds that $\text{DoI}(\sigma) \geq \text{DoI}(\sigma')$. Hence, it follows that
		$$\sum_i U_i(\sigma) \geq \tfrac{\text{DoI}(\sigma)}{m_{\Lambda} \cdot (\maxdegree{G}+1)} \geq \tfrac{\text{DoI}(\sigma')}{m_{\Lambda} \cdot (\maxdegree{G}+1)}.$$
		Let~$\sigma^*$ be the best strategy profile with respect to the sum of utilities and~$\sigma^{*'}$ be the best strategy profile with respect to the DoI. This means that $\text{DoI}(\sigma^{*'}) \geq \text{DoI}(\sigma^*)$ and therefore $$\sum_i U_i(\sigma^*) \leq \text{DoI}(\sigma^*) \leq  \text{DoI}(\sigma^{*'}).$$
		
		It holds that
		\begin{align*}
			\text{PoA}^U(\Gamma) &= \tfrac{\sum_i U_i(\sigma^*)}{\sum_i U_i(\sigma)} 
			\leq  \tfrac{\text{DoI}(\sigma^{*\prime}) m_{\Lambda} \cdot (\maxdegree{G}+1)}{\text{DoI}(\sigma')} 
			= \text{PoA}(\Gamma) \cdot m_{\Lambda} \cdot (\maxdegree{G}+1).
		\end{align*}
		Let~$\sigma$ be the best NE with respect to the sum of utilities and~$\sigma'$ be the best NE with respect to the DoI. It holds that $$\sum_i U_i(\sigma) \geq \sum_i U_i(\sigma') \geq \tfrac{\text{DoI}(\sigma')}{m_{\Lambda} \cdot (\maxdegree{G}+1)}.$$
		This also applies to the PoS.
		\begin{align*}
			\text{PoS}^U(\Gamma) &= \tfrac{\sum_i U_i(\sigma^*)}{\sum_i U_i(\sigma)} 
			\leq  \tfrac{\text{DoI}(\sigma^{*\prime}) m_{\Lambda} \cdot (\maxdegree{G}+1)}{\text{DoI}(\sigma')} 
			\leq \text{PoS}(\Gamma) \cdot m_{\Lambda} \cdot (\maxdegree{G}+1).
		\end{align*}
		It remains to show that the bound for the PoA is asymptotically tight.
		For this, consider the balanced game $(G,b,b,\tfrac{1}{2})$ as illustrated in \Cref{fig:poa_sum_of_utilities}.
		\begin{figure}[h]
		\centering
		\includegraphics[width = 0.8\textwidth]{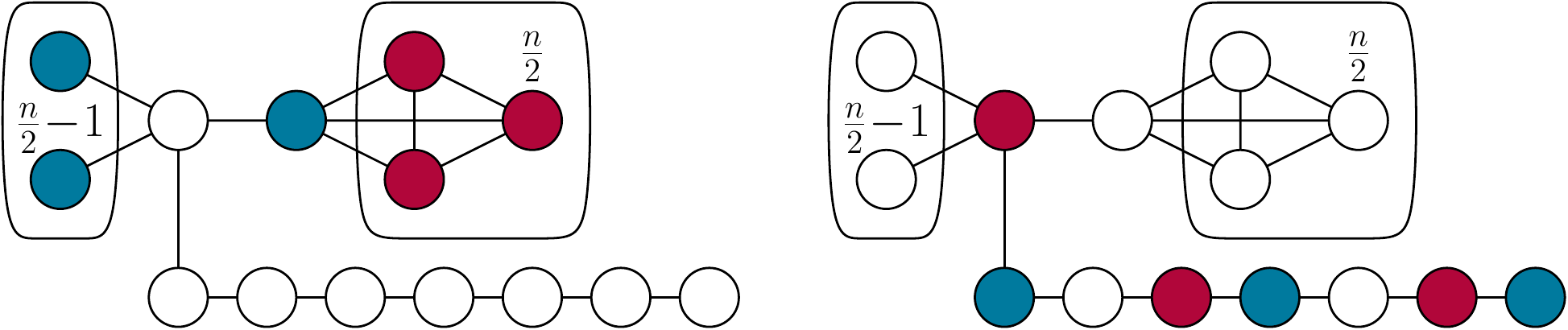}
		\caption{A \selfInclusiveGame~$(G,b,b,\tfrac{1}{2})$ for which the $\text{PoA}^U(G,b,b,\tfrac{1}{2})$ with respect to the sum of utilities (assuming that~$p$ is a linear function) is $\tfrac{n(\maxdegree{G}+1)}{2(b+1)}$. Note that $\maxdegree{G}$ is~$b+1$. Left: A NE in which all red agents have utility $ \p{b}{b+1} =  \p{1}{\maxdegree{G}+1}$. One blue agent also has this utility while all other blue agents have utility zero.
			Right: A social optimum with respect to the sum of utilities. All agents have the highest possible utility $\p{1}{2}=1$.}
		\label{fig:poa_sum_of_utilities}
	\end{figure}
	The graph~$G$ has a path of length~$3b$, on which we can place the agents in pairs of two, therefore, in the optimal strategy profile $\sigma^*$, it holds for all agents $i$ that $U_i(\sigma^*) = 1$, i.e., $\text{DoI}(\sigma^*) = \sum_i U_i(\sigma^*) = n$. However, there is a NE~$\sigma$ in which the sum of utilities matches the bounds we derived earlier. The graph~$G$ has a clique with~$b+1$ nodes, on which we place all red agents and one blue agent. The node of the blue agent is connected to an empty node~$v$, which is adjacent to~$b-1$ blue agents on nodes of degree one and the path. Note that therefore~$v$ is the only empty node adjacent to any agent. Hence, no blue agent has an improving jump. Furthermore, each red agent~$i$ has a utility of $U_i(\sigma) = \p{b}{b+1} = \p{1}{\maxdegree{G}} = U_i(\sigma_{iv})$ and therefore no improving jump. Thus, we have that~$\sigma$ is a NE. As the blue agent in the clique has the same utility as the red agents, we have $\sum U_i(\sigma) = (b+1) \cdot \p{1}{\maxdegree{G}} = \p{2(b+1)}{\maxdegree{G}}$. Hence, it follows that $\text{PoA}^U(G,b,b,\tfrac{1}{2}) = \tfrac{n \cdot \maxdegree{G}}{2(b+1)}$ while it holds that $\text{PoA}(G,b,b,\tfrac{1}{2}) = \tfrac{n}{b+1}$, giving us that $$\text{PoA}^U(G,b,b,\tfrac{1}{2}) = \text{PoA}(G,b,b,\tfrac{1}{2}) \cdot m_{\Lambda} \cdot \maxdegree{G}. \qedhere$$
	\end{proof}

	\section{Computational Complexity}
	In this section we discuss the computational complexity of finding equilibria via improving response dynamics and the complexity of computing strategy profiles with a high DoI. As already pointed out in Our Contribution, we believe that especially the former question is particularly interesting, since finding equilibria via improving moves can be easily coordinated within a society of selfish agents. In contrast, centrally switching from some initial state directly to an equilibrium state requires much more coordination and also that the agents trust the central coordinator. 
	
	Settling the complexity of the equilibrium decision problem seems to be very challenging and we leave this as an open problem. However, our hardness proof for finding equilibria via improving response dynamics can be seen as a first step towards proving that deciding the existence of equilibria is NP-hard as well. Moreover, we note in passing that if we would allow for stubborn agents, as in~\citep{A+19}, then we can prove that deciding if an equilibrium exists is indeed NP-hard. We suspect that this assumption may be removed, similarly to the approach of \cite{KBFN22}.
	
	\subsection{{Finding Equilibria via Improving Response Dynamics}}
	
	We start with investigating the problem of finding equilibria. To this end, we consider the problem of deciding whether an equilibrium for a given game can be reached through \emph{improving response dynamics (IRDs)} from a given initial strategy profile~$\sigma_0$.
	We show that this problem is NP-hard for any value of $\Lambda$. For the sake of presentation, we start with proving hardness for the case $\Lambda = \frac12$.

	\subsubsection{Hardndess for $\Lambda = \frac12$}

	We show the hardness of finding equilibria through IRDs by a reduction from the NP-complete problem \textsc{Double 4-SAT}.
	\begin{definition}[\textsc{Double 4-SAT}]
		Given a SAT formula in which each clause consists of~4 literals, decide if there is an assignment in which at least~2 literals in each clause are true.
	\end{definition}
	\textsc{Double 4-SAT} is NP-complete by a reduction from \textsc{3-SAT} (\cite{Karp1972}). 
	Let~$\varphi$ be an instance of \textsc{Double 4-SAT} with variables $x_1, \dots, x_k$ ($k \geq 3$) and clauses $c_1, \dots, c_m$.
	We define $\Gamma_{\varphi} = (G, r, b, \tfrac{1}{2})$ as a corresponding game and $\sigma^{\varphi}_{0}$ as its initial strategy profile.  
	
		We first provide a detailed description of the construction used for the reduction. Consider \Cref{fig:np_hardness_1_2_lambda} for an illustration.
	\begin{definition}
		Let $\varphi$ be an instance of \textsc{Double 4-SAT} with variables $x_1, \dots x_k$ ($k \geq 3$) and clauses $c_1, \dots c_m$.
		We define $\Gamma_{\varphi} = (G, r, b, \tfrac{1}{2})$ as a corresponding game and $\sigma^{\varphi}_{0}$ as its initial strategy profile. The graph $G$ is constructed in the following way.
		\begin{itemize}
			\item Let $z = 11\cdot 2 k + 10m + 3mk$. The graph has a clique $Z$ with $2z$ nodes, split into two disjoint subsets $Z_R, Z_B$ of size $z$ each.
			\item For every variable $x_i$, there is a pair of adjacent nodes $x_i$ and $\overline{x_i}$. Let $X$ be the set of these nodes. Each of it is adjacent to 5 nodes in $Z_B$ and 11 nodes in $Z_R$.
			\item There is a clique $C$ of $m$ nodes, each node $c_i \in C$ corresponds to one clause $c_i$ in $\varphi$. Each of these nodes is adjacent to 5 nodes in $Z_B$ and 10 nodes in $Z_R$. Furthermore, a node $c_i$ is adjacent to the nodes in $X$ corresponding to the literals in the clause $c_i$.
			\item For each node $c_i \in C$, there is a group $Y_i$ of $k$ nodes adjacent to $c_i$. Let $Y$ be the set of all these nodes and let each node $y \in Y$ be adjacent to 3 nodes in $Z_B$.
			\item Each node in $Z$ is adjacent to at most one node outside $Z$. We have chosen $z$ sufficiently high.
		\end{itemize}
		Let $r = z+k$ and $b = z$. For the initial placement $\sigma^{\varphi}_{0}$, we have that $\sigma_0^{-1}(v)$ is a red agent if and only if $v \in Z_R \cup Y_0$ and a blue agent if and only if $v \in Z_B$.
	\end{definition}
	
	\begin{figure}[t]
		\centering
		\includegraphics[width = 0.95\textwidth]{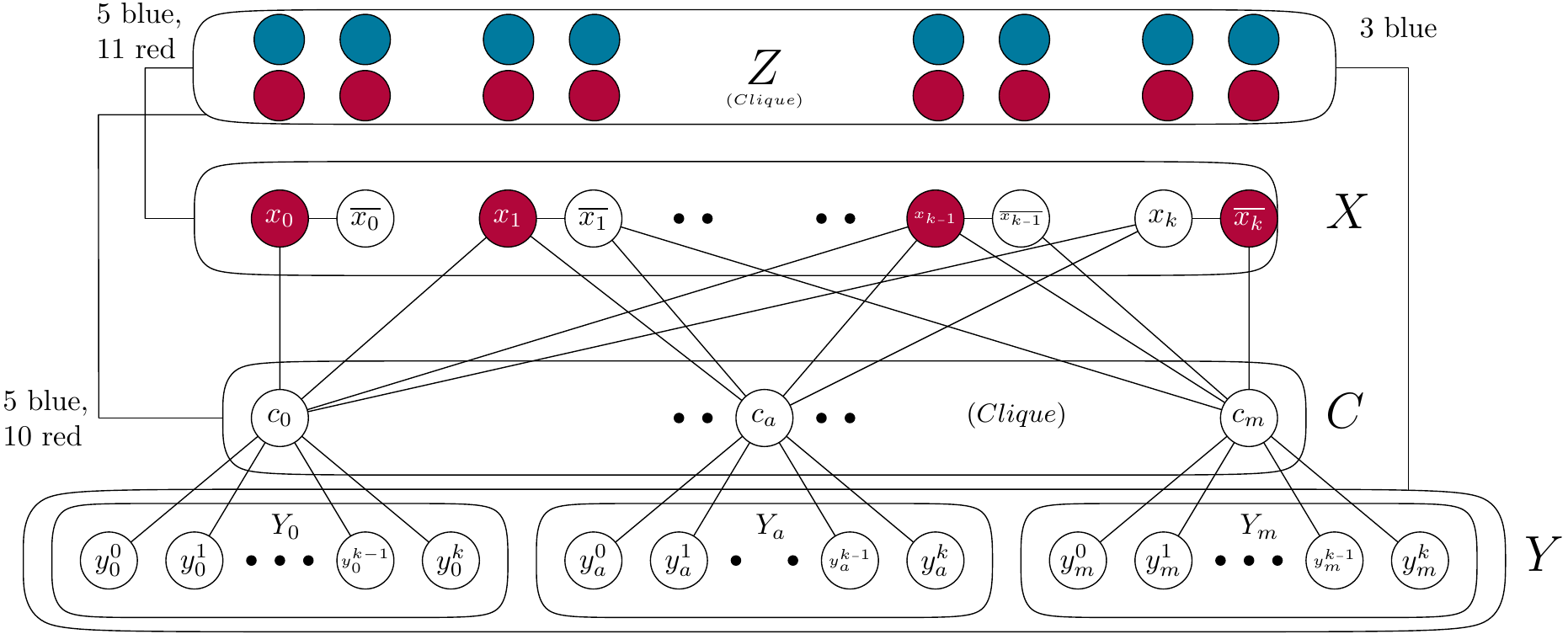}
		\caption{The \selfInclusiveGame~$\Gamma(\varphi)$ corresponding to an instance~$\varphi$ of the \textsc{Double 4-SAT} Problem with~$k$ variables and~$m$ clauses.
			The nodes in the set~$X$ correspond to the literals and the nodes in~$C$ to the clauses. $Z$ is a clique of $z$ red and $z$ blue agents with $z = 22 k + 10m + 3mk$.
			Each node in $X$ is connected to $11$ nodes of $Z_R$ and $5$ nodes of $Z_B$. Each node in the clique $C$ is connected to $10$ of $Z_R$ and $5$ of $Z_B$. Each set of nodes $Y_a$ contains $k$ nodes each that are connected to $c_a \in C$ and $3$ nodes in $Z_B$. Each node in $Z$ is only adjacent to at most one node outside of $Z$.
			In the initial strategy profile $\sigma^\varphi_0$, all red agents start on $Z_R$ and $Y_0$ and all blue agents on $Z_B$.}
		\label{fig:np_hardness_1_2_lambda}
	\end{figure}

	We start with a few observations that hold for any strategy profile $\sigma$ for which all nodes in $Z_R$ are occupied by red agents and all nodes in $Z_B$ are occupied by blue agents. In particular, this implies that any agent outside of $Z$ is red and any agent $i$ located on a node in $X$ or $C$ is adjacent to more red than blue agents. Hence, $f_i(\sigma) > \tfrac{1}{2}$. Thus, the more adjacent red agents outside $Z$ an agent occupying a node in $X$ and $C$, respectively, has, the lower is her utility.

	 Hence, under the assumption that all nodes in $Z_R$ are occupied by red agents and all nodes in $Z_B$ are occupied by blue agents, it holds for an agent $i$ that
	 \begin{enumerate}
	 	\item if $\sigma(i) \in X$, $i$ has a utility of at most $U^{X}_{\max} = \p{12}{17}$,
	 	\item if $\sigma(i) \in C$, $i$ has a utility of at most $U^{C}_{\max} = \p{11}{16}$, and
	 	\item  if $\sigma(i) \in Y_a$, $i$ has a utility of $U^{Y}_{\max} = \p{2}{5}$ if $c_a$ is occupied and $U^{C}_{\min} =\p{1}{4}$ otherwise.
	 \end{enumerate}
	Consequently, we have $1 > U^{Y}_{\max} > U^{C}_{\max} > U^{X}_{\max} > U^{Y}_{\min}$. Note, that $U^{X}_{\max}$ is the second highest utility obtainable on nodes in~$C$.
	We can show that agents starting on nodes in $Z$ have a higher utility than they could achieve by jumping to a node outside of $Z$.
	
	\begin{restatable}{lemma}{lemmafivethree}
		Let $\sigma$ be a strategy profile that is identical to the initial placement $\sigma^{\varphi}_{0}$ on all nodes in $Z$. No agent on a node in $Z$ has an improving jump.
		\label{lemma:hardness_1_2_no_improving_jump_z_NE}
	\end{restatable}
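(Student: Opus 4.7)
The plan is to argue that every $Z$-agent in $\sigma$ already enjoys a utility extremely close to the peak value $p(\tfrac{1}{2})=1$, while any empty node outside $Z$ yields a utility bounded away from $1$ by an absolute constant. Since $z = 22k + 10m + 3mk$ is large compared to $k$ and $m$, this gap will rule out every improving jump.

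First I would pin down the structure of $\sigma$. Because $\sigma$ agrees with $\sigma^{\varphi}_{0}$ on $Z$ and $b = z$, all blue agents must sit in $Z_B$, so every agent outside $Z$ must be red. Using that $Z$ is a clique whose $2z$ nodes are all occupied and that each $Z$-node has at most one neighbour outside $Z$, I can compute the closed neighbourhood of any $Z$-agent $i$ directly: inside $Z$ it contains $z-1$ same-colour and $z$ opposite-colour agents, so together with $i$ itself the counts are $z$ and $z$; outside $Z$ the agent has at most one further neighbour, and that neighbour must be red. Hence $f_i(\sigma) \in \{\tfrac{1}{2},\tfrac{z}{2z+1},\tfrac{z+1}{2z+1}\}$, which in every case lies within $\tfrac{1}{2(2z+1)}$ of $\tfrac{1}{2}$.

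Next I would bound $U_i(\sigma_{iv})$ for every empty $v \notin Z$ by a case analysis on the type of $v$ and the colour of $i$. The common idea is that $v$'s fixed number of $Z_B$- and $Z_R$-neighbours dominates the few possible non-$Z$-neighbours of $v$, so the resulting fraction $f_i(\sigma_{iv})$ is forced far from $\tfrac{1}{2}$. For $v \in X$, a red jumper sees at least $11$ red and exactly $5$ blue from $Z$ (yielding $f \geq \tfrac{11}{16}$), while a blue jumper sees at most $6$ blue against at least $10$ red (so $f \leq \tfrac{6}{16}$). For $v \in C$, the analogous extremes are $f \geq \tfrac{2}{3}$ and $f \leq \tfrac{3}{8}$. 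For $v \in Y$, the tiny degree of $v$ forces $f \in \{\tfrac{1}{4},\tfrac{2}{5}\}$ for a red jumper and $f \geq \tfrac{3}{4}$ for a blue jumper. Minimising over all subcases gives $|f_i(\sigma_{iv}) - \tfrac{1}{2}| \geq \tfrac{1}{10}$.

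Combining the two estimates, an improving jump would require $\tfrac{1}{2(2z+1)} \geq \tfrac{1}{10}$, i.e.\ $z \leq 2$, which fails for the value of $z$ prescribed by the reduction (easily $z \geq 85$ whenever $k \geq 3$). Because $p$ is strictly increasing on $[0,\tfrac{1}{2}]$ and symmetric about $\tfrac{1}{2}$, a smaller distance of $f_i$ to $\tfrac{1}{2}$ yields a strictly larger utility, so $U_i(\sigma) > U_i(\sigma_{iv})$ for every empty $v \notin Z$. I expect the main nuisance to be the blue case of the jump-utility step: one must track whether $i$'s own starting node lies among the $Z_B$-neighbours of $v$ and remember that any further red neighbour of $v$ outside $Z$ only pushes $f$ further from $\tfrac{1}{2}$; once these observations are recorded, every subcase slots neatly into the pattern above.
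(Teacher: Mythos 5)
Your proposal is correct and follows essentially the same route as the paper's proof: both pin the utility of every $Z$-agent at distance at most $\tfrac{1}{2(2z+1)}$ from the peak fraction $\tfrac12$, bound the fraction at any empty node in $X\cup C\cup Y$ away from $\tfrac12$ by a constant (the binding case being a red agent on a $Y$-node with $f=\tfrac{2}{5}$), and conclude from the size of $z$. The one ``nuisance'' you flag -- whether $\sigma(i)$ is itself a $Z$-neighbour of $v$ -- is dispatched in the paper by observing that this would force $f_i(\sigma)=\tfrac12$ and hence $U_i(\sigma)=1$, and your bounds are robust to it in any case.
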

	\begin{proof}
		Assume for the sake of contradiction that $i$ is an agent with $\sigma(i) \in Z$ that has an improving jump to node $v$. Because $\sigma$ is identical to $\sigma^{\varphi}_{0}$ with respect to $Z$, $v$ must be in $X \cup C \cup Y$ and all nodes in $ X \cup C \cup Y$ must either empty or occupied by a red agent. Furthermore $\sigma(i)$ can be adjacent to at most one node $u \not\in Z$.
		We begin with the observation that $v$ cannot be $u$, since if $u$ is empty, we have that $U_i(\sigma) = \p{1}{2}=1$, contradicting the existence of an improving jump.
		Consequently, node~$v$ cannot be adjacent to $\sigma(i)$ and node~$u$ is occupied by a red agent.
		
		If $i$ is red, we have $U_i(\sigma) = \p{z+1}{2z+1}$, yet it holds that $U_i(\sigma_{iv}) \leq U^{Y}_{\max} = \p{2}{5}$. Since $z > 2$, we have that $|\tfrac{1}{2} - \tfrac{2}{5}| > |\tfrac{z+1}{2z+1} - \tfrac{1}{2}|$ and thus $U_i(\sigma_{iv}) < U_i(\sigma)$.
		
		If $i$ is blue, $U_i(\sigma) = \p{z}{2z+1}$. The highest utility for the blue agent $i$ on a node in $X$ is $\p{6}{16}$, the highest utility on a node in $C$ is $\p{6}{17}$ and the highest utility on a node in $Y$ is $\p{4}{5}$, so overall $U_i(\sigma_{iv}) \leq \p{6}{16}$. Since $z \geq 2$, it holds that $|\tfrac{1}{2} - \tfrac{6}{16}| > |\tfrac{1}{2} - \tfrac{z}{2z+1}|$ and therefore $U_i(\sigma_{iv}) < U_i(\sigma)$.
	\end{proof}
	Hence, all agents placed on $Z$ behave like stubborn agents, i.e., they do not jump.
	\begin{corollary}
		Starting from $\sigma^{\varphi}_{0}$, every NE reached through improving response dynamics must be identical to~$\sigma^{\varphi}_{0}$ on all nodes in~$Z$.
		\label{corollary:hardness_1_2_stubborn_agent_replacement}
	\end{corollary}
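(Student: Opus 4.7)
The plan is to derive the corollary by a straightforward induction on the length of an improving response sequence starting at $\sigma^{\varphi}_{0}$, using the preceding \Cref{lemma:hardness_1_2_no_improving_jump_z_NE} as the inductive step. The key points to verify are (i) at the start, every node of $Z$ is occupied and the configuration on $Z$ matches $\sigma^{\varphi}_{0}$ by construction, and (ii) this property is preserved by every improving jump, so it continues to hold when the dynamics terminate in a NE.

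First, I would argue that in the initial profile all nodes of $Z$ are occupied: by the definition of $\Gamma_{\varphi}$, we have $r + b = 2z + k$ agents, placed exactly on $Z_R \cup Z_B \cup Y_0$, so the empty nodes of $\sigma^{\varphi}_{0}$ all lie in $X \cup C \cup (Y \setminus Y_0)$. In particular, no node of $Z$ is empty in $\sigma^{\varphi}_{0}$.

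For the inductive step, consider any profile $\sigma$ reached from $\sigma^{\varphi}_{0}$ via improving response dynamics that still agrees with $\sigma^{\varphi}_{0}$ on $Z$, so in particular every node of $Z$ is still occupied (and by an agent of the colour prescribed by $\sigma^{\varphi}_{0}$). By \Cref{lemma:hardness_1_2_no_improving_jump_z_NE}, no agent currently located on a node of $Z$ has an improving jump. Hence the next improving jump, if any, is performed by an agent on a node outside $Z$; moreover its target is an empty node, which by our assumption cannot lie in $Z$. Consequently the resulting profile $\sigma_{iv}$ still coincides with $\sigma^{\varphi}_{0}$ on $Z$, and the invariant is maintained. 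An easy induction then shows that every profile in the IRD — including any NE eventually reached — is identical to $\sigma^{\varphi}_{0}$ on $Z$.

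There is really no obstacle here; the corollary is essentially a bookkeeping consequence of \Cref{lemma:hardness_1_2_no_improving_jump_z_NE}. The only subtlety worth making explicit is that no node of $Z$ can become empty \emph{or} a jump target during the dynamics, which follows from the two invariants above and rules out any perturbation of the $Z$-portion of the profile.
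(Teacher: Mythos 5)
Your proof is correct and matches the paper's intended argument: the corollary is stated without an explicit proof precisely because it is the bookkeeping induction you describe, with \Cref{lemma:hardness_1_2_no_improving_jump_z_NE} as the inductive step and the observation that no node of $Z$ is ever empty (hence never a jump target) closing the invariant. Nothing is missing.
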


	The next lemma provides necessary conditions for any NE.
	\begin{restatable}{lemma}{lemmafivefive}
		Let $\sigma$ be a strategy profile for $\Gamma_{\varphi}$ that is identical to~$\sigma^{\varphi}_{0}$ on all nodes in~$Z$.
		Then $\sigma$ cannot be a NE, if
		(1) there is an agent $i$ with $\sigma(i) \in C$, or
		(2) there are agents $i,j$ with $\exists l: \sigma(i) = x_l, \sigma(j) = \overline{x_l}$, or
		(3) there is an agent $i$ with $\sigma(i) \in Y$,
		\label{lemma:hardness_1_2_no_equilibrium_conditions}
	\end{restatable}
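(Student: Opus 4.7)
The plan is to exhibit an improving jump for some agent under each of the three conditions, exploiting the utility chain $p(2/5) > p(11/16) > p(12/17) > p(1/4)$ recorded above, the observation that all $k$ non-$Z$ agents are red (since $\sigma$ agrees with $\sigma_0^\varphi$ on $Z$), and straightforward counting of the $k$ non-$Z$ agents against the sizes $|Y_b| = k$, $|X| = 2k$, $|C| = m$. Throughout I will use that an agent on a node outside $Z$ with no other non-$Z$ red neighbor attains the maximum utility offered by her location, while each additional non-$Z$ red neighbor pushes $f$ further into $(1/2, 1)$ and hence strictly decreases the utility.

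For (1), let $i$ occupy $c_a$, so $U_i(\sigma) \leq p(11/16)$. If some other $c_b$ is occupied, then $Y_b$ has at least two empty slots ($i$ and the agent on $c_b$ already consume two of the $k$ non-$Z$ agents), and $i$ jumps into one, yielding $f = 2/5$ and thus $p(2/5)$. If $c_a$ is the unique occupied clause I split on where the remaining $k - 1$ agents sit: if any sits on some $Y_{b'}$ with $b' \neq a$, she has utility $p(1/4)$ and jumps to an empty slot of $Y_a$ for $p(2/5)$; if any sits on $X$, she jumps to an empty slot of $Y_a$ for $p(2/5) > p(12/17) \geq$ her current utility; finally, if all $k-1$ remaining agents crowd onto $Y_a$, then $i$ herself jumps to any empty $x \in X$. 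Once $i$ leaves, $c_a$ is empty and no other non-$Z$ agent is adjacent to $x$, so her new utility is $p(12/17)$, which beats her previous $p((10+k)/(15+k))$ because $(10+k)/(15+k) > 12/17$ whenever $k \geq 3$.

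For (2), let $i, j$ occupy $x_l, \overline{x_l}$. Since $i$ is adjacent to the red $j$, we have $f_i \geq 13/18$ and $U_i(\sigma) \leq p(13/18) < p(12/17)$. If any clause $c_a$ is occupied, $i$ routes to an empty slot of $Y_a$ (which exists by counting) for $p(2/5)$. Otherwise no clause is occupied, and among the $k$ literal-pairs the residual $k - 2$ agents can partially touch at most $k-2$ pairs besides $(x_l, \overline{x_l})$, so some pair $(x_{l'}, \overline{x_{l'}})$ is fully empty and $i$ jumps to $x_{l'}$ for $p(12/17)$. For (3), let $i \in Y_a$. If $c_a$ is occupied, or any other $c_{a'}$ is occupied, or any literal pair is fully occupied, then condition (1) or (2) already supplies an improving jump. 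Otherwise $U_i(\sigma) = p(1/4)$, no clause is occupied, and no literal pair is fully occupied, so the $k - 1$ non-$Z$ agents outside $i$ can touch at most $k - 1$ of the $k$ literal-pairs, leaving a fully empty pair $(x_{l'}, \overline{x_{l'}})$ to which $i$ jumps for $p(12/17) > p(1/4)$.

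The main obstacle will be the last subcase of (1) where $c_a$ is the only occupied clause and all remaining $k-1$ agents have crammed onto $Y_a$. The natural targets are not improving: every $Y_b$ with $b \neq a$ yields only $p(1/4)$ because $c_b$ is empty, and every empty $x \in X$ yields only $p(12/17) < p(11/16) = U^C_{\max}$. The argument therefore has to use $i$'s own degraded utility, showing that the crowded $Y_a$ has already pushed $f_i$ above $12/17$, which is exactly where the parameter bound $k \geq 3$ enters.
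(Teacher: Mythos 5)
Your proposal is correct and follows essentially the same route as the paper's proof: it uses the same utility ordering $U^Y_{\max} > U^C_{\max} > U^X_{\max} > U^Y_{\min}$, the same counting arguments for finding empty slots in $Y_a$ and a fully empty literal pair, and the same key inequality $\tfrac{10+k}{15+k} > \tfrac{12}{17}$ for $k \geq 3$ in the subcase where all remaining agents crowd onto $Y_a$. The only differences are presentational (e.g., in the two-occupied-clauses subcase you move the agent on $c_a$ into $Y_b$ rather than moving the agent on $c_b$ into $Y_a$).
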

	\begin{proof}
		We prove that under the conditions (1) -- (3), $\sigma$ cannot be a NE.
		\begin{enumerate}
			\item Assume that there is an agent $i$ with $\sigma(i) = c_a \in C$. \\
			Consider the case that it holds for all agents $j$ with $j \neq i$ that $\sigma(j) \notin C$. Then, all agents which are not located on nodes in $Z$ must be on $Y_a$ as otherwise jumping to a node in $Y_a$ improves their utility. Consequently, agent~$i$ has a utility of $\p{11+k-1}{16+k-1}$. As $k \geq 3$, this is lower than $U^{X}_{\max}=\p{12}{17}$. Hence,~$i$ has an incentive to jump to an arbitrary node in $X$.\\
			Consider now the case where there is an agent $j \neq i$ with $\sigma(j) \in C$. Since $\abs{Y_a} = k$, there must be an empty node~$y$ in~$Y_a$. We have $U_j(\sigma) \leq U^{C}_{\max} < U^{Y}_{\max} = U_i(\sigma_{jy})$, so $\sigma$ cannot be a NE.
			
			\item Assume that there are agents $i, j$ with $\exists l: \sigma(i) = x_l, \sigma(j) = \overline{x_l}$. If any node in $C$ is occupied by an agent, condition (1) shows that it cannot be a NE. Otherwise, both $i$ and $j$ have a utility of $\p{13}{18}$. Yet, by counting there must be a pair of empty nodes $x_p, \overline{x_p} \in X$ and $U_i(\sigma_{ix_p}) = \p{12}{17}  > U_i(\sigma)$.
			
			\item Assume there is an agent $i$ with $\sigma(i) \in Y$. If there is an agent on a node in $C$, it cannot be a NE due to condition (1). Therefore, $U_i(\sigma) = \p{1}{4}$. Also by counting we get that there must be a pair of empty nodes $x_p, \overline{x_p} \in X$, so $U_i(\sigma_{ix_p})= \p{12}{17} > U_i(\sigma)$.
		\end{enumerate}
	\end{proof}
	
	We now provide our hardness result for $\Lambda = \tfrac{1}{2}$ for finding NE via IRDs.
	
	\begin{theorem}
		It is NP-hard to decide if a given game played on a graph $G$ with $r$ red and $b$ agents and peak $\Lambda = \tfrac{1}{2}$ can reach a NE through IRDs starting from an initial placement $\sigma^{\varphi}_{0}$.
		\label{theorem:np_hardness_lambda_1_2}
	\end{theorem}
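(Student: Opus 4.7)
The plan is to complete the reduction from \textsc{Double 4-SAT} using the construction $\Gamma_\varphi$ and $\sigma_0^\varphi$ already fixed in the excerpt, establishing the equivalence: $\varphi$ has a Double 4-SAT assignment if and only if some IRD sequence from $\sigma_0^\varphi$ reaches a NE. By \Cref{corollary:hardness_1_2_stubborn_agent_replacement}, the nodes in $Z$ keep their initial colors throughout any IRD starting from $\sigma_0^\varphi$, so effectively only the $k$ red agents initially placed on $Y_0$ ever move. \Cref{lemma:hardness_1_2_no_equilibrium_conditions} then constrains any NE $\sigma$ reachable by IRDs: no agent sits on $C$ or $Y$, and no variable has both of its literals occupied. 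A short counting argument forces the $k$ movable red agents onto $k$ distinct literal nodes, one per variable, so $\sigma$ encodes a full truth assignment $\alpha_\sigma$ of $\varphi$.

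The main technical step is to show that such a truth-assignment profile $\sigma$ is a NE precisely when $\alpha_\sigma$ satisfies every clause of $\varphi$ at least twice. I would compute, for an agent $i$ on a literal node $x_l$, the utilities of all possible jump targets. Jumps to $\overline{x_l}$ or to any other unoccupied literal leave the utility at $p(\tfrac{12}{17})$ or take it to $p(\tfrac{13}{18})$, neither of which is improving, and any $Y$-node gives only $p(\tfrac{1}{4})$. The interesting case is a jump to a clause node $c_a$: with $t$ other agents on the literals of $c_a$ in the resulting profile, the new utility is $p\bigl(\tfrac{11+t}{16+t}\bigr)$, and single-peakedness at $\tfrac12$ makes this an improvement iff $t = 0$. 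Since $t = s_a - 1$ when $x_l \in c_a$ and $t = s_a$ when $x_l \notin c_a$, where $s_a$ denotes the number of literals of $c_a$ satisfied by $\alpha_\sigma$, the no-improving-jump condition over all literal agents is equivalent to $s_a \geq 2$ for every clause $c_a$, which is exactly Double 4-SAT (using $k \geq 5$, or slightly padding the construction so that some satisfied literal always lies outside any given clause).

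For the forward direction, given a Double 4-SAT assignment $\alpha$, the $k$ agents initially on $Y_0$ are instructed to jump one at a time to the literal nodes corresponding to $\alpha$. Each such jump is improving, since the utility rises from $p(\tfrac{1}{4})$ to $p(\tfrac{12}{17})$, and by \Cref{lemma:hardness_1_2_no_improving_jump_z_NE} no $Z$-agent wants to move at any intermediate step; the previously relocated literal agents also remain at $p(\tfrac{12}{17})$ because no clause node ever becomes occupied during the sequence. The terminal profile is a NE by the characterization above. The backward direction is immediate: any NE reached by IRDs has the assignment structure enforced by \Cref{corollary:hardness_1_2_stubborn_agent_replacement} and \Cref{lemma:hardness_1_2_no_equilibrium_conditions}, and by the characterization its underlying assignment is a Double 4-SAT assignment. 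The reduction is clearly polynomial, so NP-hardness follows.

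The step I expect to be the main obstacle is the utility bookkeeping for the NE characterization: one must carefully enumerate every possible jump target of a literal agent (its opposite literal, the other unoccupied literals, each clause node, each $Y$-node, and the $Z$-side) and verify via the single-peaked inequality that only jumps to clause nodes with $s_a \in \{0,1\}$ are improving. A secondary subtlety, absorbed by the parameter choices in the construction, is ensuring that small clause-literal overlaps do not permit a spurious NE with $s_a = 0$ for some clause while still satisfying all local no-jump conditions.
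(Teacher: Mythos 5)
Your proposal is correct and follows essentially the same route as the paper's proof: fix the $Z$-agents via \Cref{corollary:hardness_1_2_stubborn_agent_replacement}, rule out NE with agents on $C$, $Y$, or complementary literals via \Cref{lemma:hardness_1_2_no_equilibrium_conditions}, and then show that a one-literal-per-variable profile is a NE exactly when every clause node is adjacent to at least two occupied literals (your $t$/$s_a$ bookkeeping is just a more systematic phrasing of the paper's comparison of $U^{C}_{\max}=\p{11}{16}$ against $U^{X}_{\max}=\p{12}{17}$), with the same one-at-a-time IRD sequence from $Y_0$ in the forward direction. The padding caveat you raise for small clause--literal overlaps is unnecessary, since whenever $s_a\le 1$ some agent already attains $t=0$ by jumping to $c_a$, but it does no harm.
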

	\begin{proof}
		Let $\varphi$ be a satisfiable instance of \textsc{Double 4-SAT} with~$k$ variables. Consider the game $\Gamma_\varphi$ and strategy profile $\sigma$ identical to $\sigma^{\varphi}_{0}$ with respect to $Z$, in which the other $k$ red agents are placed on the nodes corresponding to the true literals of a satisfying assignment for $\varphi$.
		We want to show that $\sigma$ is a NE. It follows from \Cref{lemma:hardness_1_2_no_improving_jump_z_NE} that no agent $i$ with $\sigma(i) \in Z$ has an improving jump. It remains to show that no agent $i$ with $\sigma(i) \in X$ has an improving jump to a node in $X, C$ or $Y$.
		
		Let $i$ be an agent with $\sigma(i) \in X$. In $\sigma$, all agents outside $Z$ are on non-adjacent nodes in $X$ and hence, $U_i(\sigma) = U^{X}_{\max}$. Thus, no jump to a node in $X$ can be improving for $i$.
		In a satisfying assignment, at least two literals per clause are true. Therefore, we have that every $c_a \in C$, is adjacent to at least two nodes in $X$ that are occupied by a red agent. Let $c_a \in C$. Thus, there is an agent $j \neq i$ with $\sigma(j)$ adjacent to $c_a$ and consequently it holds that $U_i(\sigma_{iv}) \leq \p{12}{17} = U^{X}_{\max} = U_i(\sigma)$. This means that $i$ has no improving jump to a node in $C$.
		Furthermore, as all nodes in $C$ are empty, all nodes in $Y$ offer at most a utility of $U^Y_{\min} < U_i(\sigma)$.
		Hence, $\sigma$ is a NE. It can be reached with IRDs, as all $k$ red agents outside of $Z$ start on $Y$ with a utility of $U^Y_{\min}$ and can, one after another, perform an improving jump to the appropriate position on~$X$.
		
		Let $\sigma$ be a NE for $\Gamma_\varphi$, reached through IRDs starting from $\sigma^{\varphi}_{0}$. According to \Cref{corollary:hardness_1_2_stubborn_agent_replacement} and \Cref{lemma:hardness_1_2_no_equilibrium_conditions}, no agent is placed on a node in $Y$ or $C$. Furthermore, for each of the $k$ variables, exactly one literal node is occupied by one of the $k$ strategic agents (as no two literal nodes belonging to the same variable can both be occupied) and all of these $k$ agents have a utility of $U^X_{\max}$.
		Assume for the sake of contradiction that one clause node $c_i \in C$ is not adjacent to at least two red agents in $X$. If it is adjacent to no such agent, all agents have an incentive to jump there, as $U^C_{\max} > U^X_{\max}$. If it is adjacent to exactly one agent $j$ in $X$, agent~$j$ can jump to $c_i$ and will have a utility of $U^C_{\max}$ afterward. Thus all clause nodes are adjacent to at least two red agents and thus $\varphi$ is a satisfiable Double 4-SAT instance.
	\end{proof}

	\subsubsection{Hardness for arbitrary $\Lambda \in (0,1)$}
	In the following, we prove that it is NP-hard to find equilibria through improving response dynamics, even for an arbitrary fixed value of $\Lambda$. The general idea of the proof is the same as for proving the hardness for $\Lambda = \tfrac{1}{2}$; yet, the construction gets more complicated. This is the case since for $\Lambda= \tfrac{1}{2}$, the gadget $Y$ has nodes which either have a utility equals $\p{1}{4}$ or $\p{2}{5}$. However, for an arbitrary value of $\Lambda$ this gap gets much smaller, forcing us to use a larger number of nodes in $Z$ to ensure that $U^X_{\max}$ and $U^C_{\max}$ are in between these two values. For a better overview, we provide a rough sketch of the utilities in \Cref{fig:np_hardness_arbitrary_lambda_utilities}.
	\begin{figure}[h]
		\centering
		\includegraphics[width = 0.9\textwidth]{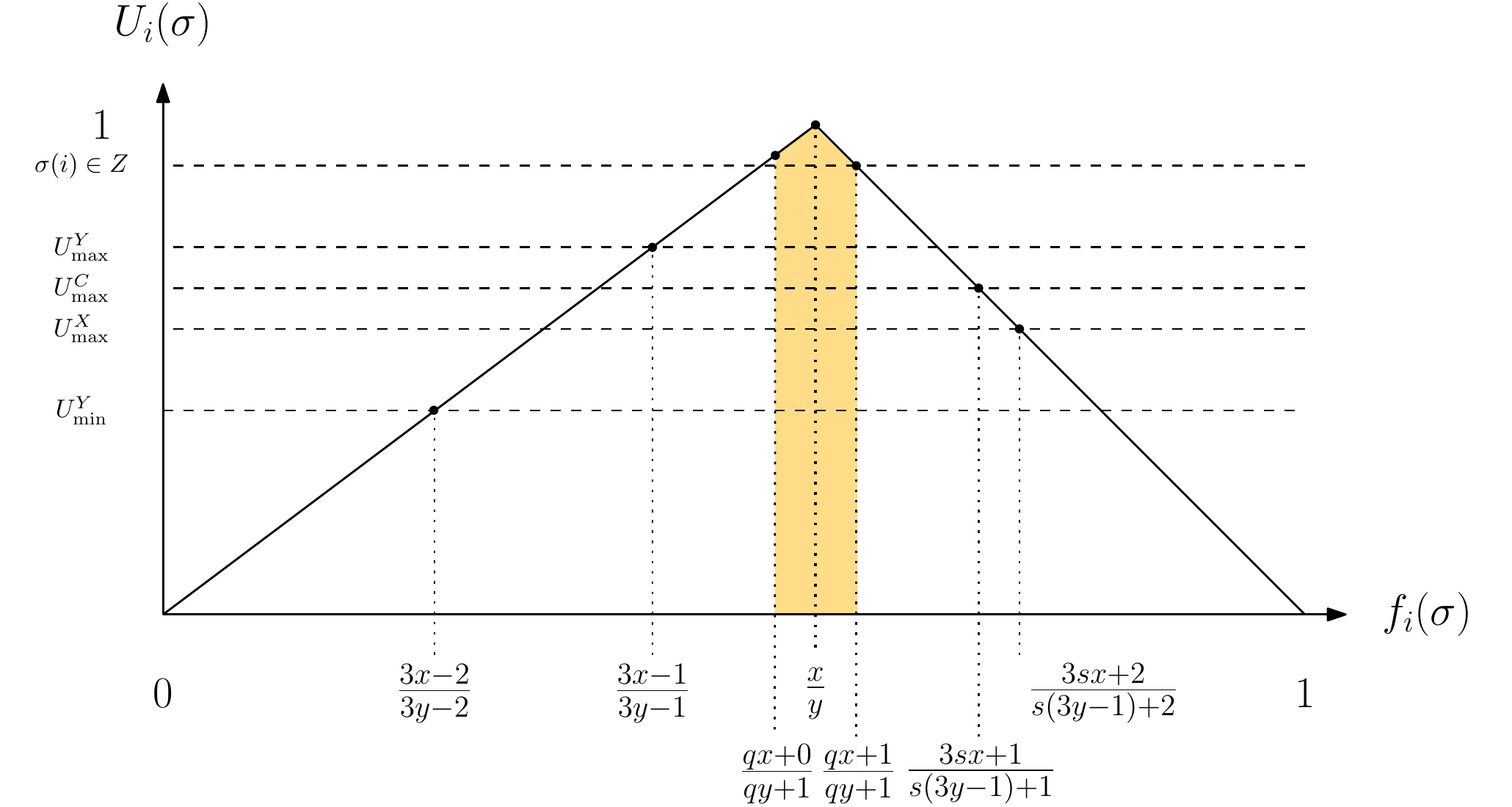}
		\caption{Illustration of the utilities of agents in a strategy profile $\sigma$ for $\Gamma(\varphi,\tfrac{x}{y},q)$, assuming that all nodes in $Z_R$ are occupied by red and all nodes in $Z_B$ are occupied by blue agents.
			Yellow: Utility of an agent on $Z$ (either $\p{x}{y}, \p{qx+1}{qy+1}$ or $\p{qx}{qy+1}$. For a sufficiently large value of $q$, this is arbitrarily close to 1.
			A red agent on a node in $C$ has a utility of at most $U_{\max}^C =\p{s3x+1}{s(3y-1)+1}$. The maximum utility of a red agent on a node in $X$ is slightly lower with $U_{\max}^X = \p{s3x+2}{s(3y-1)+2}$.
			A red agent on a node in $Y_a$ has two possible utilities, $U_{\min}^Y =\p{3x-2}{3y-2} = \p{3x}{3y-2}$ if $c_a$ is empty and $U_{\max}^Y = \p{3x-1}{3y-1} = \p{3x}{3y-1}$ if there is a red agent on $c_a$. Observe that $U_{\max}^X, U_{\max}^Y$ are in between these values.}
		\label{fig:np_hardness_arbitrary_lambda_utilities}
	\end{figure}
	\begin{figure}[h]
		\centering
		\includegraphics[width = 0.5\textwidth]{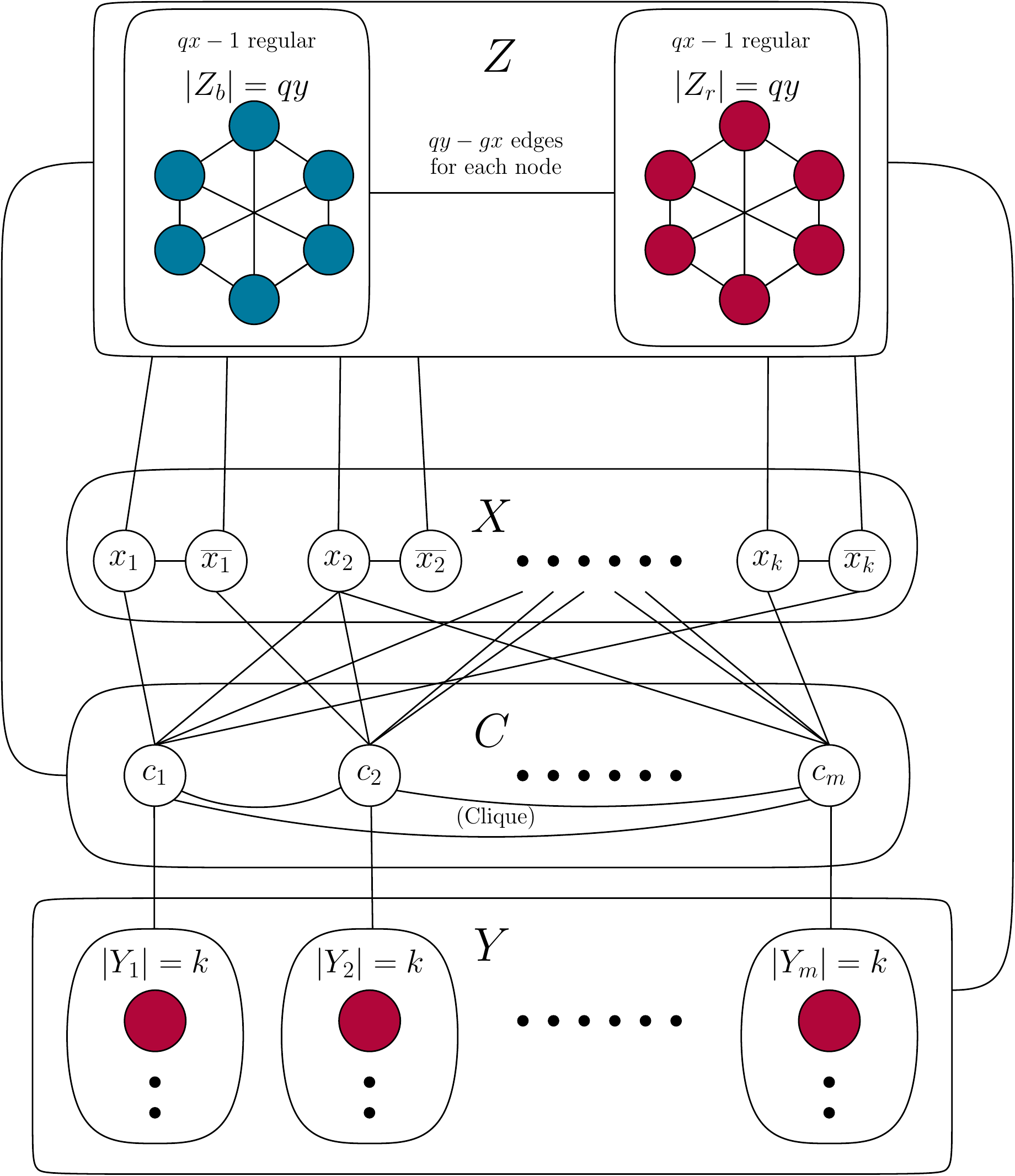}
		\caption{The \selfInclusiveGame~$\Gamma(\varphi,\Lambda,q)$ and initial strategy profile $\sigma^{\Gamma(\varphi, \Lambda,q)}$ corresponding to an instance~$\varphi$ of the \textbf{Double 4 SAT Problem} with~$k$ variables and~$m$ clauses.
			The nodes in the set~$X$ correspond to the literals and the nodes in $C$ to the clauses. The agents on the nodes in $Z$ are arranged such that they will never have improving jumps.
			Let $s = 6yk$ and let $q$ be sufficiently large.
			Each node in $X$ is connected to $3sx+1$ nodes of $Z_R$ and $s(3y-1) - 3sx$ nodes of $Z_B$. Each node in the clique~$C$ is connected to $3sx$ nodes of $Z_R$ as well as $s(3y-1) - 3sx$ nodes of $Z_B$. Each set of nodes $Y_a$ contains $k$ nodes each that are connected to $c_a \in C$, $3x-3$ nodes in $Z_R$ and $3(y-x)$ nodes in $Z_B$. Each node in $Z$ is only adjacent to one node outside $Z$.
			Nodes marked in red/blue are occupied by a red/blue agent in the initial strategy profile $\sigma^{\Gamma(\varphi, \Lambda,q)}$.
			\Cref{theorem:hardness_arbitrary_lambda} proves that $\varphi$ has a satisfying assignment if and only if a NE can be reached through improving response dynamics starting from $\sigma^{\Gamma(\varphi, \Lambda,q)}$.
		}
		\label{fig:np_hardness_arbitrary_lambda}
	\end{figure}

	We start by defining our generalized mapping of \textsc{Double 4 SAT} instances to the \selfInclusiveGame:
	\begin{definition}
		Let $\varphi$ be an instance of \textsc{Double 4 SAT} with variables $x_1, \dots x_k$ ($k > 2$) and clauses $c_1, \dots c_m$.
		For a fixed rational number $\Lambda \in (0,1)$, let $x,y \in \mathbf{N}$, such that $\tfrac{x}{y} = \Lambda$. W.l.o.g., we can assume that $x \neq \tfrac{y(3y+1)}{6y-2}$ and $x \neq \tfrac{y(3y+1)}{6y-1}$.
		\footnote{Assume that we have chosen $x,y$ s.t. $x = \tfrac{y(3y+1)}{6y-2}$ (or $x = \tfrac{y(3y+1)}{6y-1}$). Then, for any $d > 1, x'=dx, y'=dy$, it holds that $dx = d \tfrac{y(3y+1)}{6y-2} = x' = \tfrac{y'(3y'+1)}{6y'-2}$ if and only if $y = 0$. However, we have $y>x> 0$. Hence, by choosing another representation $\tfrac{x'}{y'} = \Lambda$ we can fulfill this requirement.}

		Further, let $q \in \mathbf{N}$ with $qy(qx-1) \mod 2 = 0$.

		We define $\Gamma(\varphi,\Lambda,q) = (G, r, b, \Lambda)$ as a corresponding \selfInclusiveGame~and $\sigma^{\Gamma(\varphi, \Lambda,q)}$ as the initial strategy profile of this \selfInclusiveGame~in the following way, cf. \Cref{fig:np_hardness_arbitrary_lambda}.

		\begin{itemize}
			\item $G$ is a graph with two sets of nodes $Z_R$ and $Z_B$, $|Z_B| = |Z_R| = qy$, $Z = Z_R \cup Z_B$. Both induce a $qx-1$ regular graph. Note that this is possible due to $qy(qx-1) \mod 2 = 0$. Furthermore, each node in $Z_R$ (resp. $Z_B$) is adjacent to exactly $qy-qx$ nodes of $Z_B$ (resp.~$Z_R$).
			\item For each variable $x_i$ in $\varphi$, there is a pair of adjacent nodes $x_i$, $\overline{x_i}$. We denote the set of these nodes~$X$.
			\item Let $s = 6y\cdot k$. Each node $x_i \in X$ is adjacent to $3sx+1$ nodes in $Z_R$ and $s(3y-1) - 3sx$ nodes in~$Z_B$.
			\item There is a clique $C$ of $m$ nodes, one node $c_i$ corresponding to each clause~$c_i$ in~$\varphi$.
			\item Each node $c_i \in C$ is adjacent to $3sx$ nodes in $Z_R$ and $s(3y-1) - 3sx$ nodes in~$Z_B$.
			\item Furthermore, each node $c_i \in C$ is adjacent to a set $Y_i$ of $k$ nodes. Let $ \bigcup_{i = 1}^m Y_i = Y$.
			\item Each node $v \in Y$ is adjacent to $3x-3$ nodes in $Z_R$ and $3(y-x)$ nodes in~$Z_B$.

			\item If $q$ is large enough, each node in $Z$ is adjacent to at most one node outside~$Z$.
		\end{itemize}
		\begin{itemize}

			\item $r = |Z_R| + k$ and $b = |Z_B|$
			\item $\sigma^{\Gamma(\varphi, \Lambda,q)}$ is a strategy profile for the game $\Gamma(\varphi,\Lambda,q)$ with
			\begin{itemize}
				\item $|Z_R|$ red agents on $Z_R$ and $k$ red agents on $Y$,
				\item $|Z_B|$ blue agents on $Z_B$.
			\end{itemize}
		\end{itemize}

	\end{definition}

	We start with a few observations that hold for the initial strategy profile $\sigma^{\Gamma(\varphi, \Lambda,q)}$ and, in fact, for any strategy profile $\sigma$ that is identical to $\sigma^{\Gamma(\varphi, \Lambda,q)}$ on all nodes in~$Z$, i.e., any profile in which all nodes in $Z_R$ are occupied by red and all nodes in~$Z_B$ are occupied by blue agents.
	\begin{itemize}
		\item An agent $i$ with $\sigma(i) \in X$ is adjacent to $3sx+1$ red agents in $Z_R$, $s(3y-1)-3sx$ blue agents in $Z_B$ and further $w \geq 0$ red agents outside $Z$ (not including herself). Therefore, $i$ has a utility of $$\p{3sx+2+w}{s(3y-1)+2+w}.$$ Observe that since for $0 < a < b$ it holds that $\tfrac{a}{b} < \tfrac{a+1}{b+1}$, we have that $$\tfrac{3sx+2+w}{s(3y-1)+2+w} > \tfrac{3sx+2}{s(3y-1)+2} > \tfrac{3sx}{3sy-1s} >  \tfrac{x}{y}.$$
		Consequently, the highest utility that agent~$i$ can obtain is $$U_{\max}^X  = \p{3sx+2}{s(3y-1)+2}.$$
		\item An agent $i$ with $\sigma(i) \in C$ is adjacent to $3sx$ red agents in~$Z_R$, $s(3y-1)-3sx$ blue agents in~$Z_B$ and further $w\geq 0$ red agents outside~$Z$ (not including herself). Hence, we have that $$U_i(\sigma) = \p{3sx+1+w}{s(3y-1)+1+w}.$$ Note that it holds that $$\tfrac{3sx+1+w}{s(3y-1)+1+w} > \tfrac{x}{y}.$$ Therefore, the utility of agent~$i$ is at most $$U_{\max}^C = \p{3sx+1}{s(3y-1)+1}.$$
		\item An agent $i$ with $\sigma(i) \in Y_a$ has $3x-3$ red neighbors in~$Z_R$ and $3(y-x)$ blue neighbors in~$Z_B$ and potentially one red neighbor on~$c_a$. Therefore, agent~$i$ has a utility of $$U_{\max}^Y = \p{3x-1}{3y-1} = \p{3x}{3y-1}$$ if~$c_a$ is occupied and $$U_{\min}^Y =\p{3x-2}{3y-2}=\p{3x}{3y-2}$$ otherwise.
		\item We claim that $$\tfrac{3sx+2}{s(3y-1)+2} < \tfrac{3x}{3y-2}.$$ In particular, it holds for $$s>\tfrac{6y-6x-4}{3 x}$$ and $$x \geq 1 \land y \geq x + 1.$$ Since we have chosen $s = 6yk > 6y$, the inequality holds. Therefore, we have that $$U_{\max}^X = \p{3sx+2}{s(3y-1)+2} > \p{3x}{3y-2} = U_{\min}^Y.$$
	\end{itemize}

	In summary, this gives the following order of utilities:
	\begin{align*}
		1 &> U_{\max}^Y = \p{3x}{3y-1}
		> U_{\max}^C = \p{3sx+1}{s(3y-1)+1}
		> U_{\max}^X = \p{3sx+2}{s(3y-1)+2}
		>  U_{\min}^Y = \p{3x}{3y-2}.
	\end{align*}

	With this, we can show that the agents that start on nodes in~$Z$ have a higher utility than they could achieve by jumping to a node outside~$Z$.
	\begin{lemma}
		Let $\sigma$ be a strategy profile that is identical to the initial placement $\sigma^{\Gamma(\varphi, \Lambda,q)}$ on all nodes in~$Z$.
		There is a $q \in \mathbf{N}$, polynomial in~$\varphi$, such that no agent on a node in~$Z$ has an improving jump.
		\label{lemma:hardness_Z_has_no_improving_jumps}
	\end{lemma}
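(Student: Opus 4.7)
The plan is to show that, for $q$ polynomial in $\varphi$, every agent starting on a node in $Z$ has strictly higher utility in $\sigma$ than she could obtain at any empty node outside $Z$. The argument splits into computing $U_i(\sigma)$, bounding $U_i(\sigma_{iv})$ for all alternatives, and choosing $q$ large enough to separate the two.

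First I would compute $U_i(\sigma)$ for an agent $i$ with $\sigma(i) \in Z$. Since $\sigma$ agrees with $\sigma^{\Gamma(\varphi,\Lambda,q)}$ on $Z$, the nodes of $Z_R$ carry red and those of $Z_B$ carry blue agents, and every agent placed outside $Z$ is red. Assume $i$ is red (the blue case is symmetric); the neighbors of $\sigma(i)$ inside $Z$ contribute exactly $qx-1$ same-color and $qy-qx$ opposite-color agents, which together with $i$ herself yields a ratio of exactly $\Lambda = x/y$ before considering any outside neighbor. The at-most-one outside neighbor perturbs this fraction to $\Lambda$, $(qx+1)/(qy+1)$, or $qx/(qy+1)$ depending on whether it is empty, red, or (for blue $i$) of the opposite color. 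A direct application of property~(2) gives $\p{qx+1}{qy+1} = \p{qx}{qy+1}$, so in every case $U_i(\sigma) \geq \p{qx}{qy+1}$, with the empty-neighbor case actually giving $U_i(\sigma) = p(\Lambda) = 1$.

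Next I would bound $U_i(\sigma_{iv})$ at any target $v$ outside $Z$. Only empty nodes in $X \cup C \cup Y$ are available, and by construction each such $v$ has a fixed number of neighbors in $Z_R$ and $Z_B$; moreover $v$ has only a bounded number of potential outside neighbors, each of which is red in $\sigma$. Enumerating the finitely many qualitative cases --- red vs.\ blue $i$; target in $X$, $C$, or $Y_a$; $c_a$ occupied or not --- one obtains a finite list of rational thresholds $t^*$, each depending only on $x, y, k, m, s$, with $U_i(\sigma_{iv}) \leq p(t^*)$. Using property~(2) again, any $t^* \in [\Lambda,1]$ may be replaced by its reflection in $[0,\Lambda]$, so after reduction every threshold lies strictly below $\Lambda$.

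By strict monotonicity of $p$ on $[0,\Lambda]$, the desired inequality $\p{qx}{qy+1} > p(t^*)$ is equivalent to $qx/(qy+1) > t^*$, which rearranges to a linear lower bound on $q$ with coefficients polynomial in $x, y, k, m$. A representative computation is the one needed to beat $U_{\max}^Y = \p{3x}{3y-1}$: via property~(2) this threshold becomes $x(3y-1-3x)/((3y-1)(y-x)) < \Lambda$ and the resulting bound simplifies to $q > (3y-1-3x)/x$. Taking $q$ larger than the maximum of all finitely many such bounds --- and, if necessary, doubling it to enforce $qy(qx-1) \equiv 0 \pmod{2}$ --- yields a $q$ polynomial in the size of $\varphi$ for which no agent in $Z$ has an improving jump. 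The main obstacle is purely bookkeeping: verifying that in each red/blue $\times$ $X$/$C$/$Y$ case, after the property~(2) reduction the relevant threshold is indeed strictly below $\Lambda$ so that strict monotonicity applies. Each case is an explicit arithmetic check of the same shape as the one sketched above.
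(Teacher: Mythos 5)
Your proposal is correct and follows essentially the same route as the paper: agents on $Z$ either have utility exactly $1$ (empty outside neighbor) or utility $p\bigl(\tfrac{qx}{qy+1}\bigr)$ tending to $1$ with $q$, every empty node outside $Z$ offers a utility that is a $q$-independent value strictly below $1$, and $q$ is then chosen polynomially large to separate the two. The only caution is that the ``bookkeeping'' you defer is precisely where the construction's assumption $x \neq \tfrac{y(3y+1)}{6y-2}$ and $x \neq \tfrac{y(3y+1)}{6y-1}$ must be invoked (for a blue agent jumping into $Y$) and where a case split on $\Lambda \geq \tfrac12$ versus $\Lambda < \tfrac12$ is used (for a blue agent jumping into $X \cup C$); without these the claim that every reflected threshold lies strictly below $\Lambda$ is not automatic.
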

	\begin{proof}
		Let $\sigma$ be such a strategy profile and let~$i$ be an agent with $\sigma_(i) \in Z$ that has an improving jump to an empty node~$v$.
		Since $\sigma$ is identical to $\sigma^{\Gamma(\varphi, \Lambda,q)}$ with respect to~$Z$, the empty node~$v$ must be in $X \cup C \cup Y$ and all nodes in $ X \cup C \cup Y$ must either empty or occupied by a red agent. Furthermore, with~$q$ greater than the number of edges between $X \cup C \cup Y$ and $Z$, we have that~$\sigma(i)$ can be adjacent to at most one node $u \not\in Z$.

		We begin with the observation that agent~$i$ cannot have an improving jump if~$u$ is empty, since, in that case, $$U_i(\sigma) = \p{qx}{qy-qx+qx}$$ which is  $\p{x}{y} = 1$. Hence, we can assume that~$u$ is occupied by a red agent and $v\neq u$, i.e., the empty node~$v$ is not adjacent to~$\sigma(i)$.

		We observe that agent~$i$ has a utility close but not equal to~$1$. If~$i$ is red, we get $$U_i(\sigma) =\p{qx+1}{qy-qx+qx+1} = p(\tfrac{x+1/q}{y + 1/q})$$ and if~$i$ is blue, it holds that $$U_i(\sigma)=\p{qx}{qy-qx+qx+1} = p(\tfrac{x}{y + 1/q}).$$ Note, that by choosing a sufficiently large $q$, we can have $U_i(\sigma)$ arbitrarily close to 1 and the required size of~$q$ for this is polynomial in~$\varphi$ for a fixed value of $\Lambda = \tfrac{x}{y}$. Therefore, we are left with showing that $U_i(\sigma_{iv})$ is strictly less than~$1$. It follows that $U_i(\sigma_{iv}) < U_i(\sigma)$, i.e., agent~$i$ does not have an improving jump.

		According to our previous observations, if~$i$ is red, the highest utility outside of~$Z$ it can get is $$U_{\max}^Y = \p{3x-1}{3y-1} < 1.$$

		It now remains to show that this also holds if~$i$ is a blue agent.
		If $v \in Y$, we have that either $$U_i(\sigma_{iv}) = \p{3(y-x)+1}{3y-2}$$ or $$U_i(\sigma_{iv}) = \p{3(y-x)+1}{3y-2+1},$$ depending on whether or not the corresponding node in~$C$ is occupied by a red agent. Observe that therefore, we have $U_i(\sigma_{iv})=1$ if and only if $$\tfrac{3(y-x)+1}{3y-2} = \tfrac{x}{y}$$ (resp. $\tfrac{3(y-x)+1}{3y-2+1} = \tfrac{x}{y}$). Solving for~$x$, we get that these equations hold for $$x = \tfrac{y(3y+1)}{6y-2}$$ (resp. $x = \tfrac{y(3y+1)}{6y-1}$), which we excluded by our choice of~$x$ and $y$ earlier. It follows that $U_i(\sigma_{iv}) < 1$.

		If $v \in X \union C$, $v$ is adjacent to $3sx$ or $3sx+1$ red agents in~$Z_R$ and exactly $s(3y-1)-3sx$ blue agents in~$Z_B$. With a case distinction on~$\Lambda$, we can show that this implies that $U_i(\sigma) < 1$.
		\begin{itemize}
			\item If $\Lambda \geq \tfrac{1}{2}$, i.e., $y \leq 2x$, we have that $$s(3y-1)-3sx \leq s(6x-1)-3sx = 3sx-s,$$ thus, including~$i$, there are at most $3sx-s+1 < 3sx$ blue agents in~$\neighborhood{i}{\sigma_{iv}}$. It holds that $$f_i(\sigma_{iv}) < \tfrac{1}{2} \leq \Lambda,$$ i.e., $$U_i(\sigma_{iv}) < 1.$$
			\item Otherwise, if $\Lambda < \tfrac{1}{2}$, i.e., $y \geq 2x+1$, it follows that $$s(3y-1)-3sx \geq s(6x+3-1)-3sx = 3sx+2s = 3sx+12ky.$$ Hence, we have at least $3sx+12ky$ blue agents in~$\neighborhood{i}{\sigma_{iv}}$. There can be at most $3sx+1+k$ red agents in~$\neighborhood{i}{\sigma_{iv}}$, therefore we clearly have $$f_i(\sigma_{iv}) > \tfrac{1}{2} > \Lambda.$$ Thus, it holds that $$U_i(\sigma_{iv}) < 1.$$
		\end{itemize}
		Hence, for both red and blue agents on~$Z$, we have that $U_i(\sigma_{iv}) < 1$ and thus, we can choose $q$ large enough to ensure that $i$ does not have an improving jump.
	\end{proof}
	We now get that all agents placed on~$Z$ behave like stubborn agents, i.e., do not jump at all.
	\begin{corollary}
		Starting from $\sigma^{\Gamma(\varphi, \Lambda,q)}$ and $q$ sufficiently high, every NE reached through improving response dynamics must be identical to~$\sigma_0$ on all nodes in~$Z$.
		\label{lemma:hardness_improving_response_dynamics_preserve_initial_Z}
	\end{corollary}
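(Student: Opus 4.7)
The plan is to derive the corollary directly from Lemma~\ref{lemma:hardness_Z_has_no_improving_jumps} by a short induction on the length of the improving response sequence, using as an invariant that the restriction of the current profile to $Z$ coincides with $\sigma^{\Gamma(\varphi, \Lambda,q)}$.

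First I would fix $q$ large enough that Lemma~\ref{lemma:hardness_Z_has_no_improving_jumps} applies. The base case is immediate: $\sigma^{\Gamma(\varphi, \Lambda,q)}$ itself agrees with $\sigma^{\Gamma(\varphi, \Lambda,q)}$ on $Z$ by definition, and in particular every node of $Z$ is occupied (all $|Z_R|$ nodes of $Z_R$ by red agents, all $|Z_B|$ nodes of $Z_B$ by blue agents).

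For the inductive step, suppose $\sigma$ is reached through IRDs from $\sigma^{\Gamma(\varphi, \Lambda,q)}$ and agrees with $\sigma^{\Gamma(\varphi, \Lambda,q)}$ on $Z$, and consider the next improving jump, say of agent $i$ from $\sigma(i)$ to an empty node $v$. There are only two ways the configuration on $Z$ could change: either $\sigma(i) \in Z$, or $v \in Z$. The first case is ruled out by Lemma~\ref{lemma:hardness_Z_has_no_improving_jumps}, which says that no agent situated on a node in $Z$ has an improving jump when the $Z$-restriction equals $\sigma^{\Gamma(\varphi, \Lambda,q)}$. The second case is ruled out by the inductive hypothesis itself: since $\sigma$ agrees with $\sigma^{\Gamma(\varphi, \Lambda,q)}$ on $Z$, every node of $Z$ is currently occupied, so no $v \in Z$ is available as a jump target. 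Hence the jump leaves the $Z$-restriction of $\sigma$ unchanged, and the invariant is preserved.

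Applying the invariant to any NE reached by IRDs then yields the claim. No step appears to present a real obstacle; the content is a bookkeeping argument once Lemma~\ref{lemma:hardness_Z_has_no_improving_jumps} is in hand, and the only subtlety is explicitly observing that the full-occupancy of $Z$ prevents any outside agent from jumping into $Z$ as well.
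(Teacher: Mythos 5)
Your proposal is correct and matches the paper's intent: the paper states this corollary without proof as an immediate consequence of Lemma~\ref{lemma:hardness_Z_has_no_improving_jumps}, and your induction on the IRD sequence --- agents on $Z$ never jump away by the lemma, and no agent can jump into $Z$ because $Z$ is fully occupied throughout --- is exactly the bookkeeping that justifies it.
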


	\begin{lemma}
		Let $\sigma$ be a strategy profile for $\Gamma_{\varphi}$ that is identical to $\sigma^{\Gamma(\varphi, \Lambda,q)}$ on all nodes in~$Z$.
		Then,~$\sigma$ cannot be a NE, if
		\begin{enumerate}
			\item there is an agent $i$ with $\sigma(i) \in C$, or
			\item there are agents $i,j$ with $\exists l: \sigma(i) = x_l, \sigma(j) = \overline{x_l}$, or
			\item there is an agent $i$ with $\sigma(i) \in Y$.
		\end{enumerate}
		\label{lemma:hardness_JE_restrictions}
	\end{lemma}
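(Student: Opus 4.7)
The plan is to mirror the proof of Lemma~5.5 for $\Lambda = \tfrac{1}{2}$, adapting its concrete numerical case analysis to the general ordering of utilities
\[1 > U_{\max}^Y > U_{\max}^C > U_{\max}^X > U_{\min}^Y\]
established in the preamble to this lemma. The essential structural input, already implicit in that preamble, is that each of $U_{\max}^X, U_{\max}^C, U_{\max}^Y$ is attained at an $f$-value strictly past the peak $\Lambda$. Hence, for any agent on $X$ or $C$, every additional red neighbor outside $Z$ strictly decreases her utility by the monotonicity of $p$ on $[\Lambda,1]$, whereas for any agent on $Y_a$, adjacency to a red agent on $c_a$ strictly increases hers.

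For case~(1), I would split on whether a second strategic agent $j$ also lies on $C$. If so, because $|Y_b|=k$ exceeds the at most $k-2$ strategic agents not on $C$, some $y \in Y_b$ is empty and $U_j(\sigma) \leq U_{\max}^C < U_{\max}^Y = U_j(\sigma_{jy})$, breaking NE. Otherwise every strategic agent apart from $i$ must sit on $Y_a$: any agent on $X$ or on some $Y_b$ with $b \neq a$ would otherwise improve to an empty node of $Y_a$ using $U_{\max}^Y > U_{\max}^X$ and $U_{\max}^Y > U_{\min}^Y$. But then $i$ has $k-1 \geq 2$ red neighbors outside $Z$, and a mediant-inequality computation analogous to the one used in the preamble to show $U_{\max}^X > U_{\min}^Y$ yields $U_i(\sigma) < U_{\max}^X$; since $X$ is now entirely empty, $i$ strictly improves by jumping to any node in $X$.

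For cases~(2) and (3), I would first observe that (1) may be assumed to fail, so $C$ is unoccupied. In~(2), the mutual adjacency of $i$ on $x_l$ and $j$ on $\overline{x_l}$ pushes $i$'s $f$-value above $\tfrac{3sx+2}{s(3y-1)+2}$, so $U_i(\sigma) < U_{\max}^X$; in~(3), $U_i(\sigma) = U_{\min}^Y$ since $c_a$ is empty. A short counting argument then produces a fully empty variable pair $\{x_p,\overline{x_p}\} \subseteq X$: whether there is a doubly-occupied pair (case~2) or an agent sitting on $Y$ (case~3), the $k$ strategic agents outside $Z$ can cover at most $k-1$ of the $k$ variable pairs. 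Jumping to $x_p$ yields $U_{\max}^X$, which strictly exceeds the agent's current utility.

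The main obstacle is the sub-case of~(1) where $Y_a$ is forced full: verifying $U_i(\sigma) < U_{\max}^X$ when $i$ on $c_a$ has $k-1$ additional red neighbors on $Y_a$. This is precisely where the convenient $\Lambda=\tfrac{1}{2}$ arithmetic of Lemma~5.5 breaks down, and one must reuse the same mediant inequality $\tfrac{a}{b} < \tfrac{a+c}{b+c}$ for $0<a<b$, $c>0$ that pinned down the utility order in the preamble. Once this step is in place, the remainder of the argument is a direct translation of the $\Lambda=\tfrac{1}{2}$ proof.
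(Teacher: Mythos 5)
Your proposal follows essentially the same route as the paper's proof: the same three-way case split, the same counting arguments producing an empty node in $Y_a$ and a fully empty literal pair, and the same mediant-inequality comparison showing that additional red neighbours past the peak strictly lower the utility of an agent on $C$. The one point to tighten is in case (1): when a second agent $j$ occupies a clause node, the improving jump must target an empty node of $Y_a$, whose clause node $c_a$ remains occupied by $i$ after the jump, rather than the group attached to $j$'s own clause node, since that node becomes empty after $j$ leaves and the landing spot would then only yield $U_{\min}^Y$ instead of $U_{\max}^Y$.
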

	\begin{proof}
		We prove that under the conditions (1) -- (3), the strategy profile~$\sigma$ cannot be a NE.
		\begin{enumerate}
			\item Assume there is an agent~$i$ with $\sigma(i) = c_a \in C$.

			Consider the case that it holds for all agents~$j$ with $j \neq i$ that $\sigma(j) \notin C$. Then, all agents not on nodes in~$Z$ must be on nodes in~$Y_a$ as otherwise jumping on a node in~$Y_a$ improves their utility. Hence, agent~$i$ has a utility of $$\p{3sx+k}{s(3y-1)+k}.$$ As $k \geq 3$, this is lower than $$U_{\max}^X = \p{3sx+2}{s(3y-1)+2}.$$ Therefore,~$i$ has an incentive to jump to an arbitrary node in~$X$.

			Consider now the case that there is an agent $j\neq i$ with $\sigma(j) \in C$. Since $\abs{Y_a} = k$, there must be an empty node~$v$ in~$Y_a$. We have that $$U_j(\sigma) \leq U_{\max}^C < U_{\max}^Y = U_i(\sigma_{jv}).$$ $\sigma$ cannot be a NE.
			\item Assume that there are agents $i,j$ with $\exists l: \sigma(i) = x_l, \sigma(j) = \overline{x_l}$. If any node in~$C$ is occupied by an agent, condition~(1) shows that it cannot be a NE. So, both agent~$i$ and~$j$ have a utility of $$\p{3sx+3}{s(3y-1)+3} < U_{\max}^X .$$ Yet, by counting there must be a pair of empty nodes $x_l, \overline{x_l} \in X$, such that $U_i(\sigma_{ix_l}) = U_{\max}^X > U_i(\sigma)$.
			\item Assume there is an agent~$i$ with $\sigma(i) \in Y$. If there is an agent on a node in~$C$, then according to condition~(1), it cannot be a NE. Therefore, $U_i(\sigma) = U_{\min}^Y$. Also by counting we get that there must be a pair of empty nodes $x_l, \overline{x_l} \in X$, such that $$ U_i(\sigma_{ix_l}) = U_{\max}^X > U_i(\sigma). \qedhere$$
		\end{enumerate}
	\end{proof}

	\begin{restatable}{theorem}{theoremfive}
		For any fixed $\Lambda = \tfrac{x}{y} \in (0,1)$, it is NP-hard to decide if a given \selfInclusiveGame~played on a graph~$G$ with~$r$ red and~$b$ blue agents can reach a NE through IRDs starting from a given initial placement~$\sigma_0$.
		\label{theorem:hardness_arbitrary_lambda}
	\end{restatable}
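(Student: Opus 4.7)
The plan is to mirror the structure of the proof of \Cref{theorem:np_hardness_lambda_1_2}, using the generalized construction $\Gamma(\varphi,\Lambda,q)$ together with \Cref{lemma:hardness_improving_response_dynamics_preserve_initial_Z} and \Cref{lemma:hardness_JE_restrictions}. The reduction is from \textsc{Double 4-SAT}: given $\varphi$, choose $q$ sufficiently large (polynomial in $\varphi$) according to \Cref{lemma:hardness_Z_has_no_improving_jumps}, form the game $\Gamma(\varphi,\Lambda,q)$ and initial profile $\sigma^{\Gamma(\varphi,\Lambda,q)}$, and show that a NE is reachable from $\sigma^{\Gamma(\varphi,\Lambda,q)}$ via IRDs if and only if $\varphi$ is a yes-instance of \textsc{Double 4-SAT}.

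For the forward direction, I would start from a satisfying assignment and consider the profile $\sigma$ obtained from $\sigma^{\Gamma(\varphi,\Lambda,q)}$ by moving, one at a time, the $k$ red agents from $Y$ onto the $k$ nodes in $X$ corresponding to the true literals. Each such move is improving because the agent jumps from an empty-$c_a$ node in $Y$, where her utility is $U_{\min}^Y$, to a node in $X$ where she attains $U_{\max}^X$, and by the established chain $U_{\max}^X > U_{\min}^Y$. It then remains to check that $\sigma$ is indeed a NE: by \Cref{lemma:hardness_Z_has_no_improving_jumps} no agent in $Z$ jumps; any agent $i$ on $X$ already has $U_i(\sigma)=U_{\max}^X$, so no jump to another node in $X$ helps; every $c_a\in C$ has at least two adjacent $X$-agents (by the satisfying assignment), so jumping to $c_a$ yields at most $\p{3sx+2}{s(3y-1)+2}=U_{\max}^X$; and jumping to any node in $Y$ yields at most $U_{\min}^Y<U_{\max}^X$, since all nodes in $C$ are empty.

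For the converse direction, suppose a NE $\sigma$ is reached from $\sigma^{\Gamma(\varphi,\Lambda,q)}$ via IRDs. By \Cref{lemma:hardness_improving_response_dynamics_preserve_initial_Z}, $\sigma$ coincides with $\sigma^{\Gamma(\varphi,\Lambda,q)}$ on $Z$, and by \Cref{lemma:hardness_JE_restrictions} no agent of $\sigma$ sits on $C$ or $Y$ and no two literals of the same variable are both occupied. Since the $k$ non-$Z$ red agents must go somewhere and there are exactly $k$ variables, exactly one literal node per variable is occupied. To read off a satisfying assignment I set $x_l$ to true iff node $x_l$ is occupied. Stability then forces each clause to have at least two literals true: if some $c_a$ were adjacent to zero $X$-agents, any $X$-agent could improve from $U_{\max}^X$ to $U_{\max}^C$ by jumping to $c_a$, and if $c_a$ were adjacent to exactly one $X$-agent $j$, then $j$ could herself jump to $c_a$, again improving from $U_{\max}^X$ to $U_{\max}^C$. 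Hence $\varphi$ is a yes-instance.

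The main obstacle is not the reduction logic, which is a direct generalization of the $\Lambda=\tfrac12$ case, but ensuring that the utility-ordering chain $1>U_{\max}^Y>U_{\max}^C>U_{\max}^X>U_{\min}^Y$ continues to separate the relevant moves for arbitrary rational $\Lambda$. This has already been secured by the careful choice of the parameters $s=6yk$ and the side conditions $x\neq \tfrac{y(3y+1)}{6y-2}$, $x\neq \tfrac{y(3y+1)}{6y-1}$, together with \Cref{lemma:hardness_Z_has_no_improving_jumps}, so the remaining work is essentially bookkeeping with these quantities. Finally I would note that $q$ is polynomial in the size of $\varphi$ for every fixed $\Lambda$, so the entire construction is a polynomial-time many-one reduction, completing the NP-hardness argument.
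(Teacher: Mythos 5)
Your proposal is correct and follows essentially the same route as the paper's proof: the same reduction via $\Gamma(\varphi,\Lambda,q)$, the same use of \Cref{lemma:hardness_improving_response_dynamics_preserve_initial_Z} and \Cref{lemma:hardness_JE_restrictions}, the same utility-ordering chain to verify the NE in the forward direction and to extract a satisfying assignment in the converse. The only cosmetic difference is that you make the polynomial-time nature of the reduction explicit at the end, which the paper leaves implicit.
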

	\begin{proof}
		Let $\varphi$ be a satisfiable instance of \textsc{Double 4-SAT} with~$k$ variables. Consider the game $\Gamma(\varphi, \Lambda,q)$ in which~$q$ is chosen sufficiently high and strategy profile~$\sigma$ is identical to $\sigma^{\Gamma(\varphi, \Lambda,q)}$ with respect to~$Z$, in which the~$k$ other red agents are placed on the nodes corresponding to the true literals of a satisfying assignment for~$\varphi$.

		We want to show that~$\sigma$ is a NE. From \Cref{lemma:hardness_improving_response_dynamics_preserve_initial_Z}, it follows that no agent~$i$ with~$\sigma(i)~\in~Z$ has an improving jump. It remains to show that no agent~$i$ with~$\sigma(i)~\in~X$ has an improving jump to some node in $X, C$ or~$Y$.

		Let~$i$ be an agent with $\sigma(i) \in X$. In~$\sigma$, all agents outside~$Z$ are on nodes in~$X$ and not adjacent to each other. Hence, it holds that $U_i(\sigma) = U_{\max}^X$.
		Since $U_i(\sigma)$ is the highest obtainable utility on nodes in~$X$, agent~$i$ cannot have an improving jump to a node in~$X$.

		In a satisfying assignment, at least two literals per clause are true. Therefore, we have that every $c_i \in C$ is adjacent to at least two nodes in~$X$ that are occupied by a red agent. Let $c_a \in C$. Thus, there is an agent $j, j \neq i, \sigma(j) \in X$ with~$\sigma(j)$ adjacent to~$c_a$, and therefore it holds that $$U_i(\sigma_{iv}) \leq \p{s3x+2}{s(3y-1)+2} = U_{\max}^X = U_i(\sigma).$$ This means that~$i$ has no improving jump to a node in~$C$.

		Furthermore, since all nodes in~$C$ are empty, all nodes in~$Y$ offer a utility of~$U_{\min}^Y~<~U_i(\sigma)$ and $\sigma$ is a NE. It can be reached with improving response dynamics from $\sigma^{\Gamma(\varphi, \Lambda,q)}$, as all~$k$ red agents outside of~$Z$ start on~$Y$ with a utility of $U_{\min}^Y$ and can, one after another, perform an improving jump to the appropriate position on~$X$.

		Let $\sigma$ be a NE for $\Gamma_\varphi$, reached through improving response dynamics starting from~$\sigma^{\Gamma(\varphi, \Lambda,q)}$. From \Cref{lemma:hardness_JE_restrictions} we get that no agent is on a node in~$Y$ or~$C$ and furthermore, for each of the~$k$ variables, exactly one literal node is occupied by one of the~$k$ red agents that did not start on~$Z$ (as no two literal nodes belonging to the same variable can both be occupied) and each of these~$k$ agents has a utility of~$U_{\max}^X$.

		Assume for the sake of contradiction that one clause node $c_i \in C$ is not adjacent to at least two red agents in~$X$. If it is adjacent to no such agent, all~$k$ red agents outside~$Z$ have an incentive to jump there as $U_{\max}^C > U_{\max}^X$. If $c_i$ is adjacent to exactly one agent~$j$ in~$X$, agent~$j$ can jump to~$c_i$ and will also have a utility of $U_{\max}^C > U_j(\sigma)$ afterward. Thus, all clause nodes are adjacent to at least two red agents and thus~$\varphi$ is a satisfiable Double 4-SAT instance.
	\end{proof}

	\subsection{{Existence of Strategy Profiles with High Degree of Integration}}
	In this section, we study the problem of finding strategy profiles with a high DoI, i.e., we aim for finding a strategy profile~with a DoI larger than some threshold~$d$. This problem is indifferent to the utilities of the agents and thus the same for any \textit{Jump Schelling Game} (JSG).
	For $d = n$, the hardness of this problem has been studied before by \cite{A+19}.
	However, their focus lies on swap games and therefore assumes $\abs{V} = n$. As noted by the authors this result can be generalized to $\abs{V} > n$ by adding isolated empty nodes. We improve on their result by showing that the hardness holds in a more realistic setting without isolated nodes.
	For our reduction, we use the NP-complete \textsc{MAX SAT} (\cite{GAREY1976SomeSimplifiedNPCompleteGraphProblems}) problem, which is defined as follows.
	\begin{definition}[\textsc{MAX SAT}]
		Given a Boolean formula $\phi$ in CNF and integer~$q$, decide if there is an assignment that satisfies at least~$q$ clauses.
	\end{definition}

We now show, that it is NP-complete to decide whether a strategy profile in which at least $d$ agents are not segregated, for some fixed $d$, exists. The proof can be found in the Appendix.
	
	\begin{restatable}{theorem}{thmfivenine}
		Given a JSG with $r$ red and $b$ blue agents on a connected graph $G=(V,E)$ with $\abs{V}>n$, it is NP-complete to decide if there is a strategy profile~$\sigma^*$ with $\text{DoI}(\sigma^*) \geq d$.
	\end{restatable}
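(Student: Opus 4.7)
Membership in NP is immediate: a witness is a strategy profile $\sigma^*$, and checking $\text{DoI}(\sigma^*) \geq d$ reduces to inspecting each agent's closed neighborhood, which is polynomial. The bulk of the work is NP-hardness, which I would establish by a polynomial-time reduction from \textsc{Max SAT}. Given $\phi$ with variables $x_1,\dots,x_k$, clauses $c_1,\dots,c_m$, and target $q$, I construct a connected graph $G$ together with integers $r,b,d$ so that $\phi$ has an assignment satisfying at least $q$ clauses iff some $\sigma^*$ on $G$ achieves $\text{DoI}(\sigma^*) \geq d$.

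The construction uses three kinds of gadgets. A \emph{variable gadget} for $x_i$ consists of two adjacent ``literal'' nodes $x_i$ and $\overline{x_i}$ together with a small attached structure that has exactly one empty slot of the overall construction allocated to it, so that any profile achieving the maximum DoI contribution from the gadget is forced to leave exactly one of $\{x_i,\overline{x_i}\}$ empty while the other hosts a distinguished red agent; this encodes a truth assignment. A \emph{clause gadget} for $c_j$ consists of a node $c_j$ (which will be occupied by a blue agent) that is adjacent to exactly the literal nodes appearing in $c_j$, plus a private red padding neighbor that guarantees $c_j$ contributes nothing to the DoI unless additionally adjacent to a red literal node. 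Finally, \emph{padding gadgets} supply enough forced non-segregated agents (and enough empty nodes, all of small degree so no agent wants to jump into them from the variable/clause gadgets) to pin the counts of empty nodes, red agents, and blue agents to the exact values demanded by the encoding.

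Given these gadgets, the maximum DoI contribution across all variable and padding gadgets is a fixed constant $D$ depending only on $\phi$, attained only by strategy profiles that realise a legal truth assignment. The additional DoI contribution comes from clause nodes: by the clause gadget's construction, $c_j$ is non-segregated iff at least one literal adjacent to it is occupied by a red agent, i.e.\ the corresponding literal is set to true. Setting $d := D + q$ yields the desired equivalence: a profile with $\text{DoI}(\sigma^*) \geq d$ must realise an assignment and satisfy at least $q$ clauses, and conversely any assignment satisfying $\geq q$ clauses can be turned into such a profile by the obvious placement.

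The main obstacle is the \emph{rigidity} of the variable and padding gadgets: I must rule out profiles that forgo a coherent truth assignment in order to harvest DoI elsewhere. This is where the reduction in \cite{A+19} relied on isolated empty nodes, which we explicitly forbid. The workaround is to tie every empty node to a small-degree attachment in a padding component, chosen so that (i) no agent from a variable or clause gadget can increase its same-color-neighbor count to $<1$ by moving into a padded empty node except through the prescribed routes, and (ii) the count of ``free'' red and blue agents, together with the total number of empty nodes, leaves no budget for a non-assignment-consistent profile to reach DoI $D+q$. Proving this combinatorial tightness --- by a case analysis on the distribution of red/blue agents across variable gadgets, clause nodes, and padding --- will be the technical heart of the proof, while the correspondence between satisfying assignments and high-DoI profiles then follows by direct verification.
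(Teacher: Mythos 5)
Your high-level plan (NP membership is trivial; hardness via a reduction from \textsc{Max SAT} with variable gadgets encoding an assignment, clause nodes whose segregation status tracks clause satisfaction, and a threshold $d$ equal to a forced baseline $D$ plus $q$) matches the paper's strategy, but two concrete problems remain. First, the clause gadget as you describe it is broken: if the clause node $c_j$ is occupied by a blue agent and you attach to it a \emph{private red padding neighbor}, then $c_j$ always has an occupied neighbor of the other color, hence $f_{c_j}(\sigma)<1$ and $c_j$ is non-segregated in \emph{every} profile --- the opposite of the claimed behavior that ``$c_j$ contributes nothing to the DoI unless additionally adjacent to a red literal node.'' For your encoding (true literal $=$ red, false literal $=$ empty) to work, the clause node's only possible other-color neighbors must be the literal nodes, so the padding neighbor must be dropped or made blue. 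Second, the entire rigidity argument --- ruling out profiles that ignore the intended assignment and harvest DoI elsewhere --- is deferred as ``the technical heart,'' and the variable and padding gadgets that would deliver it are never specified. That is precisely where the work lies.

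For contrast, the paper resolves rigidity with a very concrete device: each variable $x_i$ becomes a clique $X_i$ of $m+4$ nodes containing the two literal nodes, there are $b=k$ blue and $r=(m+3)k+m$ red agents, exactly one empty node, and $d=(m+4)k+q$; truth is encoded by which of $x_i,\overline{x_i}$ carries the single blue agent of $X_i$, and clause nodes carry \emph{red} agents that are non-segregated iff adjacent to a blue (true) literal. A short count then forces one blue agent per clique: a clique without one contributes more than $m-q$ segregated agents, contradicting $\mathrm{DoI}(\sigma)\geq d$. Note also that your remarks about whether agents ``want to jump'' into the padded empty nodes are beside the point: the theorem asks for the \emph{existence} of a placement with high DoI, not for an equilibrium, so the adversary is an arbitrary placement and the rigidity must come purely from counting, not from incentives.
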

	\begin{proof}
		Membership in NP is trivial. For a given instance~$\phi$ with a CNF consisting of~$k$ variables and~$m$ clauses and a required number of fulfilled clauses~$q$, let~$h(\phi)$ be a JSG $(G,r,b)$ and a number of non-segregated agents $d = (m+4)k+q$ where $b = k, r = (m+3) k + m$. The graph~$G$, displayed in \Cref{fig:hardness_doi_geq_d}, has one clique $X_i, i \in [k]$ with $\abs{X_i} = m+4$ for each variable~$x_i$ of~$\phi$, with two special nodes labeled~$x_i$ and~$\overline{x_i}$ in this clique corresponding to the two literals of the variable. Furthermore, for each clause~$c_i = (l_1, l_2, l_3)$ there is one node~$c_i$ connected to the nodes corresponding to~$l_1, l_2$ and~$l_3$. Let~$C$ be the set of nodes corresponding to the clauses.
		Finally, there is another node~$v$ adjacent to~$c_0$. Note that $|V| = n+1$.
		\begin{figure}[h]
		\centering
		\includegraphics[width=0.55\textwidth]{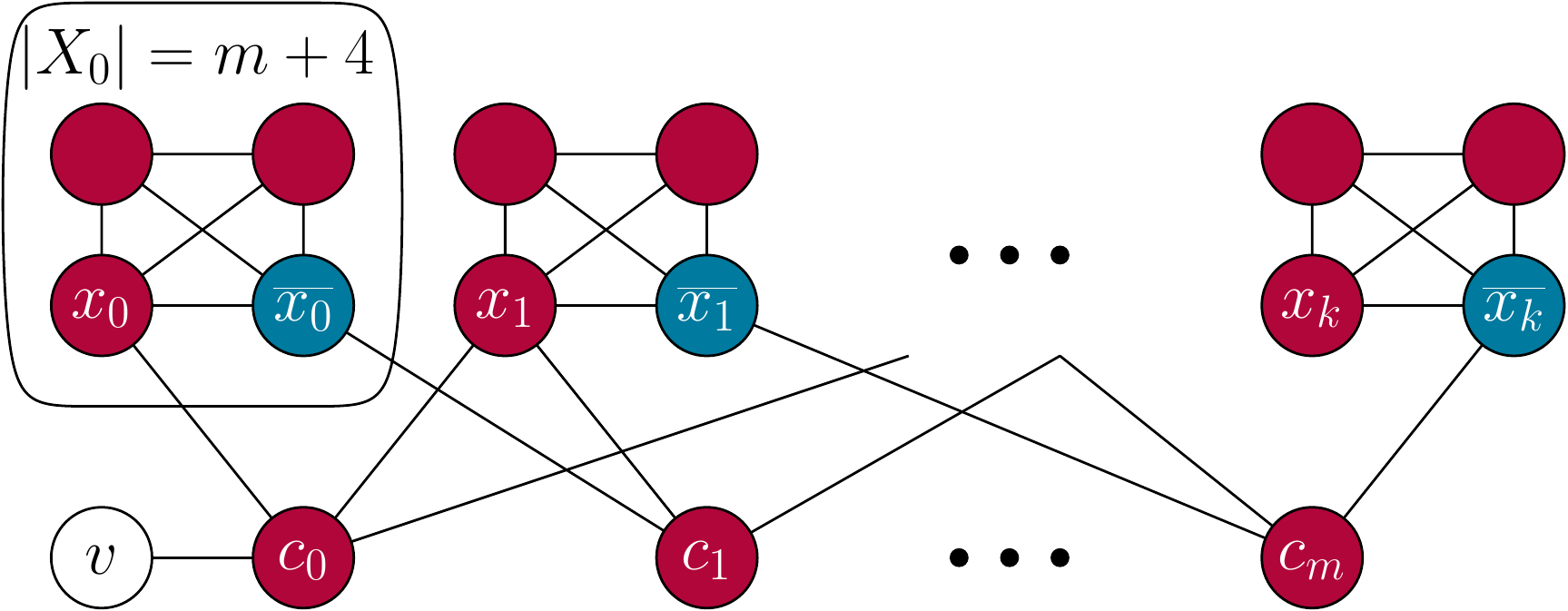}
		\caption{Construction used in the NP-hardness reduction of finding a placement $\sigma^*$ with $\text{DoI}(\sigma^*)\geq d = (m+4)k+q$. Each clique $X_i$ contains $m+4$ nodes.}
		\label{fig:hardness_doi_geq_d}
	\end{figure}

		Without loss of generality, we can assume that~$G$ is connected since we can construct an equivalent instance by adding additional, trivially satisfiable clauses that merge different connected components of $G$ to $\phi$ and increasing $q$.
		Let there be an assignment~$t$ for~$\phi$ that fulfills at least~$q$ clauses.
		Consider the placement $\sigma$, in which for all $i \in [k]$, the node $x_i$ (resp. $\overline{x_i}$) is occupied if the variable $x_i$ in $t$ is true (resp. false) and all other nodes except for $v$ are occupied by red agents. At least $(m+4)k+q$ agents are not segregated.

		Let $\sigma$ be a strategy profile with $\text{DoI}(\sigma) \geq (m+4)k+q$.
		First, observe that each $X_i$ must contain exactly one of the $k$ blue agents, as otherwise at least $\abs{X_i}-3 > m$ nodes are not segregated, contradicting a high DoI.
		Without loss of generality, we can assume that in each~$X_i$, either~$x_i$ or~$\overline{x_i}$ is occupied by the blue agent and further that $v$ is the empty node, as other configurations have a lower DoI.
		Then, from $\text{DoI}(\sigma) \geq (m+4)k+q$, it follows that at least~$q$ of the red agents on~$C$ are adjacent to a blue agent.
		Hence, an assignment in which a variable~$x_i$ is true if and only if~$\sigma(x_i)$ is a blue agent fulfills at least~$q$ clauses.
	\end{proof}

	\section{Discussion and Future Work}
	Our paper sheds light on Jump Schelling Games with non-monotone agent utilities. With this, we strengthen the recent trend of investigating more realistic residential segregation models.
	
	\subsection{Comparison with Single-Peaked Swap Schelling Games}
	Similarly to other variants of Schelling games, we also observe that our jump version behaves very differently compared to the swap version studied by \citet{BiloBLM22} and novel techniques are required. The main difference in jump games is that structural properties of the underlying graph cannot be exploited. The reason is that empty nodes are not counted when computing an agent's utility and hence it is impossible to distinguish between an empty node or a missing node. We do carry over some ideas from Single-Peaked Swap Schelling Games, e.g., the PoA upper bound proof, or the idea of considering independent sets, but the main part of our paper, e.g., all lower bound proofs and the proofs of our hardness results, follow entirely new approaches. 
	
	We obtained predominantly negative results with regard to convergence towards equilibria, in particular the finite improvement property does not hold for any $\Lambda\in (0,1)$, not even on regular graphs or trees. This is in stark contrast to the swap version, which converges to equilibria even on almost regular graphs for $\Lambda \leq \frac12$.
	Furthermore, on regular graphs with $\Lambda=\frac12$, instances of our jump version exist that do not admit equilibria. Also, although we get similar PoA bounds, compared to the swap version, we find that the PoS of the jump version tends to be worse, in particular, while the swap version has a PoS of at most $2$ on bipartite graphs for $\Lambda=\frac12$, there exists a tree that enforces a PoS that is linear in $n$ for our jump version for this setting.

	\subsection{The Variant with Self-Exclusive Neighborhoods}
	To enable a better comparison with the models by \citet{CLM18} and \citet{A+19}, that do not count the agent herself in the computation of the fraction of same-type neighbors, we also considered a variant of our model with self-exclusive neighborhoods, i.e., where the agent herself is not contained in her neighborhood. This self-exclusive variant behaves in some aspects very similarly to our model: the FIP does not hold and there is no equilibrium existence guarantee on regular graphs. Regarding the PoA it gets even worse, since equilibria exist where every agent has utility $0$, implying an unbounded PoA. This also holds for the PoA with respect to the utilitarian social welfare. Moreover, also the PoA and the PoS with respect to the utilitarian welfare is unbounded.

	\subsection{Directions for Future Work}
	We focus on measuring the social welfare of a strategy profile via the degree of integration and show that our PoA and PoS bounds also translate to PoA and PoS bounds with respect to the utilitarian social welfare. Future work could investigate these bounds in more detail, in particular, lower bounds are missing.  
	
	A main open problems for Single-Peaked Jump Schelling Games as well as for Single-Peaked Swap Schelling Games is to settle the complexity of deciding equilibrium existence. Our hardness result for finding equilibria via improving response dynamics and the observation that deciding equilibrium existence is NP-hard if stubborn agents are allowed, lead us to the following conjecture. 
	\begin{conjecture} 
	 For any peak $\Lambda \in (0,1)$ and for both the Single-Peaked Jump Schelling Game and the Single-Peaked Swap Schelling Game, it is NP-hard to decide if a given instance admits NE. 
	\end{conjecture}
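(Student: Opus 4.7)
The plan is to upgrade the reduction from \textsc{Double 4-SAT} used in \Cref{theorem:hardness_arbitrary_lambda} so that the proof no longer relies on starting the dynamics from a prescribed initial strategy profile. Concretely, I would redesign the gadget $Z$ so that in \emph{every} NE, each node of $Z_R$ is occupied by a red agent, each node of $Z_B$ by a blue agent, and all empty nodes lie outside $Z$. Once such a ``lock-in'' is available, \Cref{lemma:hardness_JE_restrictions} and the satisfiability argument of \Cref{theorem:hardness_arbitrary_lambda} carry over essentially unchanged: an equilibrium exists if and only if the input formula admits an assignment with at least two true literals per clause.

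To enforce the lock-in I would make the internal bias of $Z_R$ and $Z_B$ more aggressive by further increasing the scaling parameter $q$, so that any red agent sitting on a $Z_B$-node has utility arbitrarily close to zero and strictly prefers to jump into any of the gadget positions $X$, $C$, or $Y$; symmetrically for blue agents on $Z_R$. Combined with the cardinality constraints $r = |Z_R| + k$ and $b = |Z_B|$, a counting argument should force every NE to place exactly $|Z_R|$ red agents on $Z_R$ and $|Z_B|$ blue agents on $Z_B$, since otherwise a wrongly placed agent would have an improving jump out of $Z$, or a correctly colored agent outside $Z$ would have an improving jump into the vacated slot. A more delicate step is to rule out equilibria in which a node of $Z$ stays empty while a same-color agent sits on a lower-utility slot outside $Z$; this should follow from the observation, already implicit in \Cref{lemma:hardness_Z_has_no_improving_jumps}, that the utility available inside $Z$ is arbitrarily close to $1$ while the best utility outside is bounded away from $1$ by $U_{\max}^Y$.

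For the Single-Peaked Swap Schelling Game, the same construction should adapt, with one extra wrinkle: improvements now require the consent of two agents, which actually restricts the set of deviations and so simplifies the lock-in argument (a red agent on $Z_B$ would be happy to swap with essentially any blue agent outside $Z$, and vice versa). Internal swaps inside $Z_R$ or $Z_B$ are utility-neutral and can be made inessential by choosing $Z_R$ and $Z_B$ to be vertex-transitive, so that two strategy profiles differing only by an internal swap are indistinguishable up to automorphism. The $X$, $C$, $Y$ structure would be carried over almost verbatim, since it already made no essential use of the availability of empty nodes inside it.

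The principal obstacle, and the reason the authors leave this as a conjecture, is ruling out \emph{unexpected} equilibria. In the IRD setting one only needs a single canonical profile to be stable and reachable; for equilibrium existence one must reason over the entire strategy space and exclude profiles with very different gross structure, for example profiles in which many agents sit at utility zero yet no empty node offers a strict improvement because its neighborhood is already monochromatic. The PoA constructions in \Cref{fig:PoA_worst_case} and \Cref{fig:si_poa_delta_regular_lower_bound} show that such pathological but stable configurations do arise on simple graphs, so the hardest step will be a case analysis, uniform in $\Lambda$, showing that on the specific reduction graph no non-canonical NE can exist. A natural route is to choose the number of empty nodes and the connectivity pattern outside $Z$ so tightly that any deviation from the intended placement of the $k$ free red agents leaves some agent with a strict improvement, but establishing this rigorously for all $\Lambda \in (0,1)$ and for both swap and jump dynamics simultaneously is precisely where new ideas are likely required.
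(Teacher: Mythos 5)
This statement is a \emph{conjecture}: the paper offers no proof of it, only the remarks that hardness holds if one additionally allows stubborn agents and that the authors suspect this assumption can be removed in the style of \cite{KBFN22}. Your plan follows exactly that suggested route (make the agents on $Z$ behave as if stubborn in \emph{every} equilibrium, then reuse \Cref{lemma:hardness_JE_restrictions} and the satisfiability argument of \Cref{theorem:hardness_arbitrary_lambda}), but it does not close the gap, and you concede as much in your final paragraph. The decisive step --- showing that every NE of the constructed instance places the red agents on $Z_R$, the blue agents on $Z_B$, and all empty nodes outside $Z$ --- is precisely what is missing. Increasing $q$ does not deliver this lock-in: in a jump game an agent with utility $0$ (or very low utility) sitting on the ``wrong'' side of $Z$ has no improving jump unless some \emph{empty} node actually offers her strictly more, and nothing in the sketched construction excludes equilibria in which every empty node has a monochromatic neighborhood of the unhelpful colour. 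The paper's own PoA examples (\Cref{fig:PoA_worst_case}, \Cref{fig:si_poa_delta_regular_lower_bound}) show exactly such stable-but-scrambled profiles, and the counting argument you gesture at (``a wrongly placed agent would have an improving jump out of $Z$'') presupposes the availability of a strictly better empty node, which is the thing to be proved. So the core of \Cref{theorem:hardness_arbitrary_lambda} --- whose correctness rests entirely on the dynamics \emph{starting} from $\sigma^{\Gamma(\varphi,\Lambda,q)}$ so that \Cref{lemma:hardness_Z_has_no_improving_jumps} pins $Z$ in place --- does not transfer.

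A second concrete problem is your treatment of the swap game. In the Single-Peaked Swap Schelling Game of \cite{BiloBLM22} there are no empty nodes at all ($|V|=n$), so the $X$, $C$, $Y$ gadget --- whose entire analysis is about which nodes are empty and who can jump where --- cannot be ``carried over almost verbatim.'' Every node of $X\cup C\cup Y$ would have to be occupied, which changes all the utility computations ($U_{\max}^X$, $U_{\max}^C$, $U_{\min}^Y$, $U_{\max}^Y$ are all computed assuming red-or-empty occupancy outside $Z$), and the requirement that \emph{both} parties to a swap strictly improve creates additional equilibria that the jump analysis never confronts. In short: the direction is the one the authors themselves propose, but the statement remains open and your proposal does not resolve it.
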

	Another ambitious goal is to characterize under which conditions equilibria exist for certain graph classes. However, this is open for all known Schelling Games. 
	
	Also, it is not obvious at all how to generalize the single-peaked models to more than two agent types. As discussed by~\citet{E+19}, this is already non-trivial for the model with monotone utility functions. The simplest setting would be the "1-versus-all" variant from \citet{E+19}, where the utility only depends on the numbers of same-type and other-type neighbors. But, as shown by the authors, even in this simple setting the behavior of Schelling Games changes drastically. We expect similarly drastic changes for the single-peaked model. However, we are not convinced that "1-versus-all" captures realistic agent behavior. Ideally, in a setting with more than two types, a diverse neighborhood should contain agents of many different types and it should be balanced such that no subgroup dominates the neighborhood. 
	
	Other interesting directions for future work are further classes of realistic non-monotone utility functions. Candidates for this are plateau functions, e.g., agents have some minimum and maximum diversity requirement and are content as long these requirements are met. Or single-peaked functions that do not fulfill property (2) in our definition, like the single-peaked utilities with different slopes on both sides of the peak as used by \citet{Zha04b}. 
	
	Also, in our model we assumed that an agent explicitly considers her own contribution to the type distribution in her neighborhood. This realistic feature could also be applied to the threshold-based models by \citet{CLM18} and \citet{A+19}.
	
	\bibliographystyle{ACM-Reference-Format}
	\bibliography{sample}

\end{document}